\setlist[itemize]{leftmargin=*, itemsep=0pt, parsep=0pt}
\newtheorem{lemma}{Lemma}
\newtheorem{theorem}{Theorem}
\def\BibTeX{{\rm B\kern-.05em{\sc i\kern-.025em b}\kern-.08em
    T\kern-.1667em\lower.7ex\hbox{E}\kern-.125emX}}
\begin{document}

\title{Permissioned Blockchain-based Framework for Ranking Synthetic Data Generators}

\author{Narasimha Raghavan Veeraragavan, Mohammad Hossein Tabatabaei, Severin Elvatun, Vibeke Binz Vallevik, Siri Larønningen, Jan F Nygård
\thanks{Narasimha Raghavan Veeraragavan, Mohammad Hossein Tabatabaei, Severin Elvatun, Siri Larønningen, and Jan F Nygård are with Cancer Registry of Norway, Norwegian Institute of Public Health (e-mail: \{nara, moht, sela, sla, jfn\}@kreftregisteret.no)}
\thanks{Vibeke Binz Vallevik is with DNV Healthcare (e-mail: Vibeke.Binz@dnv.com)}
\thanks{Jan F Nygård is also with UiT - Artic University of Norway (email: jan.f.nygard@uit.no)}
}

\maketitle

\begin{abstract}
Synthetic data generation is increasingly recognized as a crucial solution to address data-related challenges such as scarcity, bias, and privacy concerns. As synthetic data proliferates, the need for a robust evaluation framework to select a synthetic data generator becomes more pressing given the variety of options available. In this research study, we investigate two primary questions: 1) How can we select the most suitable synthetic data generator from a set of options for a specific purpose? 2) How can we make the selection process more transparent, accountable, and auditable? To address these questions, we introduce a novel approach in which the proposed ranking algorithm is implemented as a smart contract within a permissioned blockchain framework called Sawtooth. Through comprehensive experiments and comparisons with state-of-the-art baseline ranking solutions, our framework demonstrates its effectiveness in providing nuanced rankings that consider both desirable and undesirable properties. Furthermore, our framework serves as a valuable tool for selecting the optimal synthetic data generators for specific needs while ensuring compliance with data protection principles.\end{abstract}

\section{Introduction}
\label{sec:introduction}
Synthetic data generation is increasingly acknowledged as a vital solution to overcome data-related challenges such as scarcity, bias, and privacy concerns. By generating artificial data sets that reflect real-world datasets, synthetic data offers a solution to replace the real data, balancing the need to replicate key data attributes while safeguarding sensitive information. This balance is pivotal, as it requires a careful blend of replicating essential real data aspects and adhering to ethical and legal standards.

The essence of synthetic data generation lies in its purpose; the 'why' that dictates the 'how'. As highlighted in various studies~\cite{gonzales2023}, including those in the healthcare domain~\cite{VALLEVIK2024105413}, different purposes for generating synthetic datasets require different levels of fidelity and characteristics. From software testing to complex data analysis, the purpose shapes the required properties of the synthetic data. This purpose-driven approach is critical to determine which real data properties must be replicated and which should be omitted, making the generation technique an important decision. 

Despite the existence of several studies on the generation of synthetic tabular data~\cite{fonseca2023tabular}~\cite{hernadez2023synthetic}, there is only a paucity of research focused on the systematic ranking of these techniques. Existing studies~\cite{arnold2020really,chundawat2022tabsyndex,yan2022multifaceted, pathare2023comparison} focus primarily on ranking generators based on their ability to replicate properties from real to synthetic data. However, these studies overlook the evaluation of undesirable properties in their rankings. This gap signifies a lack of assurance that principles such as purpose limitation, data minimization, and data protection by design and default~\cite{GDPR} are adhered to in the synthetic data. Consequently, the rankings might misguide decision-makers by highlighting generators that do not comprehensively evaluate all aspects of regulatory compliance and data protection principles. 

The selection process of synthetic data generators must be meticulously aligned with data protection regulations, such as the EU General Data Protection Regulation (GDPR), particularly its principles of data minimization, purpose limitation, and data protection by design and default~\cite{GDPR}. GDPR underscores the importance of protecting personal data and ensuring that data processing is purpose-specific, avoiding excessive data collection or usage. This regulatory framework necessitates a method in which synthetic data generation adheres to these principles, guaranteeing compliance and ethical integrity.

In navigating the complex landscape of synthetic data generators, an indexing metric is indispensable~\cite{hernandez2022synthetic,fonseca2023tabular,murtaza2023synthetic}.  This metric should be designed to rank different synthetic data generators based on their ability to meet specific purposes, evaluating and comparing various generators against a set of criteria. This evaluation should encompass both the technical efficacy of each generator and its alignment with legal and ethical standards, promoting a comprehensive approach to generator selection. 

Moreover, the European Union's Artificial Intelligence Act (EU AI Act) proposal~\cite{act2021proposal} requires transparency in AI systems, requiring clear disclosures about their capabilities and the governance of the data they use. Organizations employing synthetic data generation processes must ensure transparency, accountability, and auditability to comply with both the EU AI Act and GDPR. This is particularly crucial when personal data is utilized to train generators. 

In response, organizations can adopt permissioned blockchain frameworks~\cite{tabatabaei2023understanding} to integrate the ranking process, thereby adhering to regulatory requirements and ethical standards. This approach not only ensures compliance with the EU AI Act and GDPR but also enhances the transparency, accountability, and auditability of the selection process for synthetic data generators. This is particularly crucial when personal data is used to train generators or when synthetic data generators are applied in critical sectors such as healthcare and criminal justice.

Permissioned blockchain frameworks have been studied extensively in the literature~\cite{min2016permissioned, al2019blockchain, monrat2020performance, polge2021permissioned, qi2021bidl, peng2022neuchain, capocasale2023comparative, li2023fisco}. A permissioned blockchain is a type of blockchain where access is controlled by a consortium or organization, and only a limited number of participants have the permission to validate block transactions and append the transactions to the ledger, unlike permissionless blockchains. Furthermore, permissioned blockchains typically offer higher transaction throughput and low latency compared to permissionless blockchains~\cite{bakos2021permissioned}.

Permissioned blockchain frameworks offer several beneficial features for managing the process of ranking synthetic data generators. By controlling access, permissioned blockchains ensure that only authorized entities can contribute and access the blockchain. This feature can protect the process of ranking synthetic data generators from unauthorized access. Even though permissioned, the blockchain provides a transparent ledger of all transactions. This transparency means that the criteria used for ranking, the ranking outcome, and the ranking algorithm itself can be tracked and audited, building trust among users with respect to the quality and reliability of synthetic data. Furthermore, smart contracts can automate the ranking and validation process of synthetic data generators, executing predefined algorithms when the appropriate conditions are met.   

Using a permissioned blockchain technology, our aim is to build a framework that not only meets the technical and operational requirements for effective synthetic data generation, but also upholds the core principles of accountability, transparency, and auditability. This framework also supports data minimization, purpose limitation, and data protection by design and default in a secure, regulated, and efficient manner. 

The summary of our main contributions is listed as follows:
\begin{itemize}
\item We introduce a novel ranking algorithm with a well-grounded theory that advances the state-of-the-art in ranking synthetic data generators. This algorithm distinguishes itself by simultaneously considering desired and undesired properties while also taking into account the significance of each property relative to a specific purpose.

\item  We demonstrate the integration of this ranking algorithm within a permissioned blockchain framework, specifically Sawtooth~\cite{Sawtooth}, to enhance the accountability, auditability, and transparency of the synthetic data generation ranking process for an organisation. 

\item We conduct a comparative analysis of our proposed ranking algorithm against established baseline algorithms to highlight the advantages and improvements of our proposed ranking algorithm.

\item We conduct experiments to validate our ranking algorithm's practicality and performance when implemented as part of a permissioned blockchain framework.

\end{itemize}
The rest of the paper is organized as follows. Section~\ref{sec:systemModel} introduces our system model. Section~\ref{sec:proposed} proposes our mathematical framework, workflow and algorithms to rank synthetic data generators. Section~\ref{sec:conversion} shows how the ranking algorithm proposed in Section~\ref{sec:proposed} can be integrated into a permissioned blockchain (Sawtooth) ~\cite{Sawtooth}. Section~\ref{sec:experiments} presents our experiments that validate the proposed framework and a comparison with an existing baseline ranking solution. Section~\ref{sec:discussions} examines the correctness, computational complexity, scalability, security and privacy analysis, limitations, and social and ethical implications of our framework. The position of our work in advancing the state-of-the-art in synthetic data evaluation is discussed in Section~\ref{sec:relatedWork}. Section~\ref{sec:futureWork} explores different directions for the further development of the proposed framework. Finally, Section~\ref{sec:conlcusion} concludes the paper.

\section{System Model}
\label{sec:systemModel}
Our system model is designed to integrate synthetic data generation with GDPR principles and an indexing metric. The model is structured around key roles such as Product Manager, Data Scientist, multiple stakeholders, and External Auditor of an organization as shown in Figure~\ref{fig:system_model}.
\begin{figure}[htbp]
\caption{An Overview of a System Model for Ranking Synthetic Data Generators in Permissioned Blockchain, Covering Various User Roles.}
\centerline{\includegraphics[width=\columnwidth]{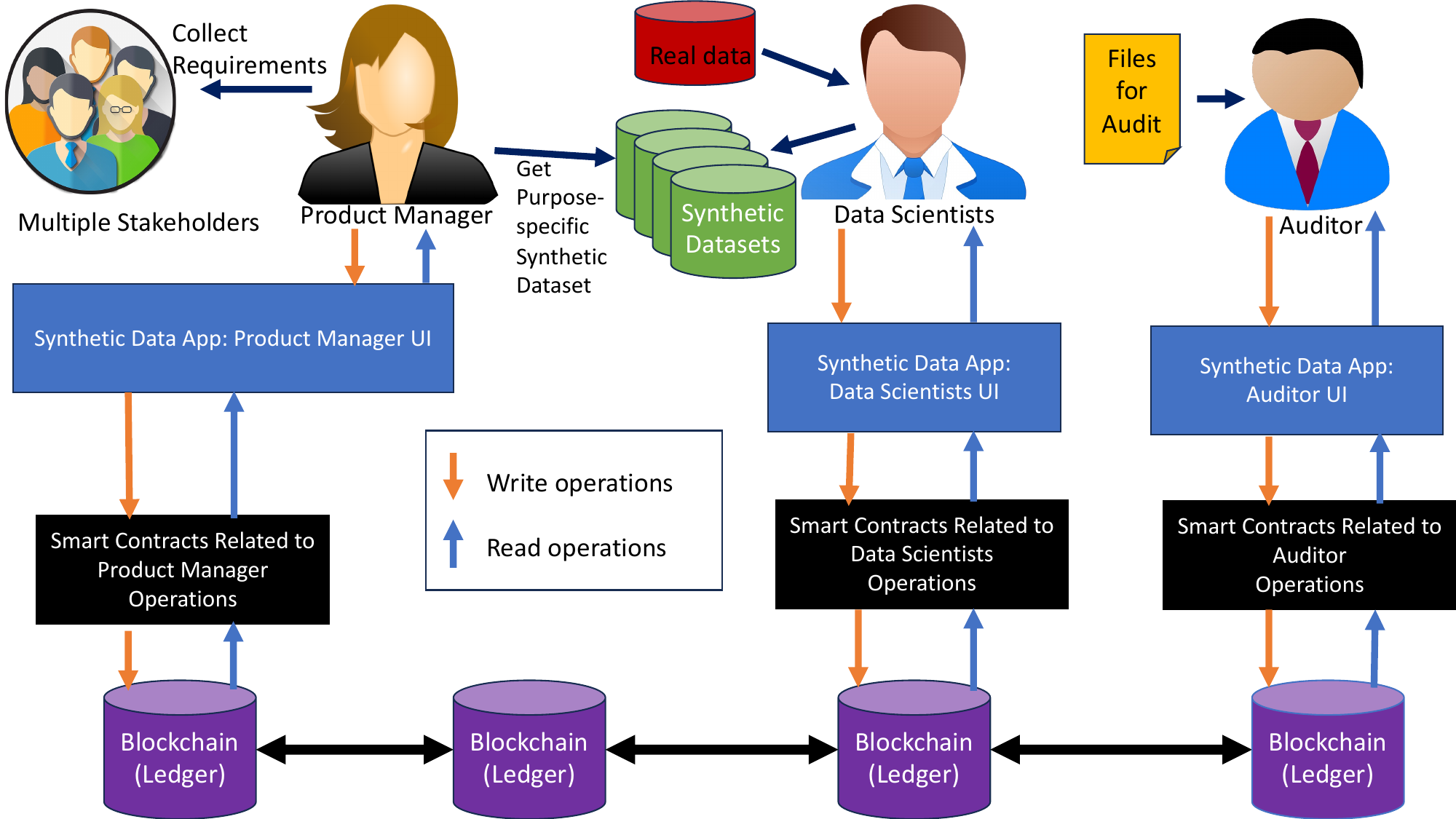}}
\label{fig:system_model}
\end{figure}

The Product Manager collaborates with all relevant stakeholders to define the specifications for generating synthetic data. These specifications encompass seven key elements: 
\begin{itemize}
\item The purpose for which the synthetic datasets are generated. 
\item The quality indicators that categorize the qualities to be assessed in the synthetic data relevant to the identified purpose. 
\item The desired properties of real data sets within each quality indicator that need to be replicated in synthetic data sets. 
\item The undesired properties within each quality indicator of real datasets that should not be replicated in the synthetic datasets. 
\item Weighted scoring applied to the quality indicators. The weighted score represents the relative importance of quality indicators. 
\item Weighted scoring applied to metrics that represent the desired properties within these indicators. The weighted scoring indicates the relative importance of the metrics that represent the desired properties. 
\item Weighted scoring applied to metrics that represent the undesirable properties within these indicators. 
\end{itemize}

These specifications are written in the blockchain. The ultimate goal of the Product Manager is to obtain synthetic data that aligns closely with these detailed specifications. 

The Data Scientist retrieves the Product Manager's specifications from the blockchain and conducts experiments with various synthetic data generators. Then, the Data Scientist evaluates the effectiveness of these generators against the metrics defined in the specifications and then records the results of these evaluations back to the blockchain.

The blockchain system utilizes a smart contract that accesses the specifications of the Product Manager and the evaluation results of the synthetic data generators from the Data Scientist. This contract computes the ranking of the synthetic data generators, aligning them with the Product Manager's specifications. The ranking results are subsequently written back to the blockchain.

Upon reading the ranking results from the blockchain, the Data Scientist identifies the most suitable synthetic data generated by the top-ranked generator according to the given specifications. This identified synthetic data set is then provided to the Product Manager.

An external or independent auditor accesses the Product Manager's specifications and retrieves the evaluation results and the ranking list from the Data Scientists. The auditor verifies whether these inputs from the Product Manager and Data Scientist correspond with the records on the blockchain. Finally, the auditor writes the verification results back into the blockchain.

Our system model incorporates a permissioned blockchain network executing a consensus protocol such as the Practical Byzantine Fault Tolerant (PBFT)~\cite{castro1999practical}, which offers robust guarantees against Byzantine faults. To adhere to the 3f + 1 rule, where f represents the number of Byzantine nodes, our network includes at least four nodes. This configuration ensures tolerance of at least one Byzantine fault node. Each node, which hosts the ledger and smart contracts, is strategically deployed in different departments within the organization, including the compliance department.

The deployment of nodes across various departments facilitates the assignment of specific roles to each node. For instance, if the product manager and the data scientists are in separate departments, one node will be deployed in the department of the product manager and another in that of the data scientist. Consequently, each department, based on its specific role, can utilize client applications to interact with the blockchain network.

\subsection{Threat Model}
\label{sec:threats}
\begin{itemize}
\item \textbf{Repudiation attack}: The product manager could deny setting incorrect criteria that are not aligned with the purpose, which can corrupt the ranking algorithm and violate the GDPR principles. Likewise, data scientists could deny providing incorrect evaluation metrics and associated values and not using the correct output of the ranking algorithm. The blockchain developer can write incorrect smart contracts and the auditor can deny his/her incorrect verification result. 
\item \textbf{Colluding attack}: A minority of departments can attempt to replace the correct nodes that host the ledger and smart contracts with incorrect nodes. 
\item \textbf{Poisoning attack}: Accidental or intentional manipulation of the stored inputs and the output of the ranking algorithm. 
\end{itemize}

\section{Proposed Scheme}
\label{sec:proposed}
In this section, we first introduce the definition and concepts required to understand the proposed algorithm. 
\subsection{Definitions and Concepts}
\subsubsection{Purposes and Quality Indicators}

The quality of synthetic data in synthetic data generation starts with two key elements: a) purposes and b) quality indicators (QIs). To formalize this concept, we define a purpose $P$ as a set of individual purposes $p$, each associated with a distinct set of QIs, as shown in Equations~\ref{eq:purpose} and~\ref{eq:QIs}. In this case, each purpose can be viewed as an independent scenario/use case.  

\begin{equation}
    P = \{p_1, p_2, \cdots, p_{|P|}\}
\label{eq:purpose}
\end{equation}

For each specific purpose $p_i$ within $P$, we define a corresponding collection of $QI^{p_i}$ shown in Equation~\ref{eq:QIs}. 
\begin{equation}
QI^{p_i}= \{QI^{p_i}_{1}, QI^{p_i}_{2}, \ldots, QI^{p_i}{_{|QI^{p_i}|}}\}
\label{eq:QIs}
\end{equation}

In Equation~\ref{eq:QIs}, each ${QI}^{p_i}_j$ represents a unique quality indicator $j$ of synthetic data, contributing to the general assessment of the quality of synthetic data $QI^{p_i}$, aligned with the defined purpose $p_i$.

\begin{equation}
    QI_{\text{total}} = \bigcup_{i=1}^{|P|} QI^{p_i}
    \label{eq:QItotal}
\end{equation}

Equation~\ref{eq:QItotal} defines $QI_{total}$ which represents the total set of unique Quality Indicators for all purposes within the organization.  

\subsubsection{Relative Importance of Quality Indicators}
Weights are assigned to each ${QI}^{p_i}_j$ in $QI^{p_i}$ to signify their relative importance in the overall quality assessment of synthetic data for the purpose $p_i$. These weights are structured in a matrix $W$ to reflect their significance for various purposes. 

$W$ is defined as a matrix with columns representing the elements from the Quality Indicators $QI_{total}$ and the rows represent the elements from the purposes $P$,
\begin{equation}
W = \begin{bmatrix}
w_{11} & w_{12} & \cdots & w_{1|QI_{total}|} \\
w_{21} & w_{22} & \cdots & w_{2|QI_{total}|}\\
\vdots & \vdots & \ddots & \vdots \\
w_{|P|1} & w_{|P|2} & \cdots & w_{|P||QI_{total}|}
\end{bmatrix}
\label{eq:QIWeightsPositive}
\end{equation}
In Equation~\ref{eq:QIWeightsPositive}, $w_{ij}$ denotes the weight of $j^{th}$ desirable quality indicator ${QI}^{p_i}_j$ for the $i^{th}$ purpose ($p_i$). The size of matrix $W$  is $|P| \times |QI_{total}|$ where $|P|$ is the total number of purposes, $|QI_{total}|$ is the total count of different quality indicators for all purposes. 

\begin{equation}
W = w_{ij} \in [0, 1] \quad  \forall i \in \{1, 2, \ldots, |P|\},  \forall j \in \{1, \ldots, |QI_{total}|\}
\label{eq:QIweights}
\end{equation}

In equation~\ref{eq:QIweights}, $W$ is a matrix where each element $w_{ij}$ represents the weight assigned to the quality indicator $j^{th}$ for the purpose $i^{th}$. The value of each $w_{ij}$ is within the range $[0,1]$, which indicates the importance of each quality indicator relative to a specific purpose $p_i$. The index $j$ corresponds to each quality indicator within the total set of quality indicators $QI_{total}$. 

The weights in each row of $W$ should be normalized to ensure that they sum up to 1 providing a balanced assessment of Quality Indicators $QI^{p_i}$ within each purpose $p_i$ as shown in Equation~\ref{eq:NormalizationConstraintPositiveQIs}.

\begin{equation}
\sum_{j=1}^{|QI^{p_i}|} w_{ij} = 1 \quad \forall i \in \{1, 2, \ldots, |P|\}
\label{eq:NormalizationConstraintPositiveQIs}
\end{equation}

The normalization constraint defined in Equation~\ref{eq:NormalizationConstraintPositiveQIs} ensures that, for each purpose, the collective weight of all Quality Indicators is balanced, providing a comprehensive and equitable approach to evaluating each Quality Indicator.
\subsubsection{Metrics, Desired and Undesired properties} 
While Quality Indicators (QIs) are higher-level abstractions representing different dimensions or aspects of quality in synthetic data, metrics offer a more granular view. They are practical, quantifiable measures that collectively assess the performance of a quality indicator (QI). 

Metrics represent both desired and undesired properties in synthetic data. Desired properties are those aspects of real data that should be replicated in synthetic data, while undesired properties are aspects of real data that should not be replicated.

For each Quality Indicator $QI^{p_i}_{j}$ of purpose $p_i$, a corresponding set of metrics $M^{QI^{p^{+}_i}_{j}}$ and $M^{QI^{p^{-}_i}_{j}}$ are assigned to quantitatively measure the characteristics of the synthetic data. Formally, the set of metrics for $M^{QI^{p^{+}_i}_{j}}$ and $M^{QI^{p^{-}_i}_{j}}$ is defined as in Equations~\ref{eq:M+} and~\ref{eq:M-}.

\begin{equation}
M^{QI^{p^{+}_i}_{j}} = \{m_1^{QI^{p^{+}_i}_{j}}, m_2^{QI^{p^{+}_i}_{j}}, \ldots, m^{QI^{p^{+}_i}_{j}}_{|M^{QI^{p^{+}_i}_{j}}|}\}
\label{eq:M+}
\end{equation}

\begin{equation}
M^{QI^{p^{-}_i}_{j}} = \{m_1^{QI^{p^{-}_i}_{j}}, m_2^{QI^{p^{-}_i}_{j}}, \ldots, m^{QI^{p^{-}_i}_{j}}_{|M^{QI^{p^{-}_i}_{j}}|}\}
\label{eq:M-}
\end{equation}

$M^{QI^{p^{+}_i}_{j}}$ denotes the set of metrics that measure desirable properties for the purpose $p_i$. In contrast, $M^{QI^{p^{-}_i}_{j}}$ represents the set of metrics that evaluate undesirable properties for the same purpose $p_i$ within the quality indicator $QI_j^{p_i}$. 

\begin{equation}
\begin{aligned}
M^{QI^{p^{+}_i}_{j}} \cap M^{QI^{p^{-}_i}_{j}} &= \emptyset \\
\quad \forall i \in \{1, 2, \ldots, |P|\}, \forall j \in \{1, \ldots, |QI_{\text{total}}|\}
\end{aligned}
\label{eq:NonOverlappingMetrics}
\end{equation}

Equation~\ref{eq:NonOverlappingMetrics} represents the nonoverlapping constraint in which the metrics measuring desirable properties are distinct and do not overlap with those measuring undesirable properties for any given purpose in any quality indicator. It is a critical aspect of ensuring that the assessment of synthetic data quality is clear, unambiguous, and accurately reflects the distinct characteristics of desirable and undesirable properties. 

\begin{equation}
M_{\text{total}}^{+} = \bigcup_{i=1}^{|P|} \bigcup_{j=1}^{|QI_{\text{total}}|} M^{QI^{p^{+}_i}_{j}}
\label{eq:MtotalPlus}
\end{equation}

\begin{equation}
M_{\text{total}}^{-} = \bigcup_{i=1}^{|P|} \bigcup_{j=1}^{|QI_{\text{total}}|} M^{QI^{p^{-}_i}_{j}}
\label{eq:MtotalMinus}
\end{equation}

Equations~\ref{eq:MtotalPlus} and~\ref{eq:MtotalMinus} represent the unified set that contains all unique metrics used for all purposes and quality indicators to evaluate desirable properties and undesirable properties, respectively. 

\begin{equation}
M_{\text{total}} = M_{\text{total}}^{+} \cup M_{\text{total}}^{-}
\label{eq:Mtotal}
\end{equation}

Equation~\ref{eq:Mtotal} provides a comprehensive collection of all metrics used in the assessment of Quality Indicators for any purpose. 

\subsubsection{Relative Importance of Metrics for each Quality Indicator}
The importance of each metric in evaluating the performance of a Quality Indicator (QI) is quantified by assigning weights. These weights reflect how significantly each metric contributes to the assessment of QI for different purposes. 

The weights are organized into two matrices, one for metrics associated with desired properties and the other for metrics related to undesired properties. 

For metrics associated with desirable properties, we define a weight matrix $W^{M^{+}}$ as shown in Equation~\ref{eq:MetricWeightsPositive}.

\begin{equation}
W^{M^{+}} = \begin{bmatrix}
w_{11}^{M^{+}} & w_{12}^{M^{+}} & \cdots & w^{M^{+}}_{1|M_\text{total}^{+}|} \\
w_{21}^{M^{+}} & w_{22}^{M^{+}} & \cdots & w^{M^{+}}_{2|M_\text{total}^{+}|}\\
\vdots & \vdots & \ddots & \vdots \\
w_{|P|1}^{M^{+}} & w_{|P|2}^{M^{+}} & \cdots & w^{M^{+}}_{|P||M_\text{total}^{+}|}
\end{bmatrix}
\label{eq:MetricWeightsPositive}
\end{equation}

In equation~\ref{eq:MetricWeightsPositive}, $w^{M^+}_{ij}$ denotes the weight of the $j^{th}$ metric for the purpose $i^{th}$, with respect to desirable properties. 

\begin{equation}
f: W^{M^{+}} \rightarrow M_{\text{total}}^{+}
\label{eq:MappingFunction}
\end{equation}
Equation~\ref{eq:MappingFunction} establishes the relationship between the weights in $W^{M^{+}}$ and their corresponding metrics in $M_{\text{total}}^{+}$. This association focuses on metrics that assess desirable properties for each purpose and quality indicator, thereby ensuring that the weights accurately reflect the relevance of each metric in overall evaluation of synthetic data quality. 

Similarly, for metrics associated with undesirable properties, we define the weight matrix $W^{M^{-}}$ as shown in Equation~\ref{eq:MetricWeightsNegative}.
\begin{equation}
W^{M^{-}} = \begin{bmatrix}
w_{11}^{M^{-}} & w_{12}^{M^{-}} & \cdots & w^{M^{-}}_{1|M_\text{total}^{-}|} \\
w_{21}^{M^{-}} & w_{22}^{M^{-}} & \cdots & w^{M^{-}}_{2|M_\text{total}^{-}|}\\
\vdots & \vdots & \ddots & \vdots \\
w_{|P|1}^{M^{-}} & w_{|P|2}^{M^{-}} & \cdots & w^{M^{-}}_{|P||M_\text{total}^{-}|}
\end{bmatrix}
\label{eq:MetricWeightsNegative}
\end{equation}

In equation~\ref{eq:MetricWeightsNegative}, $w^{M^-}_{ij}$ denotes the weight of the $j^{th}$ metric for the purpose $i^{th}$, but in this case the focus is on undesirable properties. 

\begin{equation}
g: W^{M^{-}} \rightarrow M_{\text{total}}^{-}
\label{eq:MappingFunctionNegative}
\end{equation}

Equation~\ref{eq:MappingFunctionNegative} introduces the mapping function $g$, which links the weights in $W^{M^{-}}$ to their corresponding metrics in $M_\text{total}^{-}$. This setup is pivotal in ensuring that the weights mirror the importance of each metric in evaluating negative aspects of the Quality Indicators, contributing to the thorough and balanced assessment of synthetic data quality. 

\begin{align}
W^{M^{+}} &= w^{M^{+}}_{ij} \in [0, 1] \nonumber \\
&\quad \forall i \in \{1, 2, \ldots, |P|\},  \forall j \in \{1, 2, \ldots, |M_{\text{total}}^{+}|\}
\label{eq:QIweights+}
\end{align}

\begin{align}
W^{M^{-}} &= w^{M^{-}}_{ij} \in [0, 1] \nonumber \\
&\quad \forall i \in \{1, 2, \ldots, |P|\}, \forall j \in \{1, 2, \ldots, |M_{\text{total}}^{-}|\}
\label{eq:QIweights-}
\end{align}

In both $W^{M^{+}}$ and $W^{M^{-}}$, the values of $w^{M^+}_{ij}$ and $w^{M^-}_{ij}$ are between 0 and 1 indicate varying degrees of importance or undesirability. A weight closer to 1 implies higher significance or undesirability, depending on whether it is in $W^{M+}$ (positive, desirable aspects) or $W^{M}-$ (negative, undesirable aspects). In contrast, a weight closer to 0 indicates lesser importance or undesirability. 

The weights in each row of $W^{M^{+}}$ and $W^{M^{-}}$ should be normalized so that they sum to 1 ensuring a balanced assessment of metrics within each purpose $p_i$. 

For the weights of positive metrics:

\begin{equation}
\sum_{j=1}^{|M_{\text{total}}^{+}|} w^{M^+}_{ij} = 1 \quad \forall i \in \{1, 2, \ldots,|P|\}
\label{eq:NormalizationConstraintPositiveMetrics}
\end{equation}

For the weights of negative metrics:

\begin{equation}
\sum_{j=1}^{|M_{\text{total}}^{-}|} w^{M^-}_{ij} = 1 \quad \forall i \in \{1, 2, \ldots,|P|\}
\label{eq:NormalizationConstraintNegativeMetrics}
\end{equation}

 Equations~\ref{eq:NormalizationConstraintPositiveMetrics} and~\ref{eq:NormalizationConstraintNegativeMetrics} represent normalization constraints for both $W^{M^{+}}$and $W^{M^{-}}$. This normalization constraint thus ensures that the collective weight of all the metrics for each purpose is normalized, providing a balanced approach to evaluating each quality indicator. 

\subsection{Proposed Workflow}
We divide our proposed scheme into four phases: a) specification setting, b) metric evaluation, c) ranking generation, and d) auditing phase. 

\subsubsection{Specification Setting}
In the Specification Setting phase, the Product Manager collaborates with multiple stakeholders to define the specifications for synthetic data generation. The specifications include the following criteria:

\begin{itemize}
\item \textbf{Purposes for Synthetic Data Generation:} A comprehensive list of purposes for which the synthetic datasets are to be generated (refer to Equation~\ref{eq:purpose}).
    
\item \textbf{Quality Indicators per Purpose:} A detailed list of Quality Indicators for each identified purpose (refer to Equation~\ref{eq:QIs}).
    
\item \textbf{Importance of Quality Indicators:} The relative importance of these Quality Indicators for each purpose is outlined, incorporating corresponding constraints to maintain balance (refer to Equations~\ref{eq:QIWeightsPositive} and~\ref{eq:QIweights}, as well as Equation~\ref{eq:NormalizationConstraintPositiveQIs}).
    
\item \textbf{Metrics for Desired Properties:} A catalog of metrics measuring desired properties for each Quality Indicator across purposes, ensuring compliance with relevant constraints (refer to Equation~\ref{eq:M+} and Equation~\ref{eq:NonOverlappingMetrics}).

\item \textbf{Metrics for Undesired Properties:} A catalog of metrics measuring undesired properties for each Quality Indicator across purposes, ensuring compliance with relevant constraints (refer to Equation~\ref{eq:M-} and Equation~\ref{eq:NonOverlappingMetrics}).
    
\item \textbf{Importance of Metrics for Desired Properties:} The relative significance of metrics assessing desired properties is defined, aligning with their corresponding weights and ensuring adherence to normalization constraints (refer to Equations~\ref{eq:MetricWeightsPositive}, \ref{eq:MappingFunction}, and \ref{eq:QIweights+}, along with Equation~\ref{eq:NormalizationConstraintPositiveMetrics}).
    
\item \textbf{Importance of Metrics for Undesired Properties:} The relative significance of metrics assessing undesired properties is also established, with appropriate weighting and adherence to normalization constraints (refer to Equations~\ref{eq:MetricWeightsNegative}, \ref{eq:MappingFunctionNegative}, and \ref{eq:QIweights-}, as well as Equation~\ref{eq:NormalizationConstraintNegativeMetrics}).
\item \textbf{Categorization of Metrics Based on Their Numerical Characteristics:} All evaluation metrics across various purposes are categorized into three distinct types based on their numerical characteristics: a) metrics where a lower value indicates better performance, denoted by $M_{LB}$), b) metrics where a higher value indicates better performance, denoted by $M_{HB}$, and c) metrics where performance is optimal when values are closer to a specified constant, dentoed by $M_{CC}$. This categorization allows for a more nuanced and accurate ranking of generators. Without these categorization, such nuances could be missed, leading to inaccurate comparisons. 

\end{itemize}

These definitions collectively ensure compliance with principles such as data minimization, purpose limitation, and data protection by design and default. In addition, they establish criteria for ranking synthetic data generators. Consequently, the Product Manager registers all these definitions on the consortium blockchain to ensure transparency, auditability, and accountability.

At this stage, it is important that product manager realizes the rule for assigning metrics into desirable and undesirable properties. The rule is primarily centered on the intended purpose. Metrics that align with the primary goals and requirements of the purpose should be categorized as desirable properties. Metrics that represent aspects of synthetic data that should be minimized or controlled should be categorized as undesirable properties. 

It is crucial to understand that the same metric can be desirable in one purpose and undesirable in another. This variability underscores the importance of a thorough evaluation of purpose when categorizing metrics. Hence, it is the responsibility of the product manager to consider how the high or low values of each metric impact the general purpose for which the synthetic data are generated before defining the equations~\ref{eq:M+}, and~\ref{eq:M-}.

\subsubsection{Metric Evaluations}
In the Metric Evaluations phase, the Data Scientist, having access to the real data, experiments with various synthetic data generators. The effectiveness of these generators is evaluated against predefined criteria retrieved from the consortium blockchain. For a given purpose $p_i$, the Data Scientist focuses on two key sets of metrics:
Metrics representing desirable properties, as defined in Equation~\ref{eq:M+}.
Metrics representing undesirable properties, as defined in Equation~\ref{eq:M-}. 

Across all purposes, the data scientist focuses on metrics comprising desirable and undesirable properties, as defined in Equation~\ref{eq:Mtotal}. The evaluation process involves assessing how well each synthetic data generator performs in assessing these metrics. The number of synthetic data generators considered is given by Equation~\ref{eq:generators}. 

\begin{equation}
    T = \{T_1, T_2, \ldots, T_{|T|}\}
    \label{eq:generators}
\end{equation}

The evaluation results are quantitatively captured in a matrix $E$ with rows representing a synthetic data generator $T$ and columns representing the list of metrics as shown in Equation~\ref{eq:Mtotal}.  
The evaluation matrix \( E \) is mathematically defined as:

\begin{equation}
E = \begin{bmatrix}
e_{11} & e_{12} & \cdots & e_{1|M_\text{total}|} \\
e_{21} & e_{22} & \cdots & e_{2|M_\text{total}|} \\
\vdots & \vdots & \ddots & \vdots \\
e_{|T|1} & e_{|T|2} & \cdots & e_{|T||M_\text{total}|}
\end{bmatrix}
\label{eq:EvaluationMatrix}
\end{equation}

In this matrix:
\begin{itemize}
    \item Each element \( e_{ij} \) represents the performance score of the \( i^{th} \) synthetic data generator with respect to the \( j^{th} \) metric. This score is derived from the addition of multiple evaluations in the datasets generated by each synthetic data generator.
    \item The size of the $E$ matrix is $|T| \times |M_\text{total}|$. 
\end{itemize}

The matrix $E$ containing values along with $T$ is recorded in the consortium blockchain by the data scientist. 

\subsubsection{Ranking Generation}

The primary objective in the Ranking Generation phase is to systematically rank synthetic data generators based on their effectiveness in adhering to predefined criteria for a given purpose $p_i$. Specifically, the objective is to quantify the adherence to desirable properties and the avoidance of undesirable properties. 

To facilitate this, transformation matrices $E^+_{p_i}$ and $E^-_{p_i}$ are generated from the matrix $E$ for each purpose $p_i$. The matrix $E^+_{p_i}$ captures the performance of each experimented generator against the metrics that measure desirable properties. Each row corresponds to a specific synthetic data generator $x$ in $T$, and each column represents a desirable metric $k$ in $M^{+}_{\text{total}}$. The structure of the $E^+_{p_i}$ matrix is shown in Equation~\ref{eq:EvaluationResultsMatrixPositive}.

\begin{equation}
E^+_{p_i}= \begin{bmatrix}
e^{+}_{11} & e^{+}_{12} & \cdots & e^{+}_{1|M^{+}{\text{total}}|} \\
e^{+}_{21} & e^{+}_{22} & \cdots & e^{+}_{2|M^{+}{\text{total}}|} \\
\vdots & \vdots & \ddots & \vdots \\
e^{+}_{|T|1} & e^{+}_{|T|2} & \cdots & e^{+}_{|T||M^{+}{\text{total}}|}
\end{bmatrix}
\label{eq:EvaluationResultsMatrixPositive}
\end{equation}

Similarly, the matrix $E^-_{p_i}$ records the performance of each experimented generator against the metrics that measure undesirable properties. Each row corresponds to a specific synthetic data generator $x$ in $T$, and each column represents a desirable metric $l$ in $M^{-}_{\text{total}}$. The structure of the $E^-_{p_i}$ matrix is shown in Equation~\ref{eq:EvaluationResultsMatrixNegative}.

\begin{equation}
E^+_{p_i}= \begin{bmatrix}
e^{-}_{11} & e^{-}_{12} & \cdots & e^{-}_{1|M^{-}{\text{total}}|} \\
e^{-}_{21} & e^{-}_{22} & \cdots & e^{-}_{2|M^{-}{\text{total}}|} \\
\vdots & \vdots & \ddots & \vdots \\
e^{-}_{|T|1} & e^{-}_{|T|2} & \cdots & e^{-}_{|T||M^{-}{\text{total}}|}
\end{bmatrix}
\label{eq:EvaluationResultsMatrixNegative}
\end{equation}

The objective function for each generator $T_x$ for a purpose $p_i$ is mathematically formulated as follows:

\begin{equation}
\begin{aligned}
\text{Score}_{T_x} = &\left\{ \sum_{j=1}^{|QI_{\text{total}}|} W_{ij} \times \left( \sum_{k=1}^{|M_{\text{total}}^{+}|} e^{+}_{ixk} \times W^{M^+}_{ik} \right) \right. \\
& \left. - \sum_{j=1}^{|QI_{\text{total}}|} W_{ij} \times \left( \sum_{l=1}^{|M_{\text{total}}^{-}|} e^{-}_{ixl} \times W^{M^-}_{il} \right) \right\} \\
& \quad \forall x \in \{1, 2, \ldots, |T|\} \\
& \quad \forall j \in \{1, 2, \ldots, |QI_{\text{total}}|\} \\
& \quad \forall k \in \{1, 2, \ldots, |M^{+}{\text{total}}|\} \\
& \quad \forall l \in \{1, 2, \ldots, |M^{-}{\text{total}}|\} \\
& \quad \forall i \in \{1, 2, \ldots, |P|\}
\end{aligned}
\label{eq:ObjectiveFunction}
\end{equation}

Equation~\ref{eq:ObjectiveFunction} calculates a composite score for each synthetic data generator $T_x$ for the specified purpose $p_i$. The computation considers the hierarchical structure of the weights: the significance of each Quality Indicator (QI) and the relative importance of each metric within those QIs. The approach ensures that the overall score not only reflects the performance of each generator on individual metrics, but also emphasizes the significance of those metrics in the broader context of each QI. 

By computing $Score_{Tx}$, for each generator, we can rank them according to their overall effectiveness in meeting the specifications set by the product manager. This ranking approach facilitates the identification of the most suitable synthetic data generators that best align with the specific requirements for each purpose $p_i$.

\begin{algorithm}
\caption{Rank Generators Across All Purposes}
\label{alg:RankAcrossPurposes}
\begin{algorithmic}[1]
    \Statex \textbf{Inputs:} 
    \Statex $P$: A set of all purposes for which synthetic datasets are generated
    \Statex $T$: A set of generators
    \Statex $E$: A raw metric performance score matrix 
    \Statex $M_{\text{LB}}$: A set of metrics, categorized into "lower is better".
    \Statex $M_{\text{HB}}$: A set of metrics, categorized into "higher is better". 
    \Statex $M_{\text{CC}}$: A set of ordered pair containing metrics, categorized into "closer to a constant is better" and the associated constant.
    \Statex $W^{M^{+}}$: A desired property weight matrix 
    \Statex $W^{M^{-}}$: An undesired property weight matrix
    \Statex $W$: A Quality Indicator weight matrix
 
    \Statex \textbf{Output:}
    \Statex A rank list of generators, sorted in ascending order for each purpose, where the lowest numerical rank signifies the most suitable generator for that respective purpose 
  \Procedure{RankGeneratorsAllPurposes}{$P$, $T$, $E$, $W$, $M_{\text{LB}}$, $M_{\text{HB}}$,  $M_{\text{CC}}$, $W^{M^{+}}$, $W^{M^{-}}$}
    \State Initialize an empty dictionary $AllRankings$
    \For{$p_i$ in $P$}
        \State Extract $W^{M^{+}}_{p_i} \text{ from } W^{M^{+}}$
        \State Extract $W^{M^{-}}_{p_i} \text{ from } W^{M^{-}}$
        \State $M^+_{p_i} \gets \text{ column (metric) names of }W^{M^+}_{p_i}$
        \State $M^-_{p_i} \gets \text{ column (metric) names of }W^{M^-}_{p_i}$  
       \State $(E^{+}_{p_i}, E^{-}_{p_i}) \gets \text{Transformation}($
        \Statex \hspace{\algorithmicindent}\hspace{\algorithmicindent}\hspace{\algorithmicindent}\hspace{\algorithmicindent} $E, T, p_i, M_{\text{LB}}, M_{\text{HB}}, M_{\text{CC}}, M^+_{p_i}, M^-_{p_i})$     
         \State Extract $W_{p_i}$ from $W$ 
        \State $Sorted \gets \text{RankGenerators}(p_i, E^{+}_{p_i}, E^{-}_{p_i},$
        \Statex \hspace{\algorithmicindent}\hspace{\algorithmicindent}\hspace{\algorithmicindent}\hspace{\algorithmicindent} $W_{p_i}, W^{M^{+}}_{p_i}, W^{M^{-}}_{p_i}, T)$
        \State $AllRankings[p_i] \gets Sorted$
    \EndFor
    \State Write $AllRankings$ to the ledger hosted by a consortium blockchain
    \State \textbf{return} $AllRankings$
\EndProcedure
\end{algorithmic}
\end{algorithm}

\begin{algorithm}
\caption{Ranking per Metric Transformation}
\label{alg:transformation}
\begin{algorithmic}[1]
\Statex \textbf{Inputs:}
 \Statex $p_i$: The specific purpose or scenario under evaluation, for which the ranking of generators is generated
\Statex $T$: The set of all synthetic data generators that are evaluated. These generators are represented as rows in the matrix $E$
\Statex $E$: The raw performance score matrix where each entry $E_{t,m}$ denotes the performance score of the generator $t \in T$ on metric $m$. The matrix encapsulates the performance of all generators in various evaluation metrics \Statex $M_{\text{LB}}$: A set of metrics, classified as "lower is better"
\Statex $M_{\text{HB}}$: A set of metrics, classified as "higher is better"
\Statex $M_{\text{CC}}$: A set of ordered pairs $(m, C_m)$ containing metrics categorized as "closer to a constant is better" and the associated constant
    \Statex $M^+_{p_i}$: A set of metrics that represent the desired properties in the context of $p_i$
    \Statex $M^-_{p_i}$: A set of metrics representing the undesired properties in the context of $p_i$
\Statex \textbf{Outputs:} 
\Statex Transformation matrices containing the new transformed evaluation values $E^{+}_{p_i}$ and $E^{-}_{p_i}$ for a given purpose $p_i$ 
\Procedure{Transformation}{$p_i$, $T$, $E$, $M_{\text{LB}}$, $M_{\text{HB}}$,  $M_{\text{CC}}$, $M^+_{p_i}$, $M^-_{p_i}$}
\State Initialize matrices $E^{+}_{p_i}$ and $E^{-}_{p_i}$ with dimensions $|T| \times |M^+_{p_i}|$ and $|T| \times |M^-_{p_i}|$ respectively.
\State $M^{p_i}_\text{total} \gets M^+_{p_i} \cup M^-_{p_i}$
\For{each metric \( m_j \in M^{p_i}_\text{total} \)}
    \State Create a list \( L_{m_j} \) containing pairs of \( (t_k, E[t_k][m_j]) \) for each \( t_k \in T \)fo
    \If{\( m_j \in M_{\text{LB}} \)}
        \State Sort \( L_{m_j} \) in ascending order based on $E[t_k][m_j]$.
    \ElsIf{\( m_j \in M_{\text{HB}} \)}
        \State Sort \( L_{m_j} \) in descending order based on $E[t_k][m_j]$.
    \ElsIf{\( m_j \in M_{\text{CC}} \)}
        \State Sort \( L_{m_j} \) based on the closeness of $E[t_k][m_j]$ to \( C_{m_j} \)
    \EndIf
    \State Assign ranks to each \( t_k \) in \( L_{m_j} \) based on its position in the sorted list
    \State Inverse ranks to each \(t_k \) in \( L_{m_j} \). 
    \State Store these inverted ranks in a dictionary \( \text{RankDict}_{m_j} \) with generator identifiers as keys
    
\EndFor
\For{each $m_j$ in $M^{p_i}_\text{total}$ and a $t_k \in T$}
    \If{$m_j \in M^{p_i^+}$}
        \State $E^{+}_{p_i}[t_k][m_j] = \text{RankDict}_{m_j}[t_k]$
    \EndIf
    \If{$m_j \in M^{p_i^-}$}
        \State $E^{-}_{p_i}[t_k][m_j] = \text{RankDict}_{m_j}[t_k]$
    \EndIf
\EndFor
\State \textbf{return} $E^{+}_{p_i}$ and $E^{-}_{p_i}$
\EndProcedure
\end{algorithmic}
\end{algorithm}

\begin{algorithm}
\caption{Rank Generators for a Purpose \( p_i \)}
\label{alg:ranking}
\begin{algorithmic}[1] 
    \Statex \textbf{Inputs:} 
    \Statex $p_i$: The specific purpose or scenario under evaluation for which the generator ranking is generated \Statex $T$: The set of all synthetic data generators that are evaluated. These generators are represented as rows in the matrix $E$
    \Statex $E^+_{p_i}$: A transformed matrix containing rank scores for all metrics categorized as desired properties
    \Statex $E^-_{p_i}$: A transformed matrix containing rank scores for all metrics categorized as undesired properties
    \Statex $W_{p_i}$: The weights of the quality indicator in the context of a purpose $p_i$ 
    \Statex $W^{M^{+}}_{p_i}$: The desired property metric weights in the context of a purpose $p_i$
    \Statex $W^{M^{-}}_{p_i}$: The undesired property metric weights in the context of a purpose $p_i$ 
    \Statex \textbf{Outputs:} 
    \Statex A sorted list of generators for the specific purpose $p_i$ is arranged in ascending order, where the lowest numerical rank represents the generator most suitable for the specific purpose $p_i$
   \Procedure{RankGenerators}{$p_i$, $T$, $E^+_{p_i}$, $E^-_{p_i}$, $W_{p_i}$, $W^{M^{+}}_{p_i}$, $W^{M^{-}}_{p_i}$}
    \State Initialize an empty list $Scores$
    \State $|M^+_{p_i}| \gets \text{number of metrics in } W^{M^{+}}_{p_i}$
    \State $|M^-_{p_i}| \gets \text{number of metrics in } W^{M^{-}}_{p_i}$

    \For{$t_k \in T$}
        \State $\text{OverallScore}_i \gets 0$
        \State $\text{DesiredScore}_i \gets 0$
        \State $ \text{UndesiredScore}_i \gets 0$
        
        \For{$m_j \in M^+_{p_i}$}
            \State $\text{DesiredMetricWeight} \gets W^{M^{+}}_{p_i}[m_j]$
            \State $\text{QICategoryWeight} \gets W_{p_i}[\text{category } m_j]$
            \State $\text{DesiredTransformedValue} \gets E^+_{p_i}[t_k][m_j]$
            \State $\text{DesiredScore}_i \gets \text{DesiredScore}_i + (\text{DesiredTransformedValue} \times \text{DesiredMetricWeight} \times \text{QICategoryWeight})$
        \EndFor

        \For{$m_j \in M^-_{p_i}$}
            \State $\text{UndesiredMetricWeight} \gets W^{M^{-}}_{p_i}[m_j]$
            \State $\text{QICategoryWeight} \gets W_{p_i}[\text{category } m_j]$
            \State $\text{UndesiredTransformedValue} \gets E^-_{p_i}[t_k][m_j]$
            \State $\text{UndesiredScore}_i \gets \text{UndesiredScore}_i - (\text{UndesiredTransformedValue} \times \text{UndesiredMetricWeight} \times \text{QICategoryWeight})$
        \EndFor
        \State $\text{OverallScore}_i \gets \text{DesiredScore}_i - \text{UnDesiredScore}_i$
        \State Append $\text{OverallScore}_i$ and $t_k$ as a pair to $Scores$
    \EndFor

    \State Sort $t_k$ within $Scores$ based on $OverallScore$ in descending order.
    \State Write the $SortedScores$ for \( p_i \) to the ledger hosted by the consortium blockchain
    \State \textbf{return} $SortedScores$ for \( p_i \) recorded in the blockchain
    \EndProcedure
\end{algorithmic}
\end{algorithm}

Algorithm~\ref{alg:RankAcrossPurposes} presents a thorough and flexible approach to evaluate and rank synthetic data generators for a variety of defined purposes. This framework systematically considers the performance of each generator in selected metrics, as well as the relative importance of these metrics, as determined by the specific requirements associated with each purpose in the set $P$. This ensures that the ranking process is not only performance-based, but also aligns with the nuanced demands and priorities of each distinct purpose.

The summary of Algorithm~\ref{alg:RankAcrossPurposes} is as follows: The algorithm commences by establishing the necessary inputs as outlined in the Inputs part of the Algorithm~\ref{alg:RankAcrossPurposes}. Then it initializes an empty dictionary $AllRankings$ to store the calculated scores of the generators for each purpose.
For each specified purpose $p_i$ within $P$, the algorithm performs the following steps:
    \begin{itemize}
        \item It constructs the matrices $E^{+}_{p_i}$ and $E^{-}_{p_i}$ from $E$ by employing the Algorithm~\ref{alg:transformation}.   These matrices capture the performance of each generator in terms of metrics identified as desirable or undesirable. The transformation applied is tailored to reflect the inherent nature of each metric, highlighting the algorithm's adaptability to accommodating diverse evaluation needs. The purpose of the transformation function is to prepare the inputs ($E$) provided by the data scientist so that they can be applied effectively to the ranking equation~\ref{eq:ObjectiveFunction}. This involves ensuring that performance values are uniformly interpreted in $E$ where higher values in any metric represent better performance. The function categorizes these values into desired ($Eˆ{+}_{p_i}$) and ($E^{-}_{p_i}$) performance categories. 
        \item Then extracts the relevant Quality Indicator weights ($W_{p_i}$) and metric weights ($W^{M^{+}}_{p_i}$ and $W^{M^{-}}_{p_i}$) or the current purpose.
        \item Subsequently, the Algorithm~\ref{alg:ranking} is invoked with these inputs, calculating and organizing the generators based on their overall performance scores, taking into account both desirable and undesirable characteristics.
    \end{itemize}

The algorithm compiles the sorted rankings for each purpose within the $AllRankings$ dictionary. After all purposes have been processed, $AllRankings$ is committed to a consortium blockchain and returned. This process ensures a holistic and transparent record of generator rankings for each distinct purpose, reflecting a balance between performance metrics and specific purpose requirements.

The algorithm~\ref{alg:transformation} provides a procedure for transforming the raw performance scores of synthetic data generators into matrices that reflect their performance in relation to the desired and undesired properties for a specific evaluation purpose. The summary of Algorithm~\ref{alg:transformation} is as follows: $E^{+}_{p_i}$ and $E^{-}_{p_i}$, are initialized to store transformed scores. Their dimensions are determined by the number of generators and the number of metrics categorized as desired or undesired for the specific purpose. For each metric $m_j$, the algorithm creates a list of pairs consisting of generator identifiers and their corresponding scores for $m_j$. These lists are then sorted according to the nature of the metric:
\begin{itemize}
\item For metrics where lower scores are better ($M_{\text{LB}}$), the list is sorted in ascending order.
\item For metrics where higher scores are better ($M_{\text{HB}}$), it is sorted in descending order.
\item For optimal metrics at a specific constant ($M_{\text{CC}}$), it is sorted based on the proximity to the constant value associated with each metric.
\end{itemize}

After sorting, each generator is assigned a rank based on its position in the sorted list for each metric. A lower value in each metric denotes a better performer. However, to align with the ranking Equation~\ref{eq:ObjectiveFunction}, where higher values represent superior performance, the ranks are inverted. These inverted ranks are stored in a dictionary for convenient access. Subsequently, the algorithm populates $E^{+}_{p_i}$ and $E^{-}_{p_i}$ using the ranks of the dictionary. For each generator and metric, if the metric is desirable for the purpose, its rank is stored in $E^{+}_{p_i}$; if it is undesirable, the rank is stored in $E^{-}_{p_i}$. Finally, the algorithm returns the matrices $E^{+}_{p_i}$ and $E^{-}_{p_i}$, which now reflect the transformed performance scores of the generators regarding the desirable and undesirable properties for the specific purpose $p_i$. These matrices are then utilized in the Algorithm~\ref{alg:RankAcrossPurposes}.

The algorithm~\ref{alg:ranking} ranks synthetic data generators tailored for a specific purpose $p_i$. Initially, it initializes an empty dictionary named $Scores$ to store the calculated scores for each generator. The algorithm iterates through each generator $t_k$ in $T$, computing three distinct scores: the desired score and the undesired score, based on the performance on both desirable and undesirable metrics, in addition to the overall score derived by subtracting the undesired score from the desired score.

For each metric $m_j$ in $M^+_{p_i}$ and $M^-_{p_i}$, the desired and undesired scores are determined by multiplying the transformed value of the metric from $E^{+}_{p_i}$ or $E^{-}_{p_i}$ by their corresponding weights $W_{p_i}, W^{M^{+}}_{p_i}$, and $W^{M^{-}}_{p_i}$. The desired scores are aggregated across all metrics in $M^+_{p_i}$, and similarly, the undesired scores are aggregated across all metrics in $M^-_{p_i}$.

Subsequently, the overall score for each generator $t_k$ is computed by subtracting its undesired scores from its desired scores. If the overall score of a generator $t_k$ is negative, it indicates that the generator exhibits more undesirable properties than desired ones; otherwise, it suggests that the generator exhibits more desirable properties than undesirable ones.

These scores, along with their respective $t_k$, are stored in the $Scores$ dictionary. The dictionary then sort$t_k$ based on their overall score in descending order, where the generator$t_k$ with the highest overall score represents the best performance among the set of generators. 

Following this, the algorithm extracts a sorted list of generators from $Scores$ and records it in a ledger hosted on a consortium blockchain. This ensures transparency, accountability, and auditability of the ranking process.

Finally, the sorted list of generators for $p_i$ is returned as output, which is further utilized in Algorithm~\ref{alg:RankAcrossPurposes}.

\subsubsection{Auditing phase}

\begin{algorithm}
\caption{Auditing Process for Verification}
\label{alg:DetailedAuditing}
\begin{algorithmic}[1]
 \Statex \textbf{Inputs:} 
 \Statex $P$: A set of all purposes for which synthetic datasets are generated
    \Statex $T$: A set of generators
    \Statex $E$: A raw metric performance score matrix 
    \Statex $M_{\text{LB}}$: A set of metrics, categorized into "lower is better".
    \Statex $M_{\text{HB}}$: A set of metrics, categorized into "higher is better". 
    \Statex $M_{\text{CC}}$: A set of ordered pair containing metrics, categorized into "closer to a constant is better" and the associated constant.
    \Statex $W^{M^{+}}$: A desired property weight matrix 
    \Statex $W^{M^{-}}$: An undesired property weight matrix
    \Statex $W$: A quality indicator weight matrix 
    \Statex \textbf{Output:}
   \Statex A boolean audit result. If true, it indicates successful verification; if false, it denotes inconsistencies
\Procedure{AuditRankings}{$P$, $T$, $E$, $M_{\text{LB}}$, $M_{\text{HB}}$, $M_{\text{CC}}$, $W^{M^{+}}$, $W^{M^{-}}$, $W$}
    \State Initialize a variable \( \text{isConsistent} \) as \( \text{True} \)
    \For{each purpose \( p_i \) in \( P \)}
        \Statex \textbf{Verify Product Manager's Specification:}
        \State - Compare \( W_{p_i} \), \( W^{M^{+}}_{p_i} \), \( W^{M^{-}}_{p_i} \) with corresponding records in Blockchain.
        \State - If any inconsistency is found in weight matrices, flag as inconsistent
        \Statex \textbf{Verify Data Scientist's Evaluation Matrices:}
        \State - Cross-check \( E \) and \( T \) against their records in the Blockchain
        \State - If evaluation matrices do not match, flag as inconsistent
        \Statex \textbf{Replicate Ranking Process:}
        \State - Retrieve $SortedGenerators$ for $p_i$ from blockchain
        \State - Use \( E \), \( W_{p_i} \), \( W^{M^{+}}_{p_i} \), \( W^{M^{-}}_{p_i} \), $M_{\text{LB}}$, $M_{\text{HB}}$, $M_{\text{CC}}$ and \( T \) for recalculation of $SortedGenerators$ 
        \Statex \textbf{Compare Rankings:}
        \State - Match recalculated rankings with the rankings recorded in the Blockchain
        \State - If there is a discrepancy in rankings, flag as inconsistent
        \State - Record any discrepancies or inconsistencies identified in the blockchain
    \EndFor
    \Statex \textbf{Conclude Audit:}
    \State - If any inconsistencies are found, \( \text{isConsistent} \) is set to \( \text{False} \)
    \State - Record status of isConsistent to the ledger hosted by the consortium blockchain
    \State \textbf{return} \( \text{isConsistent} \)
\EndProcedure
\end{algorithmic}
\end{algorithm}

The objective of the audit phase is to perform an independent review by an auditor to verify that the ranking of synthetic data generators was completed accurately and in accordance with the established criteria. 

The algorithm~\ref{alg:DetailedAuditing} summarizes the audit procedure. The auditor reviews the predefined criteria and metrics the Product Manager sets, ensuring that they align with legal, ethical, and organizational standards. This involves analyzing the all the inputs provided by the product manager towards setting the criteria for ranking synthetic data generators for each purpose. The auditor cross-references these criteria with the records in the consortium blockchain to verify their consistency and appropriateness. 

Moreover, the external auditor conducts a meticulous review of the inputs provided by the Data Scientists. This includes a thorough examination of the evaluation matrices $E$ and $T$, which are crucial in determining the performance of each synthetic data generator against the defined metrics. In addition to this, the auditor accesses the compiled ranking list $AllRankings$, utilized by the Data Scientist to select the optimal synthetic data for the Product Manager's requirements. This comprehensive analysis by the auditor ensures the accuracy and integrity of the evaluation process, confirming that the rankings and selections reflect the true performance of the generator.  

The auditor meticulously reviews the ranking algorithm and its implementation to verify its adherence to correct protocols, particularly focusing on how it integrates the evaluation results with the weights assigned to Quality Indicators (QIs) and metrics. To validate the accuracy of the final rankings, the auditor replicates the ranking process. This is achieved by invoking the ranking algorithms embedded within the smart contracts to compare with $AllRankings$. 

In addition to the detailed review and verification process, the auditor plays a crucial role in maintaining the integrity and transparency of the system by recording the audit results on the blockchain. This involves documenting the results of their examination, including any discrepancies identified or confirmations of the ranking algorithm's accuracy. By recording these results on the blockchain, the auditor ensures that the audit trail is immutable and verifiable if needed. 

The algorithms~\ref{alg:RankAcrossPurposes}~\ref{alg:transformation}~\ref{alg:ranking}~\ref{alg:DetailedAuditing} developed to evaluate and rank synthetic data generators form an integral part of a robust framework. This framework ensures a comprehensive and context-sensitive assessment of each generator. Key to the functionality of this system are smart contracts, which are automated self-executing contracts with the terms of the agreement directly written into code.

These smart contracts can be invoked upon the recording of inputs from data scientists and product managers on the blockchain. When these inputs are recorded, the smart contracts activate the evaluation and ranking algorithms. This automation ensures a seamless, efficient, and transparent process. By integrating these algorithms into smart contracts, the framework gains several advantages:
\begin{itemize}
\item \textbf{Immutability:} Once deployed, the terms within the smart contract cannot be altered, ensuring the integrity of the evaluation process.

\item \textbf{Transparency:} All stakeholders have visibility into the rules and conditions set within the smart contracts, promoting trust and accountability.

\item \textbf{Efficiency:} The automation of the evaluation and ranking process reduces the need for manual intervention, speeding up the decision-making process.

\item \textbf{Accuracy:} By codifying the evaluation criteria and ranking methodologies into smart contracts, the potential for human error is significantly reduced. Of course, we assume the smart contracts are thoroughly code reviewed before being deployed on the blockchain network. 

\item \textbf{Traceability:} Each transaction of all end users (product manager, data scientist, and auditor) is recorded on the blockchain, providing a transparent and traceable audit trail.
\end{itemize}
In essence, the integration of these algorithms with blockchain technology and smart contracts creates a dynamic and reliable ecosystem for synthetic data generation. This system not only streamlines the evaluation and ranking of data generators, but also upholds the principles of fairness, transparency, and accuracy, which are crucial in building trust among users and stakeholders in the synthetic data domain.

\section{Integrating the Proposed Ranking Algorithms with a Blockchain Framework}
\label{sec:conversion}
\subsection{Background on Sawtooth}

Sawtooth~\cite{Sawtooth} is an enterprise blockchain platform for building distributed ledger applications and networks. The primary reason for selecting Sawtooth for our integration of algorithms with blockchain technology lies in its ability to simplify application development. Sawtooth achieves this through a separation of the application domain from the core system, facilitated by a feature known as the "transaction family." This separation allows developers to write smart contracts in their preferred programming languages, improving flexibility. Specifically, we opt for Python to implement our algorithms, leveraging the extensive statistical libraries available in Python for our purposes. 

Figure~\ref{fig:sawtooth} shows an overview of a Sawtooth blockchain network. At its core, a Sawtooth network is comprised of a peer-to-peer network of Sawtooth nodes, each of which includes several key components essential for network's operation. 
\begin{figure}[htbp]
\centerline{\includegraphics[width=\columnwidth]{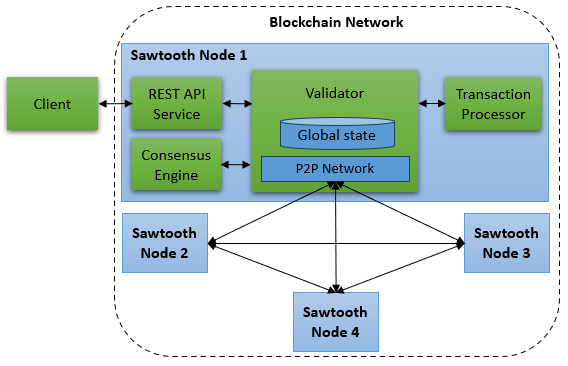}}
\caption{An Overview of a Sawtooth blockchain network}
\label{fig:sawtooth}
\end{figure}

\begin{itemize}
\item\textbf{Validator:} The core component of a Sawtooth node, known as the validator, defines the node's configuration, including network and component endpoints, the list of peers, and the minimum and maximum number of required peers. It is also responsible for validating transactions and blocks received through the REST API or the peer-to-peer network.
\item\textbf{REST API:}  A simple interface allows clients to interact with the validator. The REST API is the component that is bound to the validator for this purpose. 

\item\textbf{Transaction processor:}  Transaction processor (smart contract) includes the business logic and implemented algorithms of the application. The transaction processor communicates with the validator component to get or set the data that is required for performing the business logic algorithms.

The implemented transaction processor handles the transaction received from the validator. It decodes and deserializes the payload of the transaction to extract the command and its list of arguments, which was initiated and serialized by the client. Then, the corresponding function inside the transaction processor is run on the basis of the extracted command. Each function of the transaction processor performs an action based on a defined command. 

In our use case, the action purpose is either populating the ledger with the generators, quality indicators, desired and undesired properties, and quality indicator weights, or computing the method rankings. For this purpose, the transaction processor gets the ledger state from the validator, performs the required actions on it, and finally sets the new state on the ledger.

\item\textbf{Global state:} The application data is stored in the Global state inside the validator component of a Sawtooth node. The data structure for storing data is a Merkle-Radix tree. Sawtooth core system takes the responsibility of creating this data structure and managing the stored data. The part that is related to the application level is constructing address schemes for each data that is going to be stored in the leaf nodes of the tree. The address must be computable by any Sawtooth node (specifically by the validator component) or the client that needs to access it. Therefore, the address must be defined in the application on both the client side and the transaction processor side in a deterministic way. 

We construct the addresses using the hex-encoded hash values of a specific string to be deterministically calculable on the transaction processor side and on the client side for registering into and reading from the ledger. Each address scheme includes a namespace prefix. We define separate namespaces for each category of our data, i.e. generators, quality indicators, quality indicator weights, desirable and undesirable properties, and rankings, to be able to access the Merkle-radix tree of each of them independently.

\item\textbf{Consensus engine:} A Sawtooth node needs a component that implements the consensus algorithm. Consensus engine component cooperates with the validator to run the consensus algorithm. Sawtooth supports two types of consensus algorithms for a blockchain network deployment: PoET (Proof of Elapsed Time) and PBFT (Practical Byzantine Fault Tolerance).

PoET uses a leader-election lottery system to decide who has the right to publish a block. Based on this algorithm, each Sawtooth node must generate a random waiting time and sleep for the generated random period. The node that completes its waiting time and wakes up first can commit a block and publish it to the network. However, PBFT is a voting-based consensus algorithm. In each round of publishing a block, the network chooses a leader. The leader publishes its block, and the other nodes vote on the block to commit or reject it. 

Given our use case's operational context within an enterprise environment, we inherently assume a partially synchronous system. This assumption reflects the typical conditions of enterprise systems, where communication delays are expected to be bounded, despite being subject to occasional fluctuations. In light of this framework, our decision to select PBFT over PoET for our blockchain deployment is driven by two key considerations, particularly relevant in a partially synchronous environment. First, PBFT provides better safety and liveness guarantees than PoET. PBFT guarantees that the system remains in a correct state, provided that the number of Byzantine nodes in the network does not exceed one third of the total network. This feature ensures that transactions are inevitably processed, offering a higher degree of reliability. Second, PBFT delivers deterministic consensus outcomes, ensuring that once a consensus on blocks is reached, these blocks are irrevocably committed across all Sawtooth nodes, effectively negating the possibility of network forks. 
\end{itemize}


\subsection{Smart Contracts for Ranking Synthetic Data Generators and Auditing the Ranking Process}

  \begin{figure}[htbp]
    \centerline{\includegraphics[width=\columnwidth]{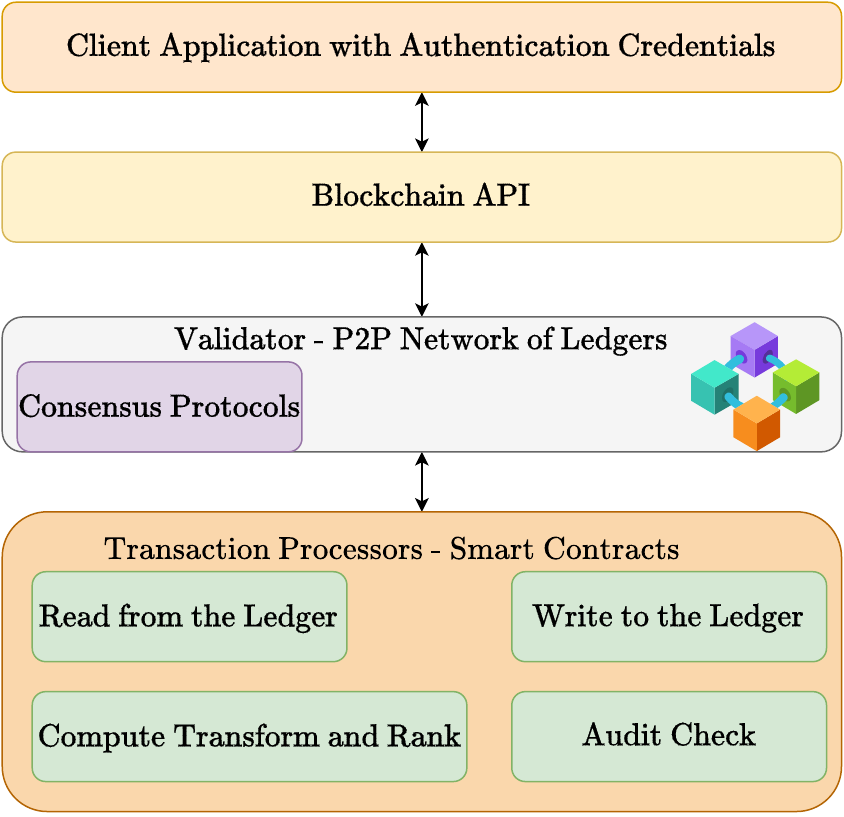}}
    \caption{Overview of the Proposed Blockchain Framework for Ranking Synthetic Data Generators.}
    \label{fig:blockchainFramework}
    \end{figure}

In our use case, the development of smart contracts is based on the ranking algorithms proposed in Section~\ref{sec:proposed}. Our smart contracts are designed with two primary objectives. The first is to automatically rank synthetic data generators for a specified purpose, based on input from product managers and data scientists. The second objective is to enable auditors to automatically review the inputs from product managers and data scientists for a specific purpose and to verify the final ranking outcomes of synthetic data generators for that purpose.

We propose four main modules within the smart contract to achieve these goals: a) Write to the Ledger, b) Read from the Ledger, c) Compute Transform and Rank, and d) Audit Check, as illustrated in Figure~\ref{fig:blockchainFramework}. Authenticated and authorized clients can invoke these modules in smart contracts by providing appropriate attributes as input. The public and private keys associated with user roles are utilized to establish the user identity, and a validator component verifies this identity before smart contracts are invoked.

 \begin{figure}[htbp]
    \centerline{\includegraphics[width=\columnwidth]{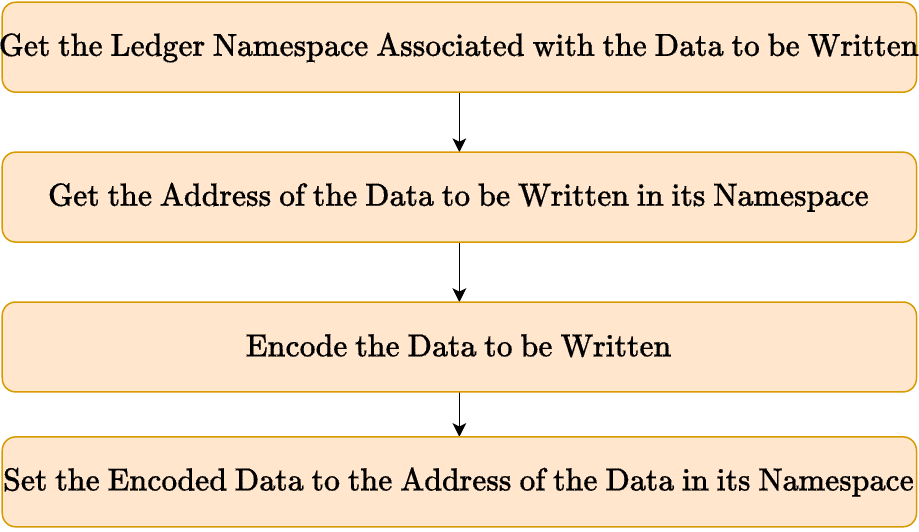}}
    \caption{Overview of Writing Data through Smart Contracts to the Sawtooth Ledger: The datasets include inputs from product managers, data scientists, rankings of synthetic data generators, and audit check results. The namespaces for each of these datasets are predefined.}
    \label{fig:WriteDatatoLedger}
    \end{figure}

\begin{table*}[htbp]
\centering
\caption{Data types to be recorded on and read from the Ledger have their respective namespaces predefined within the smart contracts.}
\label{tab:dataTypes}
\begin{tabular}{p{4cm}|p{4cm}|p{4cm}|p{4cm}}
\toprule
Product Manager & Data Scientists & Compute Transform and Rank & Audit Check Result \\
\midrule
Purposes, weight distributions for desired ($WM^{+}$) and undesired properties ($WM^{-}$), quality indicator weights ($W$) and metric classifications ($M_{LB}$, $M_{HB}$, $M_{CC}$) & A raw performance score of synthetic data generators for a specific purpose ($E$) & Rankings of synthetic data generators for a specific purpose & Results of the audit check for a specific purpose \\
\bottomrule
\end{tabular}
\end{table*}

The "Write to the Ledger" module can be invoked by Product Managers and Data Scientists to register their input on the ledger. Additionally, the "Compute Transform and Rank" and "Audit Check" modules may also invoke the "Write to the Ledger" module to record the final ranking for a given purpose and the results of the audit on the ledger.

The workflow for the "Write to Ledger" module is illustrated in Figure~\ref{fig:WriteDatatoLedger}. In our use case, all the data types listed in Table~\ref{tab:dataTypes} adhere to the same process for recording in the Sawtooth Ledger. The namespaces are predefined for each data type to be recorded in the ledger. Depending on the data type, the appropriate namespace and the address within that namespace are identified. The data intended for recording are first encoded in UTF-8 string format. This encoded data are then assigned to the data's specific address within its namespace, utilizing the internal Sawtooth APIs.

  \begin{figure}[htbp]
    \centerline{\includegraphics[width=\columnwidth]{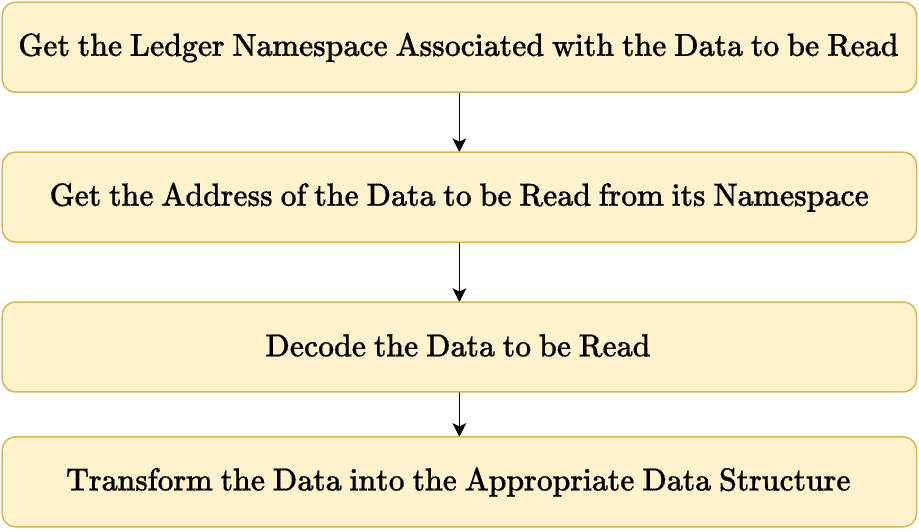}}
    \caption{Overview of Reading Data through the Smart Contract from the Sawtooth Ledger.}
    \label{fig:ReadDataFromLedger}
    \end{figure}

The "Read from the Ledger" module can be invoked by Product Managers, Data Scientists, Auditors, and the "Compute Transform and Rank" and "Audit Check" modules. The workflow for the "Read from Ledger" module is illustrated in Figure~\ref{fig:ReadDataFromLedger}. In our scenario, all data types listed in Table~\ref{tab:dataTypes} undergo the same procedure to retrieve data from the Sawtooth ledger. The predefined namespaces accommodate the data types that need to be read. Based on the type of data to be accessed, the relevant namespace is identified and the address of the data to be read is located. The data is then decoded and transformed into the suitable data structure. If the data are to be presented to the client application, it is converted into JSON format; if it is to be used within the "Compute Transform and Rank" and "Audit Check" modules of the smart contract, it is converted into the requisite data structure format.

The "Compute Transform and Rank" module can be invoked by Product Managers, Data Scientists, Auditors, and the "Audit Check" module. The primary goal of this workflow is to compute the final ranking of synthetic data generators based on the inputs recorded on the blockchain by both the Product Manager and the Data Scientist. The module comprises two logical functions: a) transformation and b) ranking. The transformation function prepares the inputs provided by the Data Scientists for the ranking process. The ranking function uses the scoring technique described in Section~\ref{sec:proposed} to calculate the final ranking for a specific purpose, taking into account the input of the Product Manager and the transformed inputs from the Data Scientists. The final result of the ranking process is then recorded on the blockchain. 

The "Audit Check" module can be invoked by Auditors. The primary objective of this module is to verify three key criteria: a) whether the inputs provided by the Product Manager correspond with the inputs recorded on the blockchain; b) whether the inputs provided by the Data Scientists match the inputs recorded on the blockchain; and c) whether the computed ranking results are consistent with the ranking results stored on the ledger. If any of these criteria are not met, the result of the audit check is recorded as false on the blockchain. Otherwise, it is recorded as true.

\subsection{Deployment of Smart Contracts to the Sawtooth Network} \label{sec:smartcontractDeployment}

The transaction processor includes the business logic of the application and contains implementation of all the algorithms introduced for our application. Since all Sawtooth nodes have to run the same smart contract in a Sawtooth network, the developer has to replicate the same transaction processor python file among all the Sawtooth nodes. In order to connect the transaction processor to the validator, the URL of the validator is set in our transaction processor Python code. After adding this configuration, our transaction processor can run by the following command on each of the Sawtooth nodes:
sudo python3 ./synthrank\_tp-v3.py

By running the transaction processor on all the nodes, it will wait to be invoked and start its functions as follows when a client has a request:

\begin{enumerate}

\item \textbf{Receiving transactions from the validator:} The validator receives the batch from the client. The batch is propagated across the P2P network of Sawtooth node validators. Validators extract transactions of the batch and send each valid transaction to its corresponding transaction processor which is registered before on the validator. For example, transactions of the same transaction family are sent to their own transaction processor.
    
    \item \textbf{Executing the transaction:} The transaction processor that receives the transaction from the validator extracts the payload of the transaction, decodes, and deserializes the payload to get the action needed to execute the transaction. The transaction processor runs the function related to the action command in the transaction. It gets and sets the state in the validator as needed to complete the transaction execution. Figure~\ref{fig:cwFunction} shows the function written to execute the weight command of the quality indicator. First, the state address in which the quality indicator weight of a purpose is going to be registered is calculated based on the previously defined data model. Then, the weight values of the quality indicator with their purpose name are encoded and registered in their specific address in the ledger. 
    \begin{figure}[htbp]
    \centerline{\includegraphics[width=\columnwidth]{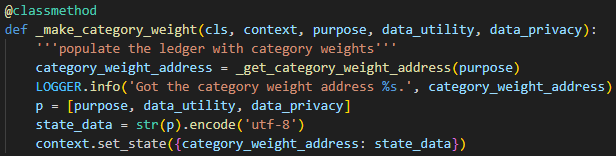}}
    \caption{Populating the ledger with the quality indicator weights}
    \label{fig:cwFunction}
    \end{figure}
    
    As another example of handling transactions by the transaction processor, Figure~\ref{fig:rankingFunction} illustrates part of the function used to rank the methods for different purposes. Finally, the computed rankings are registered in the specified state address to be retrievable deterministically based on the defined data model.  
    \begin{figure}[htbp]
    \centerline{\includegraphics[width=\columnwidth]{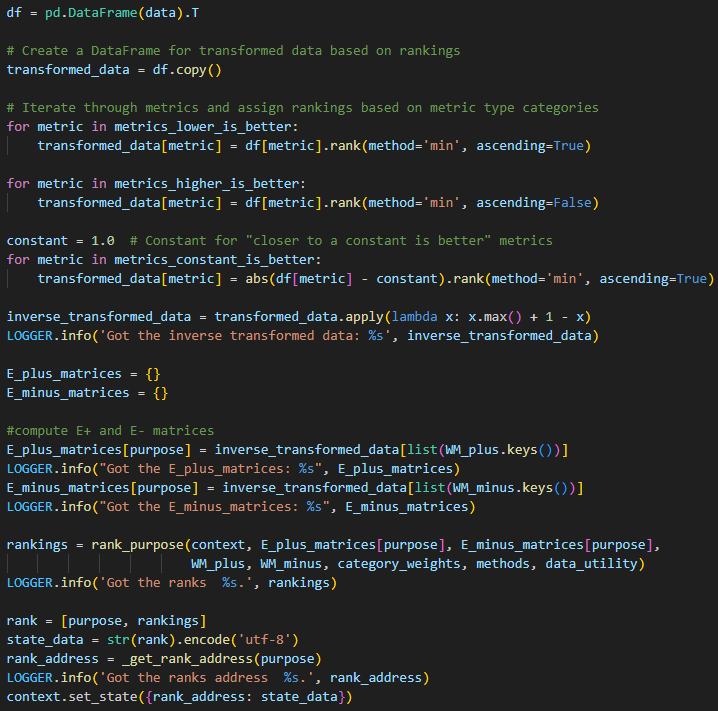}}
    \caption{Populating the ledger after computation of the rankings based on the registered values}
    \label{fig:rankingFunction}
    \end{figure}

    \item \textbf{Building agreement among validators:} Validators start creating blocks containing the received batches based on the consensus algorithm. As we are using PBFT consensus, one leader validator publishes its candidate block across the validator network. Validators receiving the candidate block start validating all batches and transactions inside the block. The Sawtooth core system handles the whole process of validating and committing the transaction, and this step is hidden from the scope of the application.

\end{enumerate}

 
\subsection{Design of Client Application Interface to Blockchain API} \label{sec:clientAPI}

Any changes in the global state must be initiated by a transaction. The client is responsible for creating a transaction and submitting it to the validator component of the Sawtooth node through the REST API service. The transaction includes the data value that will be registered in the blockchain ledger. 

We divide the client functionalities into two parts to achieve the responsibilities of the client. The first part is to provide a command line interface for the client to initiate the transactions. Our defined commands in the proposed application support the application actions such as registering methods, quality indicators, quality indicator weights, desired and undesired properties, computing and registering the ranks, showing any registered data, and auditing all the registered data in comparison with the given inputs. The second part of the client functionality is responsible for getting the registered commands, creating a transaction and a batch, and sending them to the REST API. These two parts are deployed as two Python files (synthrank.py and synthrank\_client.py). In the following, we explain how the client is prepared and interacts with the blockchain API.

\begin{enumerate}

    \item \textbf{Creating user signing keys:} The validator component of the Sawtooth node needs to confirm the identity of the transaction sender for the sake of the privacy and security of the application. Therefore, a private key file and a public key file have to be generated on the validator for each of our users, i.e. the product manager, the data scientist, and the auditor, to be able to authorize the transaction senders. Figure~\ref{fig:keygen} shows how the product manager is defined and therefore will be identified on the Sawtooth node.
    \begin{figure}[htbp]
    \centerline{\includegraphics[width=\columnwidth]{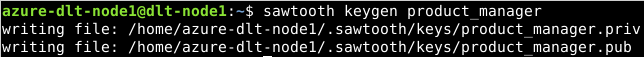}}
    \caption{Generating a private key and public key for a client}
    \label{fig:keygen}
    \end{figure}

    \item \textbf{Registering the commands:} The user runs the Python file of the client application where the command-line interface for initiating transactions is defined. In a terminal, the user inputs a command based on a defined format. Table~\ref{tab:commands} presents the supported commands for our application users:
    \begin{table*}[h!]
    \centering
    \begin{tabular}{c p{12cm}}
        \toprule
        Command & Description \\
        \midrule
        qi qi.txt --key \{user\_name\} & Registering quality indicators written in a file by the product manager \\
        cw weights.txt --key \{user\_name\} & Registering quality indicator weights (category weights) written in a file by the product manager \\
        wmp WM\_plus.txt --key \{user\_name\} & Registering WM\_plus written in a file by the product manager \\
        wmp WM\_plus.txt --key \{user\_name\} & Registering WM\_minus written in a file by the product manager \\
        method inputs.txt --key \{user\_name\} & Registering methods written in a file by the data scientist \\
        qos compute.txt & Computing QoS score of all purposes written in a file and registering the ranked result \\
        rank \{purpose\_name\} & Getting the QoS score rank of all the registered methods for a given purpose \\
        ranks & Getting the QoS score rank of all the registered purposes \\
        audit --key \{user\_name\} & Auditing the files given to the auditor and comparing with the registered data in the ledger by the auditor, showing the audit process, and registering the final audit result \\
        isConsistent & Getting the registered audit result \\
        methods & Getting the registered methods \\
        cws & Getting the registered category weights \\
        wmps & Getting the registered wm\_plus\\
        wmms & Getting the registered wm\_minus \\
        qis & Getting the registered quality indicators \\
        \bottomrule
    \end{tabular}
    \caption{Our application supported commands}
    \label{tab:commands}
    \end{table*}
    
    For example, Figure~\ref{fig:manager} shows how the product manager registers quality indicator weights based on a defined format in the table. The command "cw" indicates the registration of quality indicator weights, and the given file "weights.txt" which is shown in Figure~\ref{fig:categoryWeightFile} includes the weights of the data utility and the data privacy weights of the purposes that are to be registered in the ledger. The product manager has to enter its registered key name in the validator to verify its user. The given key "product\_manager" is the one that was created in the previous step.  
    \begin{figure}[htbp]
    \centerline{\includegraphics[width=\columnwidth]{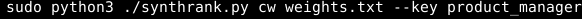}}
    \caption{The command of registering the quality indicator weights by the product manager}
    \label{fig:manager}
    \end{figure}
    \begin{figure}[htbp]
    \centerline{\includegraphics[width=\columnwidth]{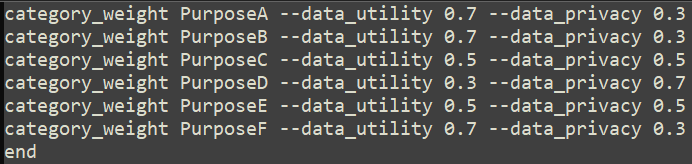}}
    \caption{Content of the weights.txt file for registering the quality indicator weights of different purposes}
    \label{fig:categoryWeightFile}
    \end{figure}

    \item \textbf{Creating and signing the transaction and batch:} After registering the command through the command line interface in the terminal, the client code handled by synthrank.py analyzes the input command to detect the action and its arguments. Then, the rest of the client functions are handled by the synthrank\_client.py file. The background code in this file encodes the action and the arguments received from synthrank.py and stores them as the payload of a transaction. Thereafter, the transaction header is created including the public key of the signer, the transaction family name and version, the permitted state addresses for reading the ledger and writing into it, the list of dependent transactions, the hash of the payload, public key of the batcher, and a nonce for the transaction. As a consequence, the transaction is created from the header and the encoded payload. Before submitting the transaction to the validator, the transaction is put inside a batch. Figure~\ref{fig:transaction} shows a snippet of the client code where it creates the transaction and the batch.
    \begin{figure}[htbp]
    \centerline{\includegraphics[width=\columnwidth]{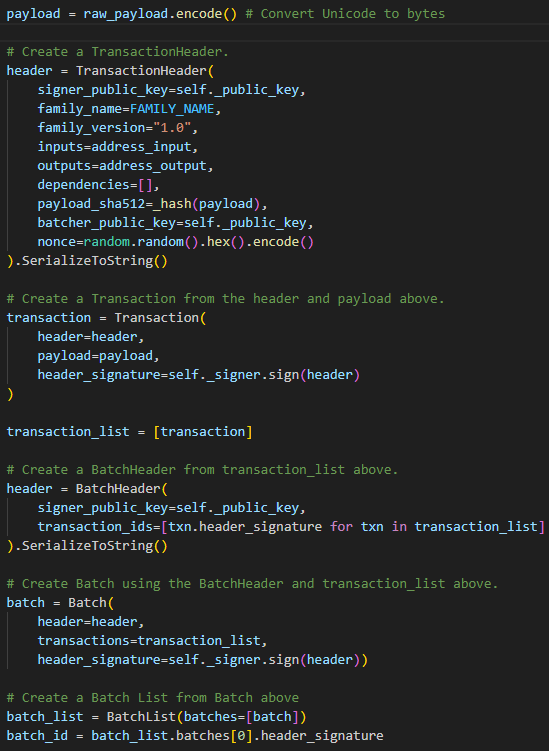}}
    \caption{Creating a transaction and a batch before submitting to the validator}
    \label{fig:transaction}
    \end{figure}
    
    \item \textbf{Submitting the batch to the validator:} The created batch from the previous step is sent to the validator via the REST API and the function shown in Figure~\ref{fig:rest}. This function is the last action done on the client side before waiting to get the response from the validator. 
    \begin{figure}[htbp]
    \centerline{\includegraphics[width=\columnwidth]{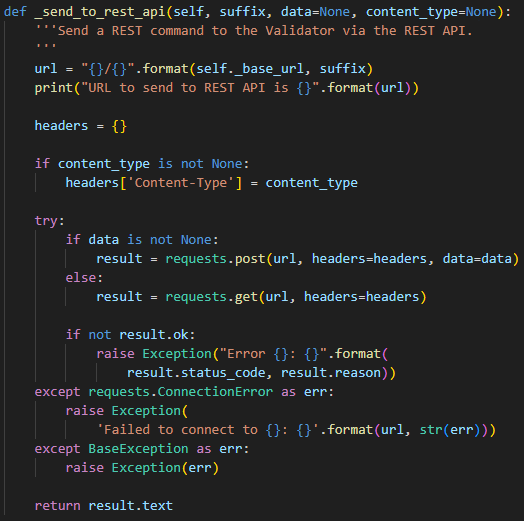}}
    \caption{Submitting batches to the validator via REST API}
    \label{fig:rest}
    \end{figure}

    \item \textbf{Getting acknowledgement response on the client:} After executing the transaction by the transaction processor and updating the global state based on the agreement between the validators, the client will be notified about its committed or rejected transaction on its terminal via a JSON response received from the REST API. The response shown for the client is like what is shown in Figure~\ref{fig:response}.
    \begin{figure}[htbp]
    \centerline{\includegraphics[width=\columnwidth]{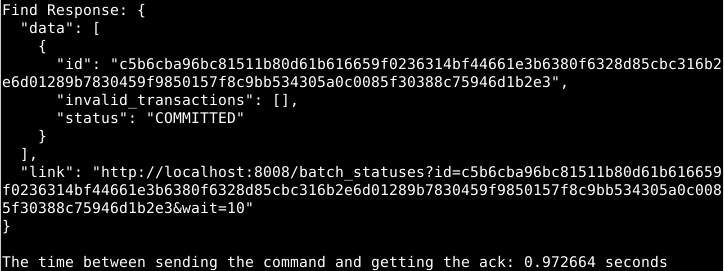}}
    \caption{Getting response of the committed transaction from the REST API}
    \label{fig:response}
    \end{figure}

    \item \textbf{Showing registered data for the client:} The client can monitor the result of the registered data by typing the defined commands. For example, according to Table~\ref{tab:commands}, the "ranks" command in our application gives the result of rankings computation for all the registered purposes. The client code is responsible for decoding the result after getting the REST API response. Figure~\ref{fig:ranks} shows a snippet of the output that the client gets by sending the command to the validator via REST API and receiving the state of the global state for rankings.
    \begin{figure}[htbp]
    \centerline{\includegraphics[width=\columnwidth]{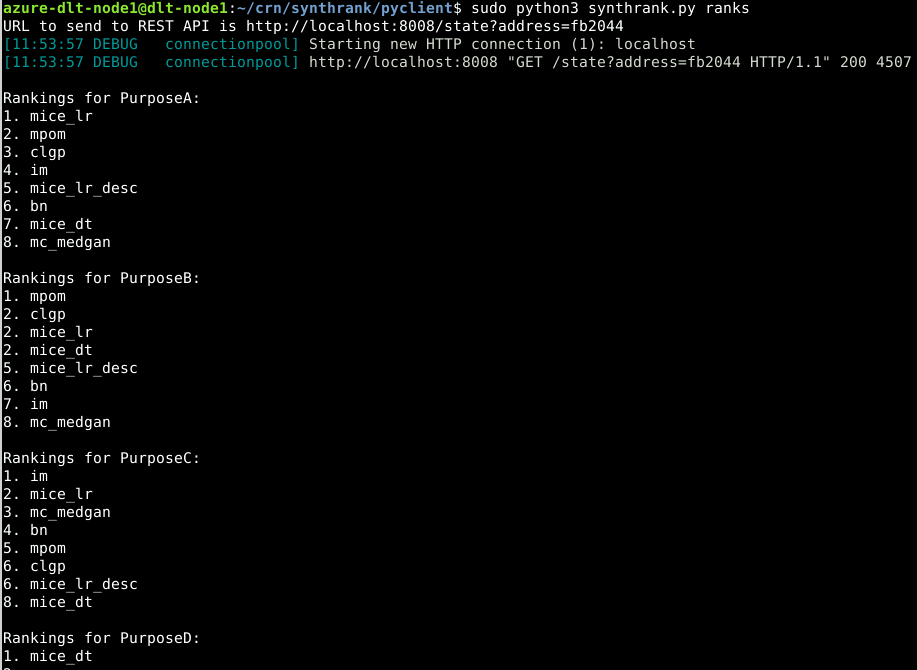}}
    \caption{Monitoring the registered data by the client}
    \label{fig:ranks}
    \end{figure}

    \item \textbf{Auditing by the authorized client:} The auditor is the only user who is allowed to check the consistency of the registered data into the ledger with the provided data by the product manager and the data scientist. The auditor gets the input files from the two other users and compare content of each with what is stored in the global state of the blockchain. This function is initiated by running the relevant command in Table~\ref{tab:commands}. Then, the auditor user will follow the steps as described in Algorithm~\ref{alg:DetailedAuditing} and get the output as shown in Figure~\ref{fig:audit}.    
    \begin{figure}[htbp]
    \centerline{\includegraphics[width=\columnwidth]{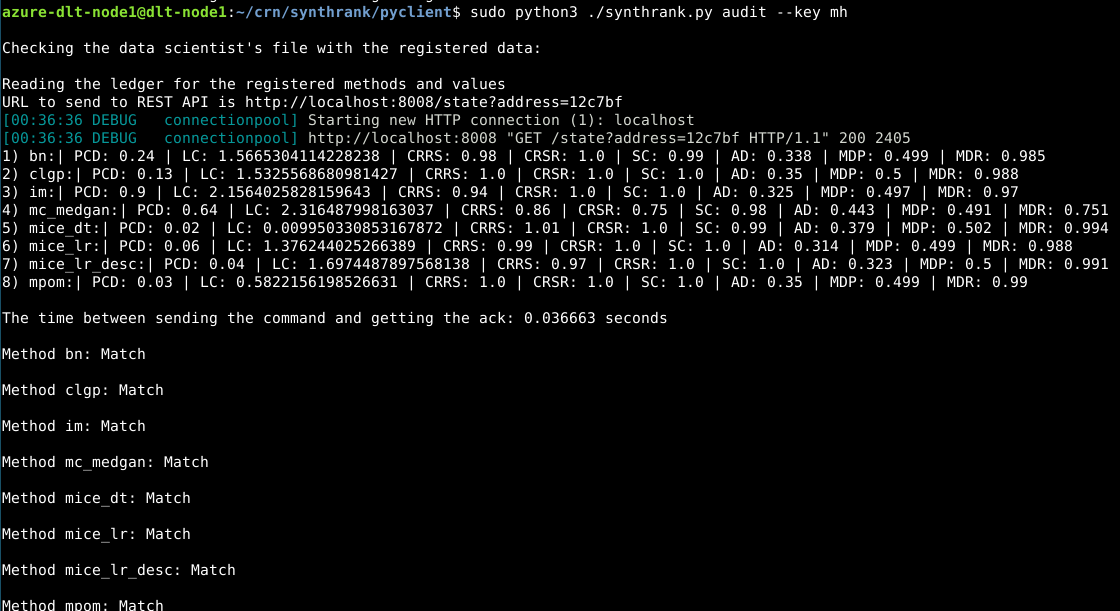}}
    \caption{Auditing the consistency between the registered data and the given files to the auditor}
    \label{fig:audit}
    \end{figure}

\end{enumerate}
\section{Experiments and Results}
\label{sec:experiments}
The experiments conducted in this study aim to validate the ability of the proposed framework to classify synthetic data generators for various purposes/scenarios. The approach focuses on evaluating these generators against a set of Quality Indicators (QIs) and metrics, reflecting their performance in terms of similarity and data privacy.

In addition to demonstrate the framework's ability to rank synthetic data generators, experiments are conducted with Sawtooth~\cite{Sawtooth} to demonstrate the workflow with various user roles, product manager, data scientists, blockchain developer and auditor. The details of the experimental setup are discussed in Section~\ref{sec:setup}.

To maintain clarity and alignment for the reader, the terminologies used in this section for synthetic data generators, quality indicators, and associated metrics are consistent with those used in~\cite{goncalves2020generation}, from which the data sets for our experiments were borrowed.

\begin{table}[h!]
    \centering
    \begin{tabular}{cp{5cm}}
        \toprule
        Notation & Description \\
        \midrule
        PCD & Pairwise Correlation Distance \\
        LC & Log Cluster \\
        CRRS & Cross classification train on real, test on synthetic \\
        CRSR & Cross classification train on synthetic, test on real \\
        SC & Support Coverage \\
        AD & Attribute Disclosure \\
        MDP & Membership Disclosure Precision \\
        MDR & Membership Disclosure Recall \\
        \bottomrule
    \end{tabular}
    \caption{Notations for Data Utility and Privacy Metrics. The definitions of PCD, LC, CRRS, CRSR, SC, MDP, and MDR are available in~\cite{goncalves2020generation}.}
    \label{tab:metrics}
\end{table}

\begin{table}[h!]
    \centering
    \begin{tabular}{cp{5cm}}
        \toprule
        Notation & Description \\
        \midrule
        im & Independent Marginals  \\
        bn & Bayesian networks  \\
        mpom & Mixture of the product of multinomials \\
        clgp & Categorical latent Gaussian process \\
        mc\_medgan & Multi categorical extension to General Adversarial networks \\
        mice\_lr & Multiple Imputations by Chained Equations with Logistic Regressions and ordered by the number of categories in an ascending manner \\
        mice\_lr\_desc & Multiple Imputations by Chained Equations with Logistic Regressions and ordered by the number of categories in descending manner \\
        mice\_dt & Multiple Imputations by Chained Equations with Decision Tree as classifier in an ascending order  \\
        \bottomrule
    \end{tabular}
    \caption{Notations and Descriptions for various Synthetic Data Generators. The detailed definitions of im, bn, mpom, clgp, mc\_medgan, mice\_lr, mice\_lr\_desc, and mice\_dt are available in~\cite{goncalves2020generation}.}
    \label{tab:techniques}
\end{table}

\subsection{Input Datasets}
Our input datasets, referenced in Table~\ref{tab:similarity} and Table~\ref{tab:data_privacy}, are derived mainly from the experimental results of Goncalves et al.~\cite{goncalves2020generation}. In their study, eight synthetic data generators were applied to the Breast Small SEER research data set (approximately 170,000 samples)~\cite{seer2023}, which produced values for various metrics. All data utility metrics in~\cite{goncalves2020generation}) for which values are available have been included in our tables. However, values only presented through the charts in the original work are not included due to the challenges in visual extraction. Our approach and experiments, as discussed in Section~\ref{sec:discussions}, are scalable to a large number of quality indicators and metrics per quality indicator. The notation for these metrics and the synthetic data generators are described in Tables~\ref{tab:metrics} and~\ref{tab:techniques}, respectively.

\begin{table}[h!]
    \centering
    \begin{tabular}{|l|c|c|c|c|c|}
        \hline
        Generators & PCD & LC & CRRS & CRSR & SC \\
        \hline
        im & 0.9 & -3.62 & 0.94 & 1.0 & 1.0 \\
        bn & 0.24 & -7.47 & 0.98 & 1.0 & 0.99 \\
        mpom & 0.03 & -10.47 & 1.0 & 1.0 & 1.0 \\
        clgp & 0.13 & -7.63 & 1.0 & 1.0 & 1.0 \\
        mc\_medgan & 0.64 & -2.12 & 0.86 & 0.75 & 0.98 \\
        mice\_lr & 0.06 & -8.3 & 0.99 & 1.0 & 1.0 \\
        mice\_lr\_desc & 0.04 & -6.8 & 0.97 & 1.0 & 1.0 \\
        mice\_dt & 0.02 & -11.25 & 1.01 & 1.0 & 0.99 \\
        \hline
    \end{tabular}
    \caption{Data utility metric values for BREAST-small SEER datasets for the eight synthetic data generators.}
    \label{tab:similarity}
\end{table}
\begin{table}[h!]
    \centering
    \begin{tabular}{|l|c|c|c|}
        \hline
        Generators & AD & MDP & MDR \\
        \hline
        im & 0.325 & 0.497 & 0.97 \\
        bn & 0.338 & 0.499 & 0.985 \\
        mpom & 0.350 & 0.499 & 0.99 \\
        clgp & 0.350 & 0.500 & 0.988 \\
        mc\_medgan & 0.443 & 0.491 & 0.751 \\
        mice\_lr & 0.314 & 0.499 & 0.988 \\
        mice\_lr\_desc & 0.323 & 0.500 & 0.991 \\
        mice\_dt & 0.379 & 0.502 & 0.994 \\
        \hline
    \end{tabular}
    \caption{Data Privacy metric values for BREAST-small SEER datasets for the eight synthetic data generators.}
    \label{tab:data_privacy}
\end{table}
\subsection{Quality Indicators, Metrics, Desired and Undesired Properties}
Our framework is engineered for flexibility and scalability, accommodating a variety of purposes, as detailed in Section~\ref{sec:discussions}. To holistically evaluate the effectiveness of synthetic data generators, we consider two primary Quality Indicators (QIs) in our experiments. 

\begin{itemize}
\item \textbf{Data Utility}: This QI gauges the extent to which synthetic data mirrors the statistical properties of real data. The specific metrics for this indicator are detailed in Table~\ref{tab:similarity}.
\item \textbf{Data Privacy}: This QI assesses the degree of privacy preservation inherent in synthetic data. Table~\ref{tab:data_privacy} lists the metrics that form this quality indicator.
\end{itemize}

We evaluated eight metrics in two quality indicators. PCD, CRRS, CRSR, LC, SC, AD, MDP, and MDR. These metrics are classified into three categories based on the interpretation of their values. For metrics such as PCD, AD, LC, MDP, and MDR, lower values indicate superior performance of the generators. In contrast, for the SC metric, higher values denote better performance. Lastly, for metrics such as CRRS and CRSR, values closer to 1 are indicative of optimal performance. 

Each of these eight metrics can be assigned to desired properties ($W^{M^{+}}$) or undesired properties ($W^{M^{-}}$) for each purpose ($p_i$). When a metric is categorized as a desired property, a generator's high performance in that metric positively influences (or rewards) its ranking. In contrast, if the same metric is classified as an undesired property, superior performance in that metric negatively impacts (or penalizes) the generator’s ranking.

These QIs along with the metrics and categorizing the metrics into desirable and undesirable properties are integral to our assessment, offering a comprehensive view of each generator's strengths and limitations for different purposes. 

\subsection{Experiment Objectives}
\label{sec:objective}
In this section, we outline the research objectives guiding our investigation into the development of our proposed ranking algorithm for synthetic data generators. This algorithm is designed to balance the importance of desired and undesired properties, tailoring these evaluations to fit specific purposes. These purposes are defined through unique weight distributions assigned to quality indicators and metrics, with further categorization of metrics into desirable and undesirable attributes. The research questions posed aim to dissect the algorithm's responsiveness and adaptability to these configurations, its comparative performance against the state-of-the-art ranking algorithm, and its integration within a permissioned blockchain framework, as outlined below.

\begin{enumerate}
\item How can the correctness of the proposed ranking algorithm be effectively validated? 

\item How do different weight distributions assigned to metrics influence the ranking result across various purposes?

\item What is the effect of varying weight distributions assigned to quality indicators on the ranking results for different purposes? 

\item How does categorizing a metric as either a desired or undesired property impact the ranking results? 

\item How does the performance of proposed solution compared to baseline solutions?

\item What is the performance of blockchain read and write operations within the context of this framework? 
\end{enumerate}

Our contributions and the interpretation of the experimental results are structured around these questions, establishing a solid foundation for evaluating our proposed framework.

To address the initial question regarding the validity of the proposed ranking algorithm, we established a set of ranking outcomes derived manually from~\cite{goncalves2020generation} as a benchmark for comparison with the output of the algorithm. For the validation process, we opted for Kendall's Tau~\cite{kendall1938new} and Spearman's Rho~\cite{spearman1961proof} as our chosen metrics. 

Kendall's Tau assesses the degree of concordance or discordance between the ranks assigned by two methods (in this case, the ground truth ranking outcome and estimated ranking outcome), considering all pairs of observations.  Kendall's Tau is calculated as follows: 
\begin{equation}
\label{eq:kendall}
    \tau = \frac{{\text{{Number of concordant pairs}} - \text{{Number of discordant pairs}}}}{{\text{{Total number of pairs}}}}
\end{equation}

In Equation~\ref{eq:kendall}, the number of concordant pairs represents the count of pairs of observations that have the same order in both rankings, while the number of discordant pairs represents the count of pairs of observations that have different orders in the two rankings. The total number of pairs represents the total number of pairs of observations. Kendall's Tau ranges from -1 to 1, where 1 indicates perfect agreement, -1 indicates perfect disagreement, and 0 indicates no association.

Similarly to Kendall's Tau, Spearman's Rho~\cite{spearman1961proof} calculates the strength and direction of the relationship between two rankings by considering the differences in ranks assigned to corresponding observations. It complements Kendall's Tau in providing a comprehensive assessment of the similarity between rankings.

Spearman's Rho (\( \rho \)) is calculated as:

\begin{equation}
\label{eq:spearman}
    \rho = 1 - \frac{{6 \times \sum(\text{{squared rank differences}})}}{{N \times (N^2 - 1)}}
\end{equation}
In Equation~\ref{eq:spearman}, \( \sum(\text{{squared rank differences}}) \) represents the sum of the squared differences in ranks between corresponding observations in the two rankings and \( N \) is the number of observations. The resulting value of \( \rho \) ranges from -1 to 1, where 1 indicates perfect agreement between the rankings, -1 indicates perfect disagreement, and 0 indicates no association.

These metrics were chosen because of their ability to precisely measure the correlation between the ranking results produced by two different configurations of the same ranking algorithm or by two distinct ranking algorithms in identical settings.

By addressing the first four questions, we validate the innovative aspects of our ranking algorithm. This includes its ability to dynamically adjust to different purposes through variable weight distributions and metric categorizations, showcasing its adaptability and the nuanced approach it offers towards evaluating synthetic data generators.

By comparing our proposed ranking algorithm with baseline ranking algorithms, we demonstrate that our algorithm provides more substantial benefits beyond the state of the art.

Furthermore, by investigating the performance of blockchain operations within our framework, we illustrate the practicality of embedding our ranking algorithm in a permissioned blockchain. This integration improves accountability, auditability, and transparency in the ranking process, representing a significant advancement over existing ranking methodologies.

Through all the research questions addressed, the results from our experiments will provide evidence supporting our contributions, demonstrating the proposed framework's ability to advance the state-of-the-art in synthetic data generator ranking.

\subsection{Configurations of the Proposed Framework for Experiments}
In this section, we present the configurations, including weight distributions for metrics and quality indicators, as well as the categorization of metrics into desired and undesired properties, used to address each of the research questions discussed in Section~\ref{sec:objective}. Given the absence of sufficient real-world data for weight distributions and categorizations, we embarked on a sensitivity analysis to rigorously evaluate the impacts of variations in weight distributions and categorization of metrics. This methodological approach ensures a robust and scientifically rigorous examination of the configurations' sensitivity, thereby enabling a comprehensive understanding of their influence under different hypothetical scenarios.  For all of our investigations, the metric values are derived from Tables~\ref{tab:similarity} and~\ref{tab:data_privacy}.

\begin{table*}[t]
\centering
\caption{Scenarios and Descriptions}
\label{tab:scenarios_descriptions}
\begin{tabular}{|c| p{5cm}| p{5cm}|}
\hline
\textbf{Scenario} & \textbf{Goals} & \textbf{Example applications} \\ \hline
Purpose A  & Ranking generators prioritizing a balance between data utility and data privacy properties with uniform distribution of quality indicator weights & Controlled public release of data \\ \hline
Purpose B  & Ranking generators prioritizing only data utility properties with uniform distribution of quality indicator weights & Education and training  \\ \hline
Purpose C  & Ranking generators prioritizing only data privacy properties with uniform distribution of quality indicator weights & Software development and testing   \\ \hline
\end{tabular}
\end{table*}

\begin{table*}[t]
\centering
\footnotesize
\caption{Uniform Distribution of Metrics Weights for Purpose A, Purpose B, Purpose C.}
\label{tab:correctness_metric_weight}
\begin{tabular}{|l|c|c|c|c|c|c|c|c|}
\hline
Purposes   & PCD   & CRRS & CRSR & AD   & MDP  & MDR  & LC   & SC   \\ \hline
Purpose A  & 0.125 &  0.125  & 0.125  & 0.125 & 0.125 & 0.125 & 0.125  & 0.125   \\ \hline
Purpose B  & 0.20 & 0.20 & 0.20  & - & - & - & 0.20 & 0.20 \\ \hline
Purpose C  &   -   & -  & - &  $\frac{1}{3}$ & $\frac{1}{3}$ & $\frac{1}{3}$& -   & -    \\ \hline
\end{tabular}
\end{table*}

\begin{table*}[t]
\centering
\footnotesize
\caption{Manually Curated Ground Truth Rankings}
\label{tab:groundTruthRanking}
\begin{tabular}{|c|c|c|c|c|c|c|c|c|}
\hline
\textbf{Purpose} & \textbf{mice\_lr} & \textbf{mpom} & \textbf{clgp} & \textbf{im} & \textbf{mice\_lr\_desc} & \textbf{bn} & \textbf{mice\_dt} & \textbf{mc\_medgan} \\ \hline
Purpose A         & 1                 & 2             & 3             & 4           & 5                       & 6            & 7                 & 8                   \\ \hline
Purpose B         & 2                 & 1             & 2             & 7           & 5                       & 6            & 2                 & 8                   \\ \hline
Purpose C         & 2                 & 5             & 6             & 1           & 6                       & 4            & 8                 & 3                   \\ \hline
\end{tabular}
\end{table*}

\begin{table*}[t]
\centering
\footnotesize
\caption{Uniform Distribution of Quality Indicator Weights}
\label{tab:correctness_QIWeights}
\begin{tabular}{|c|c|c|}
\hline
\textbf{Purposes} & \textbf{Data Utility} & \textbf{Data Privacy} \\ \hline
Purpose A         & 0.5          & 0.5      \\ \hline
Purpose B         & 0.5          & 0.5             \\ \hline
Purpose C         & 0.5          & 0.5             \\ \hline
\end{tabular}
\end{table*}

\begin{table*}[t]
\centering
\footnotesize
\caption{Predicted Rankings From the Proposed Algorithm}
\label{tab:predictedRanking}
\begin{tabular}{|c|c|c|c|c|c|c|c|c|}
\hline
\textbf{Purpose} & \textbf{mice\_lr} & \textbf{mpom} & \textbf{clgp} & \textbf{im} & \textbf{mice\_lr\_desc} & \textbf{bn} & \textbf{mice\_dt} & \textbf{mc\_medgan} \\ \hline
Purpose A         & 1                 & 2             & 3             & 4           & 5                       & 6            & 7                 & 8                   \\ \hline
Purpose B         & 2                 & 1             & 2             & 7           & 5                       & 6            & 2                 & 8                   \\ \hline
Purpose C         & 2                 & 5             & 6             & 1           & 6                       & 4            & 8                 & 3                   \\ \hline
\end{tabular}
\end{table*}

\begin{table}[h]
\centering
\caption{Kendall's Tau and Spearman's Rho Values}
\label{tab:kendall_spearman_values}
\begin{tabular}{|c|c|c|}
\hline
\textbf{Purpose} & \textbf{Kendall's Tau} & \textbf{Spearman's Rho} \\ \hline
Purpose A         & 1.0                     & 1.0                      \\ \hline
Purpose B         & 1.0                     & 1.0                      \\ \hline
Purpose C         & 1.0                     & 0.9999999999999998      \\ \hline
\end{tabular}
\end{table}

\begin{table*}[h]
\centering

\begin{minipage}{.45\linewidth}
\centering
\caption{Weight Distributions for $WM^+$ (Desired Properties):  A Specification for Understanding the Metric Weight Distributions' Impact on Rankings}
\label{tab:metric_weight_wm+}
\begin{tabular}{|l|c|c|c|c|c|c|c|c|}
\hline
Purpose   & PCD  & CRRS & CRSR & AD   & MDP  & MDR  & LC   & SC   \\ \hline
Purpose D  & 0.80 & - & 0.20 & - & -    & -    & -    & -    \\ \hline
Purpose D'  & 0.10 & -  & 0.90 & - & - & - & -    & -    \\ \hline
Purpose E  & - & - & - & 0.05& 0.05 & 0.90 & - & - \\ \hline
Purpose E'  & - & - & - & 0.90& 0.05 & 0.05 & - & - \\ \hline
\end{tabular}
\end{minipage}%
\hfill
\begin{minipage}{.45\linewidth}
\centering
\caption{Weight Distributions for $WM^-$ (Undesired Properties):  A Specification for Understanding the Metric Weight Distributions' Impact on Rankings}
\label{tab:metric_weight_wm-}
\begin{tabular}{|c|c|c|c|c|c|c|c|}
\hline
 PCD  & CRRS & CRSR & AD   & MDP  & MDR  & LC   & SC   \\ \hline
 -    & -    & -    & 0.20    & 0.80 & - & - & - \\ \hline
 -& - & - & 0.90& 0.10 &  -   & - & - \\ \hline
 0.2   & 0.2    & 0.2    & -& - & - & 0.2    & 0.2    \\ \hline
 0.02   & 0.02 & 0.04 & -    & -    & -    & 0.02    & 0.9    \\ \hline
\end{tabular}
\end{minipage}
\end{table*}

\begin{table*}[h]
\centering
\begin{minipage}{.45\linewidth}
\centering
\caption{Weight Distributions for $WM^+$ (Desired Properties): A Specification for Understanding the Quality Indicator Distributions' Impact on Rankings}
\label{tab:qi_weight_wm+}
\begin{tabular}{|l|c|c|c|c|c|c|c|c|}
\hline
Purpose   & PCD  & CRRS & CRSR & AD   & MDP  & MDR  & LC   & SC   \\ \hline
Purpose A  & 0.80 & - & - & - & 0.20 & -  & -    & -    \\ \hline
Purpose B  & 0.80 & -  & - & - & 0.20 & - & -    & -    \\ \hline
Purpose C  & - & - & - & 0.05& 0.05 & 0.90 & - & - \\ \hline
Purpose D  & - & - & - & 0.05& 0.05 & 0.90 & - & - \\ \hline
\end{tabular}
\end{minipage}%
\hfill
\begin{minipage}{.45\linewidth}
\centering
\caption{Weight Distributions for $WM^-$ (Undesired Properties): A Specification for Understanding the Quality Indicator Distributions' Impact on Rankings}
\label{tab:qi_weight_wm-}
\begin{tabular}{|c|c|c|c|c|c|c|c|}
\hline
 PCD  & CRRS & CRSR & AD   & MDP  & MDR  & LC   & SC   \\ \hline
 -    & -    & -    & 0.80   & - & - & 0.20 & - \\ \hline
 -& - & - & 0.80& - &  -   & 0.20 & - \\ \hline
 0.2   & 0.2    & 0.2    & -& - & - & 0.2    & 0.2    \\ \hline
 0.2   & 0.2 & 0.2 & -    & -    & -    & 0.2    & 0.2   \\ \hline
\end{tabular}
\end{minipage}
\end{table*}

\begin{table*}[h]
\centering
\begin{minipage}{.45\linewidth}
\centering
\caption{Weight Distributions for $WM^+$ (Desired Properties): A Specification for Understanding the Role of Desired and Undesired Properties}
\label{tab:desired_weight_wm+}
\begin{tabular}{|l|c|c|c|c|c|c|c|c|}
\hline
Purpose   & PCD  & CRRS & CRSR & AD   & MDP  & MDR  & LC   & SC   \\ \hline
Purpose H  & 0.80 & - & 0.20 & - & -    & -    & -    & -    \\ \hline
Purpose H'  &  & -  & - & 0.20 & 0.80 & - & -   & -    \\ \hline
Purpose I  & - & - & - & 0.05& 0.05 & 0.90 & - & - \\ \hline
Purpose I'  & 0.2 & - & - & - & - & - & 0.2 & 0.6 \\ \hline
\end{tabular}
\end{minipage}%
\hfill
\begin{minipage}{.45\linewidth}
\centering
\caption{Weight Distributions for $WM^-$ (Undesired Properties): A Specification for Understanding the Role of Desired and Undesired Properties}
\label{tab:desired_weight_wm-}
\begin{tabular}{|c|c|c|c|c|c|c|c|}
\hline
 PCD  & CRRS & CRSR & AD   & MDP  & MDR  & LC   & SC   \\ \hline
 -    & -    & -    & 0.20    & 0.80 & - & - & - \\ \hline
 0.80& - & 0.20 & -& - &  -   & - & - \\ \hline
0.2 & - & - & - & - & - & 0.2 & 0.6 \\ \hline
- & - & - & 0.05& 0.05 & 0.90 & - & - \\ \hline
\end{tabular}
\end{minipage}
\end{table*}

\begin{table*}[h]
\centering
\begin{minipage}{.45\linewidth}
\centering
\caption{Weight Distributions for $WM^+$ (Desired Properties): A Specification for Understanding the Performance of Sawtooth.}
\label{tab:blockchain_weight_wm+}
\begin{tabular}{|l|c|c|c|c|c|c|c|c|}
\hline
Purpose   & PCD  & CRRS & CRSR & AD   & MDP  & MDR  & LC   & SC   \\ \hline
Purpose A  & 0.40 & - & 0.60 & - & - & -  & -    & -    \\ \hline
Purpose B  & - & 0.40  & 0.30 & 0.30 & - & - & -    & -    \\ \hline
Purpose C  & - & - & - & - & - & 0.30  & - & 0.70 \\ \hline
Purpose D  & - & - & - & - & 0.40 & - & 0.60 & - \\ \hline
Purpose E  & - & - & 0.2 & 0.40 & 0.40 & - & - & - \\ \hline
Purpose F  & - & 0.10 & 0.10 & 0.80 & - & - & - & - \\ \hline
\end{tabular}
\end{minipage}%
\hfill
\begin{minipage}{.45\linewidth}
\centering
\caption{Weight Distributions for $WM^-$ (Undesired Properties): A Specification for Understanding the Performance of Sawtooth.}
\label{tab:blockchain_weight_wm-}
\begin{tabular}{|c|c|c|c|c|c|c|c|}
\hline
 PCD  & CRRS & CRSR & AD   & MDP  & MDR  & LC   & SC   \\ \hline
 -    & -    & -    & -   & 0.60 & 0.40 & 0.20 & - \\ \hline
 0.40 & - & - & -& 0.60 &  - & - & - \\ \hline
0.70  & -   & -& -& - & - & 0.30 & -    \\ \hline
-   & 0.30 & 0.70 & -    & -    & -    & -   & -   \\ \hline
- & 0.40 & - & -    & -    & -    & 0.60   & -   \\ \hline
0.60   & - & - & -    & 0.40    & -    & -    & -   \\ \hline
\end{tabular}
\end{minipage}
\end{table*}

\begin{table}[h]
\centering
\footnotesize
\begin{tabular}{|c|c|c|}
\hline
\textbf{Purposes} & \textbf{Data utility} & \textbf{Privacy} \\ \hline
Purpose A         & 0.5          & 0.5      \\ \hline
Purpose B         & 0.9          & 0.1             \\ \hline
Purpose C         & 0.1          & 0.9             \\ \hline
\end{tabular}
\caption{Quality Indicator (Category) Weights for Different Purposes: A Specification to Assess Variability in Ranking and to Evaluate Comparison of Proposed Method V1 and V2 with the Baseline Method. }
\label{tab:comparison_category_weights}
\end{table}

\begin{table}[h]
\centering
\footnotesize
\begin{tabular}{|c|c|c|}
\hline
\textbf{Purposes} & \textbf{Data utility} & \textbf{Privacy} \\ \hline
Purpose D         & 0.5                   & 0.5            \\ \hline
Purpose D'        & 0.5                   & 0.5              \\ \hline
Purpose E        & 0.5                   & 0.5              \\ \hline
Purpose E'         & 0.5                   & 0.5             \\ \hline
\end{tabular}
\caption{Quality Indicator (Category) Weights for Different Purposes:  A Specification for Understanding the Metric Weight Distributions' Impact on Rankings }
\label{tab:metric_category_weights}
\end{table}

\begin{table}[h]
\centering
\footnotesize
\begin{tabular}{|c|c|c|}
\hline
\textbf{Purposes} & \textbf{Data utility} & \textbf{Privacy} \\ \hline
Purpose F         & 0.9               & 0.1      \\ \hline
Purpose F'         & 0.1                  & 0.9             \\ \hline
Purpose G        & 0.1                  & 0.9              \\ \hline
Purpose G'         & 0.9                   & 0.1             \\ \hline
\end{tabular}
\caption{Quality Indicator (Category) Weights for Different Purposes:  A Specification for Understanding the Quality Indicator Weight Distributions' Impact on Rankings }
\label{tab:qi_category_weights}
\end{table}

\begin{table}[h]
\centering
\footnotesize
\begin{tabular}{|c|c|c|}
\hline
\textbf{Purposes} & \textbf{Data utility} & \textbf{Privacy} \\ \hline
Purpose H         & 0.5               & 0.5      \\ \hline
Purpose H         & 0.5                 & 0.5             \\ \hline
Purpose I         & 0.5                  & 0.5              \\ \hline
Purpose I'         & 0.5                   & 0.5             \\ \hline
\end{tabular}
\caption{Quality Indicator (Category) Weights for Different Purposes: A Specification for Understanding the role of  Desired and Undesired Properties }
\label{tab:desired_category_weights}
\end{table}

\begin{table}[h]
\centering
\footnotesize
\begin{tabular}{|c|c|c|}
\hline
\textbf{Purposes} & \textbf{Data utility} & \textbf{Privacy} \\ \hline
Purpose A         & 0.7          & 0.3      \\ \hline
Purpose B         & 0.7          & 0.3             \\ \hline
Purpose C         & 0.5          & 0.5             \\ \hline
Purpose D         & 0.3          & 0.7             \\ \hline
Purpose E         & 0.5          & 0.5             \\ \hline
Purpose F         & 0.7          & 0.3             \\ \hline
\end{tabular}
\caption{Quality Indicator (Category) Weights for Different Purposes: A Specification for Understanding the Performance of Sawtooth.}
\label{tab:blockchain_qi}
\end{table}

\begin{enumerate}
\item How can the correctness of the proposed ranking algorithm be effectively validated? 
\begin{itemize}
    \item To address this question, Section~\ref{sec:correctness} utilizes scenarios defined in Table~\ref{tab:scenarios_descriptions}, metric weights, and Quality Indicator weights from Tables~\ref{tab:correctness_metric_weight} and~\ref{tab:correctness_QIWeights}, respectively. Additionally, the ground-truth rankings from Table~\ref{tab:groundTruthRanking} are employed.

\end{itemize}

  \item  How do different weight distributions assigned to
metrics influence the ranking result for various
purposes?
  \begin{itemize}
    \item Section~\ref{sec:metric_weight} employs the configurations outlined in Tables~\ref{tab:metric_weight_wm+},~\ref{tab:metric_weight_wm-} and~\ref{tab:metric_category_weights} to explore this question.  
  \end{itemize}

 \item  What is the effect of varying weight distributions
assigned to quality indicators on ranking results
for different purposes?
  \begin{itemize}
    \item  To investigate this question, Section~\ref{sec:qi_Weight} uses the configurations in Tables~\ref{tab:qi_weight_wm+},~\ref{tab:qi_weight_wm-} and~\ref{tab:qi_category_weights}.  
  \end{itemize}

   \item  How does categorizing a metric as either a desired or
undesired property impact the ranking results?
  \begin{itemize}
    \item  Section~\ref{sec:desired} explores this question employing the setups described in Tables~\ref{tab:desired_weight_wm+},~\ref{tab:desired_weight_wm-} and~\ref{tab:desired_category_weights}.   
  \end{itemize}

   \item How does the performance of proposed solution compared to baseline solutions?
  \begin{itemize}
\item   To asssess the performance of the proposed solution relative to the baseline solutions, Section~\ref{sec:comparison} uses the predefined scenarios detailed in Table~\ref{tab:scenarios_descriptions}. For the implementation of baseline solutions, the metric weights are sourced from Table~\ref{tab:correctness_metric_weight}. Ground-truth rankings are available in Table~\ref{tab:groundTruthRanking} to establish a benchmark for comparison. In particular to the proposed solution, the weights from Table~\ref{tab:correctness_metric_weight} are specifically applied as undesired metric weights, adding a unique dimension to how the proposed solution is evaluated against the baselines. The QI weights for the proposed solution is sourced from Table~\ref{tab:correctness_QIWeights}.
  \end{itemize}

    \item  What is the performance of blockchain read and write
operations in the context of this framework?
  \begin{itemize}
    \item To answer this question, Section~\ref{sec:blockchain} uses the configurations presented in Table~\ref{tab:blockchain_weight_wm+},~\ref{tab:blockchain_weight_wm-}, and~\ref{tab:blockchain_qi}. 

  \end{itemize}  
\end{enumerate}

\subsection{Experimental setup}
\label{sec:setup}
We have established a permissioned blockchain network utilizing Sawtooth version 1.2 at four nodes. These nodes are hosted on virtual machines running Ubuntu 18.04, strategically located in the Azure Norway East region, to optimize performance and reliability for our application. The deployment of our Sawtooth network is critical for the operation of our application, necessitating that each node operates both the same consensus engine and transaction processor to maintain network integrity.

Our choice of the PBFT protocol (Practical Byzantine Fault Tolerance) as the consensus engine is pivotal, ensuring network reliability as long as the count of faulty nodes remains below one-third of the total network size (n). This requirement underlines the necessity of operating at least four Sawtooth nodes to effectively manage Byzantine failures. Tailored to our specific use case, we have allocated node responsibilities as follows: one node for data scientists, one for the Product Manager, and two for the Auditors, thereby ensuring a comprehensive coverage of all essential roles within our network. Below, we detail the procedural steps for configuring a robust Sawtooth network.

\begin{enumerate}

\item \textbf{Installing Sawtooth components on machines:} We installed Sawtooth 1.2 and its components on four Ubuntu 18.04 machines. After installing Sawtooth on the nodes, we have access to configure and run the Sawtooth components.

\item \textbf{Generating validator keys:} In order to involve each of the Sawtooth nodes in the process of validating transactions and blocks, we generated validator keys on each node as shown in Figure~\ref{fig:sawadm}. This command generates and stores private and public keys of the validator.  
    \begin{figure}[htbp]
    \centerline{\includegraphics[width=\columnwidth]{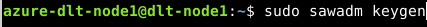}}
    \caption{Generating validator keys}
    \label{fig:sawadm}
    \end{figure}

By default, the public and private keys of the validator are accessible as presented in Figure~\ref{fig:validator_keys}. The validator's public key is used for finding a validator in the peer-to-peer network, and the validator's private key is used for creating the blocks.
    \begin{figure}[htbp]
    \centerline{\includegraphics[width=\columnwidth]{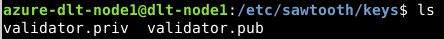}}
    \caption{Accessing validator keys}
    \label{fig:validator_keys}
    \end{figure}

\item \textbf{Creating the genesis block:} Before running all the Sawtooth nodes, we chose one of the nodes as the creator of the genesis block. A genesis block includes the initial blockchain network settings like the consensus algorithm and public keys of the other validators. Only the first node creates the genesis block and the other validator nodes, public keys of which are defined in the settings of the genesis block, access the network configuration while joining the network.

To be able to create the genesis block on the first node and set or change Sawtooth settings, we need to have a user key in our Sawtooth node. Figure~\ref{fig:user_keys} shows the command to generate user keys and the location where the keys are stored.
    \begin{figure}[htbp]
    \centerline{\includegraphics[width=\columnwidth]{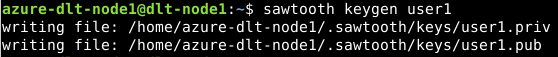}}
    \caption{Generating user keys}
    \label{fig:user_keys}
    \end{figure}

The user can then create a batch with the setting proposal for the genesis block and define the output file name of the generated settings (see Figure~\ref{fig:settings_proposal}).
    \begin{figure}[htbp]
    \centerline{\includegraphics[width=\columnwidth]{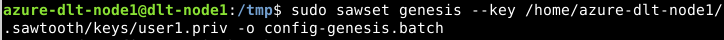}}
    \caption{Creating a settings proposal for the genesis block}
    \label{fig:settings_proposal}
    \end{figure}

The user needs to create another batch to initialize the consensus settings for the genesis block as shown in figure~\ref{fig:consensus_settings}. In the consensus settings command, the user defines the output file of the generated consensus settings batch, the consensus algorithm and its version, and also the public key of each PBFT member involved in the consensus process.   
    \begin{figure}[htbp]
    \centerline{\includegraphics[width=\columnwidth]{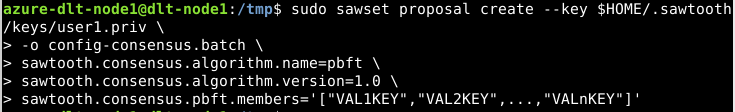}}
    \caption{Creating a consensus settings batch for the genesis block}
    \label{fig:consensus_settings}
    \end{figure}

Finally, the Sawtooth validator user combines the separate batches into a single genesis batch with the command in Figure~\ref{fig:genesis_batch} that will be committed in the genesis block after the first node has started.
    \begin{figure}[htbp]
    \centerline{\includegraphics[width=\columnwidth]{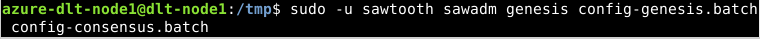}}
    \caption{Creating a genesis batch for the genesis block}
    \label{fig:genesis_batch}
    \end{figure}

\item \textbf{Configuring the validators:} All the Sawtooth component bind settings and endpoint settings are configured in the validator configuration file. Figure~\ref{fig:validator_config} shows how to create the validator configuration file based on the default example file avaliable in /etc/sawtooth directory.
    \begin{figure}[htbp]
    \centerline{\includegraphics[width=\columnwidth]{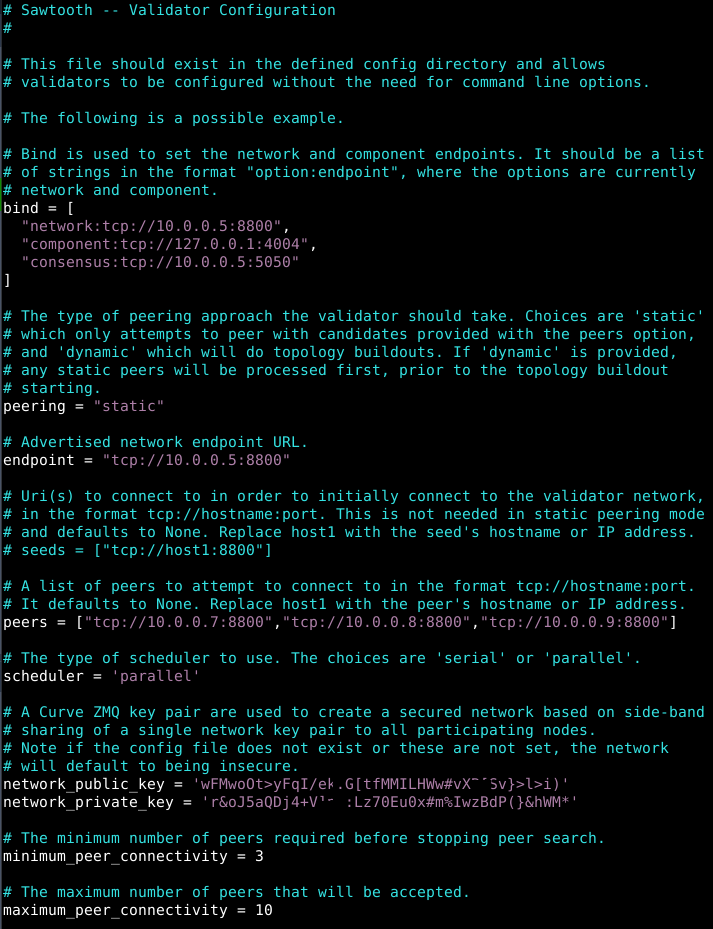}}
    \caption{Creating a validator configuration file}
    \label{fig:validator_config}
    \end{figure}

The network and component endpoints and the list of peers must be set based on the IP addresses of the Sawtooth components and the network peers. The rest of the settings in the file are set by default and can be modified according to the requirements of the use case. 
\item \textbf{Running the components:} The final step of deploying the Sawtooth network is to start the Sawtooth nodes. Starting the first node will create the genesis block and the initial settings of the network. So, it is important to run all components of the first node before the other peers. Therefore, for each node, the validator components are started in a separate terminal window by their related command:
\begin{itemize}
    \item{Validator:} sudo -u sawtooth sawtooth-validator -v
    \item{REST API:} sudo -u sawtooth sawtooth-rest-api -v 
    \item{Transaction processor settings:} sudo -u sawtooth settings-tp -v
    \item{Consensus engine:} sudo -u sawtooth pbft-engine -v --connect tcp://{IP}:{Port} 
\end{itemize}
\end{enumerate}
Now the Sawtooth network and its components are ready for deploying smart contracts on it. For this setup, we have written smart contracts in Python version 3.6.9, also known as transaction families in Sawtooth terminology. These smart contracts are designed for product managers, data scientists, and auditors to register their respective input onto the Sawtooth blockchain.

Input registration is expected to follow a sequential process. First, the product manager provides the specifications. Following this, the data scientist reviews the criteria set by the product manager, generates synthetic data using multiple generators, and then delivers the performance results of these synthetic data based on the metrics established by the product manager. As soon as the data scientist's inputs are recorded in Sawtooth, the smart contract implementing Algorithms 1 and 2 computes the ranking of generators and stores these results in Sawtooth ledger. An auditor then accesses files containing inputs collected by the product manager from stakeholders and those generated by the data scientists. These are then verified against the recorded inputs in Sawtooth according to Algorithm 3.

Sections~\ref{sec:smartcontractDeployment} and~\ref{sec:clientAPI} provide more detailed information about the implementation of smart contracts and the invocation of these smart contracts by clients (product managers, data scientists, and auditors).

\subsection{Results}

In this section, we discuss our findings on each of the research questions discussed in Section~\ref{sec:objective}.

\subsubsection{Correctness of the Proposed Ranking Algorithm}
\label{sec:correctness}

To validate the correctness of a proposed ranking algorithm, we compared the ground truth ranking with the ranking outcome of the proposed algorithm (refer to Table~\ref{tab:predictedRanking}) using Kendall's Tau and Spearman's Rho metrics. Kendall's Tau and Spearman's Rho are both measures of rank correlation used to assess the similarity between two rankings.

In Table~\ref{tab:kendall_spearman_values}, we present the results of Kendall's Tau and Spearman's Rho for all three scenarios. The findings suggest a perfect alignment between the estimated ranking and the ground truth. In particular, the ground truth did not specify any metric weights or Quality Indicator (QI) weights, nor did it categorize the metrics as desired or undesired for any scenario. Consequently, we categorized our metric weights as only desired, assuming a uniform weight distribution across all desired metrics and QI weights in the configurations of our proposed solution across all three scenarios. Since there was no categorization of metrics as undesired, undesired metric weights were set to 0, resulting in an undesired score of 0 and only desired scores were computed for each generator. Furthermore, given the uniformity of the desired metric weights and the QI weights, their impact on computing the final ranking result was negligible.

\subsubsection{Impact of Metric Weight Distributions}
\label{sec:metric_weight}

Our analysis focused on two pairs of purposes: a) Purpose D and Purpose D'; and b) Purpose E and Purpose E'. Each pair consists of identical metric types, uniform weight distributions of quality indicators, and the same categorizations of metrics into desired and undesired properties, with the primary distinction being the specific weight distribution of the metric.

\begin{table}[htbp]
    \centering
    \caption{Impact of Metric Weight Distributions on Similarity Ranking Scores }
    \begin{tabular}{lcc}
        \toprule
        \textbf{Scenarios} & \textbf{Kendall's Tau} & \textbf{Spearman's Rho} \\
        \midrule
        Purpose D \& D' & 0.357 & 0.381 \\
        Purpose E \& E' & -0.071 & -0.024 \\
        \bottomrule
    \end{tabular}
    \label{tab:metric_weight_comparison}
\end{table}

Table~\ref{tab:metric_weight_comparison} provides key insights. For the Purpose D and D' scenarios, where the metric weight distributions are notably different, the moderate positive correlation between rankings suggests that some generators may perform better than others in those metrics. This also indicates that certain generators exhibit stronger performance across the specified metrics, leading to a consistent pattern of rankings across different metric weight distributions. Essentially, generators performing well for one metric weight distribution (Purpose D) tend to perform well for the other (Purpose D') as well.

In contrast, the Purpose E and E' scenarios exhibit a very weak negative correlation between rankings. This suggests that certain generators, which may perform well under one set of metric weight distributions, may not maintain their performance when the distributions change. The high sensitivity of generators to changes in metric weights implies that these generators may have performance characteristics that are highly influenced by specific metrics and their associated weights.

Our results illustrate how the weight configurations and the selection of metrics can significantly impact the ranking outcome in the proposed solution. 
\subsubsection{Influence of Quality Indicators Weight Distributions}
\label{sec:qi_Weight}

Our analysis focused on two pairs of purposes: a) Purpose F and Purpose F'; and b) Purpose G and Purpose G'. Each pair consists of identical metric types, metric weight distributions, and categorizations of metrics into desired and undesired properties, with the primary distinction being the specific weight distributions of the quality indicator.

\begin{table}[htbp]
    \centering
    \caption{Impact of QI Weight Distributions on Similarity Ranking Scores }
    \begin{tabular}{lcc}
        \toprule
        \textbf{Scenarios} & \textbf{Kendall's Tau} & \textbf{Spearman's Rho} \\
        \midrule
        Purpose F \& F' & 0.07 & 0.11 \\
        Purpose G \& G' & 0.14 & 0.21 \\
        \bottomrule
    \end{tabular}
    \label{tab:QI_weight_comparison}
\end{table}

The analysis in Table~\ref{tab:QI_weight_comparison} reveals that despite the drastic changes in the weights of the quality indicator (QI) between pairs of purposes (F \& F', G \& G'), there is still some degree of similarity in the ranking outputs. However, the correlations indicate that these similarities are relatively weak, suggesting that changes in QI weights have influenced the ranking outcomes in our proposed solution to some extent. However, the slightly higher correlation coefficients observed for Purpose G \& G' compared to Purpose F \& F' suggest a slightly stronger consistency in the classification of outputs when the desired and undesired properties are separated into distinct categories (such as data utility and data privacy metrics).

\subsubsection{Role of Desired and Undesired Properties}
\label{sec:desired}

Our analysis focused on two pairs of purpose: a) Purpose H and Purpose H'; and b) Purpose I and Purpose I'. For each pair, we swapped the metrics in the desired properties with those in undesired properties, while maintaining the same metric weights and quality indicator weight distributions. 

\begin{table}[htbp]
    \centering
    \caption{Impact of Categorizing Metrics as Desired and Undesired Properties on Similarity Ranking Scores }
    \begin{tabular}{lcc}
        \toprule
        \textbf{Scenarios} & \textbf{Kendall's Tau} & \textbf{Spearman's Rho} \\
        \midrule
        Purpose H \& H' & -0.28 & -0.42 \\
        Purpose I \& I' & -0.57 & -0.76 \\
        \bottomrule
    \end{tabular}
    \label{tab:desired_weight_comparison}
\end{table}

Table~\ref{tab:desired_weight_comparison} provides crucial insight into the analysis and highlights the substantial impact of switching metrics between desired and undesired properties on the similarity ranking scores for both pairs of purposes. In particular, the comparison indicates that Purpose I \& I' are more profoundly affected compared to Purpose H \& H'.

The notably larger negative values observed for both Kendall's Tau and Spearman's Rho metrics suggest that generators within the Purpose I \& I' exhibit heightened sensitivity to changes in categorization of metrics compared to those in the H \& H'.

\subsubsection{Comparison with Baselines}
\label{sec:comparison}

We compare the proposed solution with two baselines in three scenarios presented in Table~\ref{tab:scenarios_descriptions} : a) weighted sum normalized average score b) weighted sum rank-derived score. 

In weighted sum normalized average score, each metric is first normalized according to whether higher or lower values are preferable, thereby ensuring that all metrics are on a comparable scale. This is followed by applying weights that reflect the importance of each metric within the given scenario. The weighted values are then averaged to produce a single score that reflects the overall performance of each generator on the metrics considered. 

\begin{table}[htbp]
    \centering
    \caption{Comparison between weighted normalized average score and ground truth}
    \begin{tabular}{lcc}
        \toprule
        \textbf{Scenarios} & \textbf{Kendall's Tau} & \textbf{Spearman's Rho} \\
        \midrule
        Purpose A & 0.28 & 0.38 \\
        Purpose B  & 0.79 & 0.90 \\
        Purpose C & 0.61 & 0.79 \\
        \bottomrule
    \end{tabular}
    \label{tab:baseline1}
\end{table}

Table~\ref{tab:baseline1} summarizes the correlation between the weighted normalized average scores and the ground truth in three scenarios.

Purpose A demonstrates relatively low correlation values with Kendall's Tau and Spearman's Rho, indicating weak alignment between the weighted scores and the ground truth. Scenario B shows significantly higher correlation values, with Kendall's Tau and Spearman's Rho, suggesting strong agreement between the weighted scores and the ground truth.

Purpose C exhibits moderate to high correlation values with Kendall's Tau and Spearman's Rho, indicating a good degree of consistency between the weighted scores and the ground truth. Overall, the results indicate varying levels of agreement between the algorithm's outputs and the actual ground truth across different scenarios, with Purpose B displaying the best alignment and Purpose A the least.

In essence, the weighted average score approach does not perform as well as the proposed solution, which demonstrated excellent consistency of ranking outcomes between the proposed solution and the ground truth (refer to Section~\ref{sec:correctness}).

\begin{table}[htbp]
    \centering
    \caption{Comparison between weighted rank derived score and ground truth}
    \begin{tabular}{lcc}
        \toprule
        \textbf{Scenarios} & \textbf{Kendall's Tau} & \textbf{Spearman's Rho} \\
        \midrule
        Purpose A & 0.9999999999999998 & 1 \\
        Purpose B  & 1 & 1 \\
        Purpose C & 1 & 0.9999999999999998 \\
        \bottomrule
    \end{tabular}
    \label{tab:baseline2}
\end{table}

Our second baseline solution for comparison is weighted sum rank-derived score. In our baseline implementation, we have adopted an inspired variant of the ranking methodology proposed by~\cite{yan2022multifaceted}. While~\cite{yan2022multifaceted} focus on ranking models based on individual dataset performances, our adaptation works with preaveraged data. This shift addresses the practical constraint of using existing average metric scores, avoiding the need for a granular dataset analysis. Consequently, our implementation modifies the ranking mechanism to suit these average scores while preserving the fundamental principle. 

In our baseline implementation of weighted rank-derived score, instead of directly averaging the metric values, each metric is first ranked for all generators. These ranks are then weighted using the same set of predefined importance weights as in the normalized average score approach. The final score for each model is derived by calculating the average of these weighted ranks. This method focuses on the relative standing of each generator rather than their absolute metric values. 

Table~\ref{tab:baseline2} shows that, in all scenarios (Purpose A, Purpose B and Purpose C), both Kendall's Tau and Spearman's Rho are at scores of 1. This indicates complete alignment between the weighted rank-derived scores and the ground truth across all scenarios, suggesting that the weighted rank-derived scores perform as well as the proposed solution.

\begin{table}[htbp]
    \centering
    \caption{Comparison between the weighted rank-derived score and the proposed solution, when metrics are categorized solely as undesired in the proposed solution and the same metrics are categorized as desired in the baseline.}
    \begin{tabular}{lcc}
        \toprule
        \textbf{Scenarios} & \textbf{Kendall's Tau} & \textbf{Spearman's Rho} \\
        \midrule
        Purpose A & -0.28 & -0.38 \\
        Purpose B  & -0.79 & -0.90 \\
        Purpose C & -0.61 & -0.79 \\
        \bottomrule
    \end{tabular}
    \label{tab:baseline2}
\end{table}
Table~\ref{tab:baseline2} provides a comparison between the weighted rank-derived scores and the proposed solution in three different scenarios (A, B, and C). In these cases, metrics are categorized solely as undesired in the proposed solution, while the same metrics are treated as desired in the baseline.

The correlations, measured using Kendall's Tau and Spearman's Rho, are negative across all scenarios. These negative values indicate an inverse relationship between the weighted rank-derived scores and the proposed solution, suggesting significant differences in how each approach handles metrics classified as undesired versus desired.

In the baseline weighted sum ranking score approach, scores are derived considering only desired attributes, with varying weights assigned to modulate the significance of these attributes in determining rankings. This method primarily adjusts the rankings by increasing the influence of more important metrics through higher weights and decreasing the influence of less critical metrics through lower weights. However, this approach lacks the flexibility to adequately express the negative impact of undesired attributes—metrics that should detract from a score rather than contribute to it. By categorizing certain metrics as undesired and adjusting the scores to reflect their negative impact, we can more accurately represent their true influence. The proposed method allows active reduction of the score based on the presence of these undesirable characteristics, which can significantly alter ranking outcomes, especially when such attributes are pronounced.

In essence, these methodological variances can lead to differing ranking outcomes between the baseline and proposed methods for the same purpose, reflecting a more multifaceted evaluation in the proposed method. The weighted ranking-derived score method may be suitable for scenarios prioritizing simplicity and direct comparisons (without the need of undesired properties and QI weights), while the proposed method is advantageous in contexts requiring a detailed and nuanced understanding of performance across diverse metrics.

\subsubsection{Performance of Read and Write Operations to Sawtooth}
\label{sec:blockchain}

\begin{figure*}[htbp]
\centering
\includegraphics[width=\textwidth]{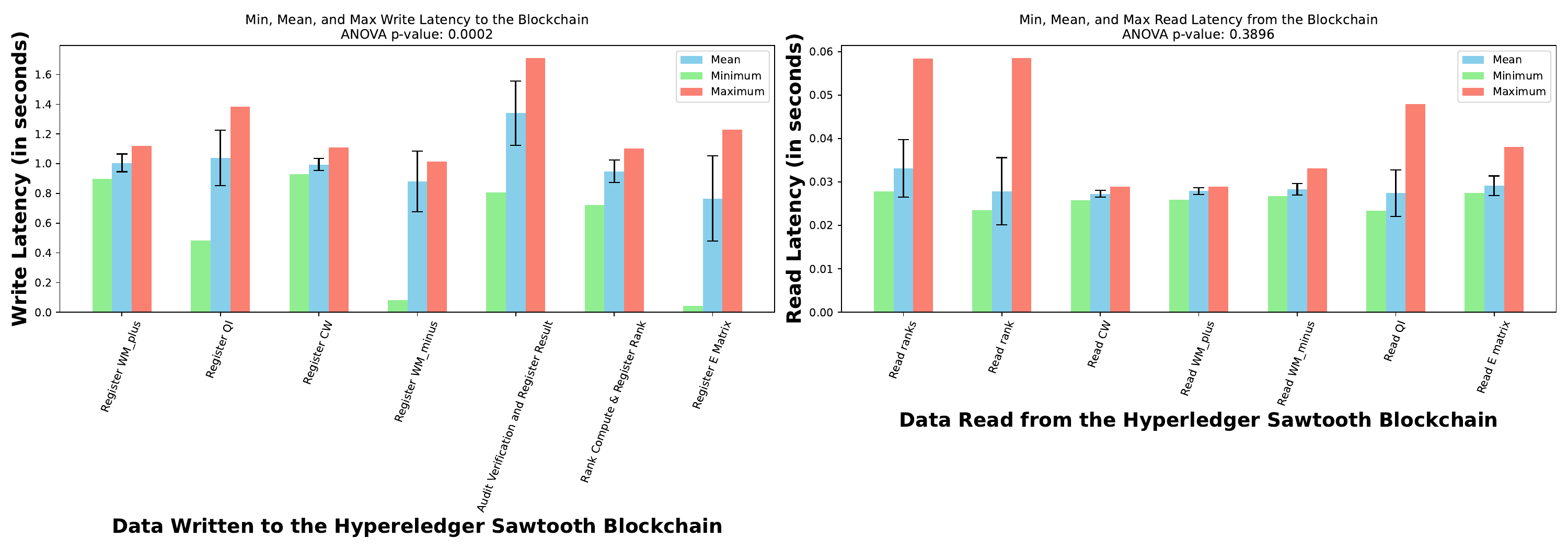}
\caption{Write Latency and Read Latency of Sawtooth. Table~\ref{tab:blockchain} defines the X axis for write latency. Table~\ref{tab:blockchain_read} defines the X axis for read latency.}
\label{fig:blockchain}
\end{figure*}

The primary objective of our performance experiments is to characterize the latency involved in read and write operations associated with our ranking and audit algorithms on the Sawtooth blockchain. The experimental setup is described in Section~\ref{sec:setup}. 

In our write-latency characterization experiment, we measured and compared the write latencies of inputs from data scientists, product managers, and auditors. These data sets, written on the blockchain, are described in Table~\ref{tab:blockchain} and referenced on the X-axis in "Write Latency to the Blockchain" (Figure~\ref{fig:blockchain}).

\begin{table}[h!]
    \centering
    \begin{tabular}{p{3cm}|p{4.8cm}}
        \toprule
        Dataset Notations & X Axis Description of Write Latency in Figure~\ref{fig:blockchain} \\
        \midrule
        Register E Matrix & E matrix from data scientists, where each row represents a generator and each column represents the results of the metric evaluated on the generator. The write latency described in Figure~\ref{fig:blockchain} for DS Input (E) represents the time taken to record one row of the E matrix in the blockchain.
 \\ \hline 
        Audit Verification and Register Result & The time taken to execute Algorithm 3. The execution of Algorithm 3 involves parsing the input files provided by the data scientist and the product manager, which require validation against blockchain records. Additionally, this process includes retrieving the rank data from the ledger and cross-verifying it against the auditor's dynamically computed ranks. Upon completion of these steps, the final result of the audit is recorded on the Sawtooth blockchain. \\ \hline 
        Register QI & The time taken to record each row in the QI matrix, where each row represents a Quality Indicator, and each column represents metrics that include both desired and undesired properties. This input is provided by the product manager.\\ \hline 
        Register WM\_minus & The time taken to record each row in the $WM^{-}$ matrix, where each row represents the purpose, and each column represents the weights for metrics associated with undesired properties. This input is provided by the product manager.\\ \hline 
        Register WM\_plus & The time taken to record each row in the $WM^{+}$ matrix, where each row represents the purpose, and each column represents the weights for metrics associated with desired properties. This input is provided by the product manager. \\ \hline 
        Register CW & The time taken to record each row in the QI weight matrix where the row represents the purpose and the column represents the weights for the QIs that are relevant for the purpose. This input is provided by the product manager. \\ \hline 
        Rank Compute and Register Rank & The time taken to compute the ranking per purpose and record the ranking per purpose in the blockchain. The computations and registering the ranks to the blockchain is done by smart contracts. \\ 
        
        \bottomrule
    \end{tabular}
    \caption{Notations and descriptions for the datasets that are written to the blockchain}. 
    \label{tab:blockchain}
\end{table}

In this experiment, each data set was written to the blockchain 10 times, resulting in 10 samples per data set. The results, presented in Figure~\ref{fig:blockchain}, revealed an ANOVA p-value of 0.0002. This suggests statistically significant difference in write latencies across the seven datasets, as the p-value is less than the conventional 0.05 threshold. Our experiments also revealed that when we compared the write latencies without the audit part (Audit Verification and Result), the ANOVA p-value was 0.1194 indicating that there was no statistically significant write latency across the six datasets. This concludes that audit verification and writing the result back to the blockchain is relatively more time-consuming than other operations, which is expected due to the nature of the work involved in the audit phase. 

The confidence intervals 95\% for each data set, illustrated in Figure~\ref{fig:blockchain}, provide ranges within which the true mean latencies are estimated to lie, reflecting the precision of our latency measurements.

Similarly, we conducted read latency characterization experiments (as shown in the "Read Latency from the Blockchain" Figure~\ref{fig:blockchain}). These experiments measured the read latency for various datasets, as described in Table~\ref{tab:blockchain_read}, representing different data retrieval scenarios from the blockchain.

The read latency experiments' ANOVA test resulted in a p-value of 0.3896, indicating that there were no significant differences in read latencies across the datasets. This outcome suggests that the observed variations in latency are likely attributable to random chance. Furthermore, the 95\% confidence intervals for these datasets, as detailed in Figure~\ref{fig:blockchain}, offer insight into the variability and reliability of our latency measurements.

Notably, the write latencies generally tend to be higher than the read latencies. This difference can be attributed to the fact that write operations involve the execution of consensus protocols with other nodes in the blockchain network, whereas read operations require access from only a single node in the network.

In essence, while this pilot study has provided valuable insight in characterizing the read and write latencies, future experiments with increased sample sizes and with different datasets can help deepen our understanding and validate our findings. 

\begin{table}[h!]
    \centering
    \begin{tabular}{c|p{4.8cm}}
        \toprule
        Dataset Notations & X Axis Description of Read Latency in Figure~\ref{fig:blockchain} \\
        \midrule
        Read QI and Metrics &  Time taken to read all quality indicators and associated metrics representing both desired and undesired properties of the blockchain. 
 \\ \hline 
        Read ranks &  Time taken to read ranks of all generators for all purposes from the blockchain. \\ \hline 
        Read rank  & Time taken to read the rank list of generators for a specific purpose from the blockchain.\\ \hline 
       Read WM\_plus & Time taken to read all rows in the $W^{M^{+}}$ matrix from the blockchain, where each row represents the purpose, and each column represents weights for metrics associated with desired properties.\\ \hline 
        Read WM\_minus & Time taken to read all rows in the $W^{M^{-}}$ matrix from the blockchain, where each row represents the purpose, and each column represents weights for metrics associated with undesired properties. \\ \hline 
        Read E matrix & Time taken to read the entire matrix E on the blockchain, with each row representing a generator and each column representing the results of the evaluated metric on the generator.\\ \hline 
        Read CW  & Time taken to read the entire QI weight matrix on the blockchain where each row represents the purpose and each column represents the weight associated with the QIs relevant to that purpose. \\ 
        
        \bottomrule
    \end{tabular}
    \caption{Notations and descriptions for the datasets read from the blockchain}. 
    \label{tab:blockchain_read}
\end{table}

\subsubsection{Implications and Observations}
The results of our experimental investigations validated the efficacy of the proposed framework in accurately ranking synthetic data generators according to the defined specifications for various purposes. This framework demonstrates exceptional adaptability to a variety of metric categorizations and can manage distinct configurations for each purpose. In addition, its flexibility extends to accommodating different transformation functions. 

Importantly, the framework integrates with a consortium blockchain technology, enhancing its capabilities in terms of accountability, auditability, and transparency. These features are particularly vital in contexts where the compliance regulations are very demanding. 

In summary, the proposed framework stands out for its multifaceted approach. Its versatility in handling different scenario-specific needs, responsiveness to metric weight distributions, and balanced consideration of desired and undesired properties make it a comprehensive tool for guiding the selection of synthetic data generators in sensitive and varied applications like healthcare.
\section{Discussions}
\label{sec:discussions}
In this section, we discuss the correctness, scalability, generalization, computational complexity, and limitations of the proposed framework. Furthermore, we discuss how the proposed framework mitigates risks from the security threats introduced in Section~\ref{sec:threats}.
\subsection{Correctness of the framework}

\begin{lemma}[Algorithm~\ref{alg:RankAcrossPurposes}]
\label{lemma_correctness:alg1}
For each purpose $p_i$, Algorithm~\ref{alg:RankAcrossPurposes} correctly ranks generators based on their performance in a set of metrics, considering the weights of desirable and undesirable properties and quality indicators. 
\end{lemma}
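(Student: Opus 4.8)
The plan is to prove the lemma by decomposing the claim into two parts that together constitute ``correct ranking'': (i) for each purpose $p_i$ and each generator $T_x$, the numerical score the algorithm assigns equals the composite score $\text{Score}_{T_x}$ prescribed by Equation~\ref{eq:ObjectiveFunction}; and (ii) the output list orders the generators by that score, so that list positions faithfully reflect relative suitability. Since Algorithm~\ref{alg:RankAcrossPurposes} treats each purpose independently---extracting a purpose-specific slice of every weight matrix and invoking the subroutines only on that slice---I would first argue that correctness for a single arbitrary $p_i$ implies correctness across all of $P$, thereby reducing the problem to a single iteration of the outer loop.

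For part (i) I would proceed compositionally through the two subroutines. First I establish a correctness statement for Algorithm~\ref{alg:transformation}: each entry $E^{+}_{p_i}[t_k][m_j]$ (resp.\ $E^{-}_{p_i}[t_k][m_j]$) equals the inverted rank of generator $t_k$ on metric $m_j$, and, by the choice of sort direction per category, this inverted rank is monotone in true performance---ascending sort for $M_{\text{LB}}$, descending for $M_{\text{HB}}$, and proximity-based for $M_{\text{CC}}$---so that after inversion a larger stored value uniformly denotes better performance. The non-overlap constraint of Equation~\ref{eq:NonOverlappingMetrics} is precisely what guarantees that each metric is routed to exactly one of the two matrices, so the partition into $E^{+}_{p_i}$ and $E^{-}_{p_i}$ is well defined. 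Given this, I would then show that the two accumulation loops of Algorithm~\ref{alg:ranking} compute exactly the weighted double sums of Equation~\ref{eq:ObjectiveFunction}: the desired block sums $E^{+}_{p_i}[t_k][m_j]\cdot W^{M^{+}}_{p_i}[m_j]\cdot W_{p_i}[\text{category }m_j]$ over $m_j\in M^{+}_{p_i}$, the undesired block sums the analogous product over $M^{-}_{p_i}$, and the overall score is their difference, matching the two bracketed terms of the objective.

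Part (ii) is then immediate: the algorithm sorts the (score, $t_k$) pairs in descending order of overall score, and mapping descending score to ascending rank position is an order isomorphism, so the lowest numerical rank is assigned to the highest-scoring (hence most suitable) generator, exactly as the output specification requires. Chaining the per-purpose correctness back through the outer loop of Algorithm~\ref{alg:RankAcrossPurposes} completes the argument.

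The main obstacle I anticipate is the index reconciliation between the closed-form objective of Equation~\ref{eq:ObjectiveFunction}, where the quality-indicator weight $W_{ij}$ sits inside an outer sum over all indicators $j$, and the algorithm, which applies a single category weight $W_{p_i}[\text{category }m_j]$ keyed to the indicator that owns metric $m_j$. Reconciling these requires making explicit the membership map sending each metric to its governing quality indicator and verifying that the algorithm's per-metric lookup reproduces the intended QI-weighted aggregation without double counting. I would also need to dispose of well-definedness issues such as tie-breaking in the ranking sort and the treatment of metrics carrying zero weight. Establishing this correspondence cleanly---rather than the routine arithmetic of the sums---is where the real care is needed.
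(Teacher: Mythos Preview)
Your proposal is correct and follows essentially the same compositional route as the paper: establish that the \textsc{Transformation} subroutine produces well-formed $E^{+}_{p_i}$, $E^{-}_{p_i}$ matrices whose entries are monotone in true performance, then that \textsc{RankGenerators} computes the weighted desired-minus-undesired score and sorts by it, and conclude per-purpose correctness carries through the outer loop. The paper's argument is a brief proof sketch that asserts each of these steps without the index-level detail; your version is more rigorous---in particular your flagging of the QI-weight index reconciliation and tie-breaking issues goes beyond what the paper addresses---but the underlying decomposition and reasoning are the same.
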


\begin{proof}
The algorithm initializes $E^+_{p_i}$ and $E^-_{p_i}$ matrices using the transformation Algorithm~\ref{alg:transformation} which correctly categorizes and ranks generators for each metric based on its nature and classification as desirable or undesirable property for $p_i$. The algorithm then applies Algorithm~\ref{alg:ranking} for each $p_i$, correctly calculating the overall score for each generator by summing the weighted ranks of $E^+_{p_i}$ and subtracting the weighted ranks of $E^-_{p_i}$. The algorithm sorts the generators based on these scores, ensuring that the highest score (indicating the best performance) gets the best possible rank. Since each step in the algorithm follows logically and is based on defined inputs and rules, the output ranking for each $p_i$ is correct according to the specified criteria. 
\end{proof}

\begin{lemma}[Algorithm~\ref{alg:transformation}]
\label{lemma_correctness:alg2}
Algorithm~\ref{alg:transformation} correctly transforms raw performance scores into ranked scores for both desirable and undesirable metrics, considering the nature of each metric. 
\end{lemma}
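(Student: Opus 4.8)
The plan is to decompose the claim ``correctly transforms'' into three verifiable properties and establish each in turn: (i) for every metric, the sort step orders generators in a manner consistent with that metric's semantics; (ii) the rank-assignment followed by inversion yields a scale on which a larger numerical entry always signifies better performance, as required by the scoring rule in Equation~\ref{eq:ObjectiveFunction}; and (iii) the final dispatch of each metric's inverted ranks into $E^+_{p_i}$ or $E^-_{p_i}$ respects the desired/undesired classification and is unambiguous. I would first fix an arbitrary purpose $p_i$ and observe that the loop in Algorithm~\ref{alg:transformation} ranges over $M^{p_i}_{\text{total}} = M^+_{p_i} \cup M^-_{p_i}$, so it suffices to argue correctness for a single generic metric $m_j$ and then aggregate.

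The core of the argument is a case analysis on the category of $m_j$. Assuming the specification-setting phase guarantees that every evaluation metric is placed in exactly one of $M_{\text{LB}}$, $M_{\text{HB}}$, $M_{\text{CC}}$ (a totality-and-disjointness fact I would state explicitly as a standing hypothesis), each iteration enters exactly one branch. In the $M_{\text{LB}}$ branch ascending order places the smallest, and hence best, raw value first; in $M_{\text{HB}}$ descending order places the largest, and hence best, value first; in $M_{\text{CC}}$ sorting by $|E[t_k][m_j]-C_{m_j}|$ places the value closest to the target first. In all three branches, position~$1$ in $L_{m_j}$ is the best performer, so assigning ranks by position gives a ``lower-rank-is-better'' encoding. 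I would then show that the inversion step is an order-reversing bijection on ranks (e.g.\ mapping rank $r$ to $|T|+1-r$), so that after inversion the best performer carries the largest entry and the total order is preserved with direction reversed -- precisely the convention under which Equation~\ref{eq:ObjectiveFunction} treats $e^+$ and $e^-$ values.

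Finally, I would verify the dispatch step: the concluding loop writes $\text{RankDict}_{m_j}[t_k]$ into $E^+_{p_i}$ when $m_j\in M^+_{p_i}$ and into $E^-_{p_i}$ when $m_j\in M^-_{p_i}$. Invoking the non-overlapping constraint of Equation~\ref{eq:NonOverlappingMetrics}, namely $M^+_{p_i}\cap M^-_{p_i}=\emptyset$, each metric is routed to exactly one matrix, so the assignment is a well-defined total function and both output matrices are fully and consistently populated. Combining (i)--(iii) over all $m_j$ yields that $E^+_{p_i}$ and $E^-_{p_i}$ faithfully encode the ranked performance of the generators for their respective property classes.

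The main obstacle I anticipate is the $M_{\text{CC}}$ branch together with tie handling. Unlike the monotone $M_{\text{LB}}$ and $M_{\text{HB}}$ cases, the closeness ordering can produce genuine ties when two generators lie equidistant on opposite sides of $C_{m_j}$, and the rank assignment is a well-defined function only if such ties are broken deterministically; I would address this by assuming a fixed, reproducible tie-break rule (consistent with the deterministic execution the blockchain setting already demands) and remarking that any such rule preserves the three properties above. The secondary delicate point is the standing hypothesis that the categorization into $M_{\text{LB}}, M_{\text{HB}}, M_{\text{CC}}$ is exhaustive, which I would justify from the specification-setting phase rather than re-derive, since a metric falling through all three branches would never receive a rank.
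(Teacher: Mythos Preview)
Your proposal is correct and follows essentially the same line as the paper's own proof sketch: a case analysis on the three metric categories to justify the sort order, then the rank-inversion step to put everything on a ``higher is better'' scale, and finally the dispatch into $E^+_{p_i}$ or $E^-_{p_i}$. Your version is considerably more careful than the paper's---you make the order-reversing bijection explicit, invoke Equation~\ref{eq:NonOverlappingMetrics} for well-definedness of the dispatch, and flag the tie-handling and exhaustiveness hypotheses that the paper's sketch leaves implicit---but the underlying argument is the same.
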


\begin{proof}
The algorithm classifies each metric into one of three categories: Lower is better, higher is better, or closer to a constant is better. It ranks each generator according to the nature of its corresponding metric: for "lower is better" metrics, lower scores receive higher ranks; for "higher is better" metrics, higher scores are ranked higher; and for "closer to a constant is better" metrics, scores nearest to the specified constant are ranked highest. This ranking approach ensures an accurate reflection of each generator's performance relative to the intended metric criteria. To standardize the ranking, the algorithm inverts the ranks, aligning all metrics on a scale where higher scores indicate superior performance. Then it populates $E^+_{p_i}$ and $E^-_{p_i}$ with these adjusted ranks, effectively representing the performance of the generators for each designated purpose $p_i$.
\end{proof}

\begin{lemma}[Algorithm~\ref{alg:ranking}]
\label{lemma_correctness:alg3}
Algorithm~\ref{alg:ranking} accurately ranks generators for a given purpose $p_i$ based on their performance on both desirable and undesirable metrics. It comprehensively considers the weights assigned to each metric, including both desired and undesired metrics weights, along with the weights of quality indicators for each metric. 
\end{lemma}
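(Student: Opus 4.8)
The plan is to show that Algorithm~\ref{alg:ranking} computes, for every generator $t_k$, exactly the composite score defined by the objective function in Equation~\ref{eq:ObjectiveFunction}, and that the final descending sort induces the correct ordering. I would proceed in four steps.

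First, I would invoke Lemma~\ref{lemma_correctness:alg2} to treat the input matrices $E^+_{p_i}$ and $E^-_{p_i}$ as correct: each entry is an inverted rank in which a larger value denotes strictly better performance on the corresponding metric. This lets me assume the transformed scores are already on a uniform ``higher-is-better'' scale, so the remaining task is purely to verify the weighted aggregation rather than the transformation itself.

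Second, I would establish the correspondence between the single per-metric loop in the algorithm and the nested double summation of Equation~\ref{eq:ObjectiveFunction}. The key observation is that the mapping functions $f$ and $g$ (Equations~\ref{eq:MappingFunction} and~\ref{eq:MappingFunctionNegative}) assign each metric to exactly one quality indicator, so the lookup $W_{p_i}[\text{category } m_j]$ selects the unique QI weight $W_{ij}$ attached to metric $m_j$. Consequently the outer sum $\sum_j W_{ij}(\cdot)$ over quality indicators, combined with the inner sum over the metrics belonging to each indicator, collapses into a single pass over all metrics in $M^+_{p_i}$ (resp.\ $M^-_{p_i}$) that multiplies each transformed value by its metric weight and its QI category weight, which is precisely what the loop accumulates into $\text{DesiredScore}$ and $\text{UndesiredScore}$. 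Third, I would verify the score composition and sign handling: I need to confirm that $\text{OverallScore}$ equals the desired aggregate minus the undesired aggregate, matching the subtraction in Equation~\ref{eq:ObjectiveFunction}, tracking the sign carefully since the undesired branch itself accumulates with a negative increment. Having matched the scalar $\text{Score}_{T_k}$ for each generator, I would then argue that sorting in descending order of $\text{OverallScore}$ places the generator with the largest composite score first, consistent with the stated convention that the best generator receives the most favourable rank.

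The main obstacle will be step two, namely making the reduction from the nested QI-and-metric double summation to the algorithm's flat per-metric loop airtight. A literal reading of Equation~\ref{eq:ObjectiveFunction} appears to multiply every metric block by a sum over \emph{all} quality indicator weights, whereas the algorithm multiplies each metric only by its own category's weight; closing this gap requires explicitly using the one-to-one metric-to-QI assignment so that the contributions of metrics not belonging to a given indicator vanish from that indicator's term. A secondary delicate point is the sign bookkeeping in step three, which must be reconciled against the two negative signs appearing in the undesired branch of the pseudocode so that undesirable performance genuinely \textbf{reduces} the overall score as intended.
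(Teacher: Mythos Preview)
Your proposal is correct and follows the same line as the paper's proof sketch---verify that the per-generator score equals the weighted desirable aggregate minus the weighted undesirable aggregate, then observe that descending sort places the highest score first---though you carry it out with considerably more care than the paper, which simply restates what the algorithm does. In particular, your explicit reduction of the nested QI/metric double sum to the flat per-metric loop via the metric-to-QI mapping, and your flagging of the double negative in the undesired branch of the pseudocode, go beyond anything the paper's brief sketch addresses; the paper does not engage with either subtlety and simply asserts that the algorithm ``subtract[s] the products of their performance ranks in undesirable metrics.''
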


\begin{proof}
The algorithm calculates a score for each generator by summing the products of their performance ranks in desirable metrics multiplied by the corresponding weights and subtracting the products of their performance ranks in undesirable metrics multiplied by their respective weights. This method of score calculation ensures that the contribution of each metric is accurately reflected in the overall performance of a generator for $p_i$. Sorting the generators based on these calculated scores in descending order ensures that the best performing generator (with the highest score) is ranked top. This ranking process effectively identifies the best performing generator according to the criteria defined for $p_i$.
\end{proof}

\begin{lemma}[Algorithm~\ref{alg:DetailedAuditing}]
\label{lemma_correctness:alg4}
Algorithm~\ref{alg:DetailedAuditing} correctly verifies the consistency and accuracy of rankings produced by Algorithms~\ref{alg:RankAcrossPurposes},~\ref{alg:transformation}, and~\ref{alg:ranking}. 
\end{lemma}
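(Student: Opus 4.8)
The plan is to establish the correctness of Algorithm~\ref{alg:DetailedAuditing} by reducing it to the already-established correctness of Algorithms~\ref{alg:RankAcrossPurposes},~\ref{alg:transformation}, and~\ref{alg:ranking} (Lemmas~\ref{lemma_correctness:alg1}--\ref{lemma_correctness:alg3}), together with the soundness and completeness of the comparison steps performed by the auditor. First I would make precise what ``correctly verifies'' means: the algorithm returns \textsc{True} if and only if (i) the weight matrices $W_{p_i}$, $W^{M^{+}}_{p_i}$, $W^{M^{-}}_{p_i}$ supplied by the Product Manager, and the evaluation data $E$ and $T$ supplied by the Data Scientist, agree with the records committed to the blockchain, and (ii) the rankings stored on the ledger coincide, for every purpose $p_i$, with those obtained by a faithful re-execution of the ranking pipeline on the verified inputs.

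For the soundness direction (if the algorithm returns \textsc{True}, the stored data are consistent and accurate), I would argue purpose by purpose. The two comparison steps guarantee that the inputs from which the auditor recomputes are exactly the blockchain-recorded inputs. The recalculation step invokes the very same routines \textsc{Transformation} and \textsc{RankGenerators} used inside Algorithm~\ref{alg:RankAcrossPurposes}; hence by Lemmas~\ref{lemma_correctness:alg2} and~\ref{lemma_correctness:alg3} the recomputed ordering is the correct ranking for $p_i$ under the verified weights. Since the final comparison found no discrepancy, the stored ranking equals this correct ranking. Because the loop ranges over all $p_i \in P$ and the flag \emph{isConsistent} is only ever lowered from \textsc{True} to \textsc{False}, a returned value of \textsc{True} certifies correctness for every purpose simultaneously.

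For the completeness direction (any genuine inconsistency forces a \textsc{False} return), I would show that each possible fault is caught by at least one check: a tampered weight matrix is detected by the Product Manager verification, a corrupted $E$ or $T$ by the Data Scientist verification, and an incorrect stored ranking---whether arising from manipulation or from an erroneous execution of the smart contract---by the recomputation-and-compare step, again relying on Lemmas~\ref{lemma_correctness:alg1}--\ref{lemma_correctness:alg3} to guarantee that the auditor's recomputed reference ranking is itself correct. The monotone flag logic in the concluding block then propagates any single flag to the final output, so the contrapositive yields the desired equivalence.

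The main obstacle I anticipate is the ranking-comparison step, and specifically the treatment of ties. Lemma~\ref{lemma_correctness:alg3} establishes that generators are ordered by descending overall score, but when two generators receive equal scores the sort order is not uniquely determined, so a literal equality test between the stored and recomputed orderings could flag a spurious discrepancy (or, conversely, overlook a real one that merely permutes tied entries). To close this gap I would either argue that \textsc{RankGenerators} employs a deterministic, reproducible tie-break, so the recomputation reproduces the stored order exactly, or weaken the notion of ``matching rankings'' to equality of the induced score-equivalence classes and then verify that the comparison in Algorithm~\ref{alg:DetailedAuditing} respects that equivalence. A secondary, milder point is to confirm that the input-comparison steps are exhaustive over all rows and columns of each matrix, so that no partial tampering escapes detection.
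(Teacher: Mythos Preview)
Your proposal is correct and follows essentially the same approach as the paper: verify that the Product Manager's and Data Scientist's inputs match the blockchain records, recompute the rankings for each $p_i$ using the same subroutines, and compare against the stored rankings, relying on Lemmas~\ref{lemma_correctness:alg1}--\ref{lemma_correctness:alg3} for the correctness of the recomputation. Your treatment is considerably more rigorous than the paper's brief proof sketch---in particular, your explicit soundness/completeness decomposition and your observation about tie-breaking are refinements the paper does not address.
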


\begin{proof}
The algorithm cross-checks the input data, the transformation process, and the final ranking against the records stored in the ledger, ensuring the integrity and consistency of the entire process. Then, replicates the process for each $p_i$, using the same input data and methods to ensure that the recorded rankings match the recalculated rankings. Any discrepancies found during this process indicate inconsistency, leading to a negative audit result. The thoroughness of the audit process and its reliance on blockchain technology for verification ensure the correctness of the ranking outcomes and the reliability of the process. 
\end{proof}

Given the lemmas established for Algorithms~\ref{alg:RankAcrossPurposes},~\ref{alg:transformation},~\ref{alg:ranking}, and~\ref{alg:DetailedAuditing}, we can formulate the following theorem for the entire framework:
\begin{theorem}[Correctness of the Framework]
\label{theorem:correctness_framework}
The framework comprising Algorithms~\ref{alg:RankAcrossPurposes},~\ref{alg:transformation},~\ref{alg:ranking}, and~\ref{alg:DetailedAuditing}, when implemented as smart contracts on a consortium blockchain, ensures an accurate, transparent, and tamper-proof process for  ranking synthetic data generators across multiple purposes. 
\end{theorem}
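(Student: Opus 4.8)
The plan is to prove the theorem as a composition result, decomposing the three asserted attributes — accuracy, transparency, and tamper-proofness — and discharging each against a distinct body of established facts. The accuracy claim reduces entirely to the four correctness lemmas already proven; the transparency and tamper-proofness claims, by contrast, must be discharged against the properties of the underlying Sawtooth/PBFT platform described in Section~\ref{sec:conversion}. I would state this decomposition explicitly at the outset so that the reader sees the theorem as a conjunction whose conjuncts are proven by disjoint means.

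For the accuracy conjunct, I would argue bottom-up through the lemma hierarchy. First, invoke Lemma~\ref{lemma_correctness:alg2} to conclude that, for any purpose $p_i$, the transformation step produces matrices $E^{+}_{p_i}$ and $E^{-}_{p_i}$ whose entries correctly encode each generator's standing per metric on the unified ``higher-is-better'' scale. Feeding these into Lemma~\ref{lemma_correctness:alg3} yields a correct per-purpose ranking consistent with the objective function in Equation~\ref{eq:ObjectiveFunction}. Lemma~\ref{lemma_correctness:alg1} then lifts this to a correct ranking across all purposes in $P$, since Algorithm~\ref{alg:RankAcrossPurposes} merely iterates the verified subroutines over each $p_i$ and aggregates the results into $AllRankings$. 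Finally, Lemma~\ref{lemma_correctness:alg4} certifies that Algorithm~\ref{alg:DetailedAuditing} recomputes and cross-checks these rankings against the ledger records, so any deviation between the committed output and the specification is detected. The key observation to make precise here is that the inputs consumed by each lemma are exactly the outputs certified by the preceding one, so the composition carries no hidden gap.

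For transparency and tamper-proofness, I would shift from algorithmic reasoning to the systems guarantees. Transparency follows because every artifact — the Product Manager's weight matrices $W$, $W^{M^{+}}$, $W^{M^{-}}$, the Data Scientist's evaluation matrix $E$, the computed $AllRankings$, and the audit verdict — is written to the shared ledger under a deterministic, publicly computable Merkle-Radix address scheme, rendering each step inspectable by every authorized node. Tamper-proofness follows from the PBFT consensus guarantee invoked in the system model: with at least $3f+1$ nodes, safety and liveness hold as long as fewer than one third are Byzantine, so once a block is committed it is irrevocable and fork-free, and the Merkle-Radix commitment makes any retroactive alteration detectable. I would tie these directly to the threat model of Section~\ref{sec:threats}, showing that the repudiation, colluding, and poisoning attacks are each neutralized by the immutable audit trail combined with the auditor's re-execution in Lemma~\ref{lemma_correctness:alg4}.

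The main obstacle I anticipate is precisely the bridge between the two halves of the argument: the accuracy conjunct is a clean syntactic composition of proven lemmas, but transparency and tamper-proofness rest on properties of Sawtooth and PBFT that the paper invokes informally rather than stating as lemmas. A fully rigorous treatment would need to make the platform assumptions explicit — bounded message delay under partial synchrony, the honest-supermajority threshold $f < n/3$, and determinism of smart-contract execution across replicas — and then argue that deterministic execution guarantees that every honest node, recomputing the ranking from identical committed inputs, obtains identical outputs, so consensus on inputs propagates to consensus on rankings. Establishing that this determinism assumption genuinely holds for the Python transaction processor (e.g., that floating-point aggregation in Equation~\ref{eq:ObjectiveFunction} is reproducible across nodes) is the subtlest point, and I would flag it as the place where the proof sketch is strongest in intent but weakest in formal detail.
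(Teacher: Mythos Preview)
Your proposal is correct and follows essentially the same approach as the paper: invoke the four correctness lemmas to discharge accuracy, then appeal to the blockchain's ledger recording for transparency and its consensus/immutability properties for tamper-proofness. Your treatment is in fact considerably more careful than the paper's own proof sketch—the paper does not spell out the lemma composition order, does not tie the argument explicitly to the threat model, and does not flag the determinism-of-replicated-execution subtlety that you (rightly) identify as the weakest link.
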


\begin{proof}
Lemmas~\ref{lemma_correctness:alg1},~\ref{lemma_correctness:alg2},~\ref{lemma_correctness:alg3},and~\ref{lemma_correctness:alg4} establish the correctness of Algorithms~\ref{alg:RankAcrossPurposes},~\ref{alg:transformation},~\ref{alg:ranking}, and~\ref{alg:DetailedAuditing} respectively. The integration of these algorithms into a coherent framework ensures that the entire process, from raw data to final rankings, is consistent, accurate, and verifiable. When all these algorithms are implemented as smart contracts, the execution of these algorithms is automated, ensuring that the logic defined in algorithms is followed precisely. Furthermore, every step in the algorithms is transparent and recorded on the blockchain, making it possible to verify the correctness of each step. Additionally, the blockchain's inherent security features protect against tampering and unauthorized alterations, ensuring the integrity of the ranking process.
\end{proof}

Although the framework leverages blockchain technology to ensure transparency, accuracy and tamper resistance in the ranking process, it is important to note that the evaluation of blockchain vulnerabilities, such as potential security flaws, scalability issues, or consensus problems, falls outside the scope of this research. This work assumes that the underlying blockchain infrastructure operates as intended without delving into its potential vulnerabilities or the broader security landscape that affects blockchain technologies.

Consequently, the assertions of Theorem 1 are contingent on the operational integrity and assumed security of the blockchain infrastructure, as described. This theorem presupposes that the blockchain framework functions flawlessly according to design specifications, and thus any deviations or vulnerabilities in the blockchain technology itself are not covered within the scope of this theorem's validation. 

\subsection{Computational Complexity of the Framework}
All the algorithms are executed as smart contracts in a blockchain network consisting of $n$ nodes. 
\begin{lemma}[Computational Complexity of Algorithm~\ref{alg:transformation}]
\label{lemma:algo1_complexity}
The computational complexity of the Transformation Algorithm~\ref{alg:transformation} is $O(T \log T \times |M_{p_i}|)$, where $T$ is the number of generators and $|M_{p_i}|$ is the total number of metrics evaluated for a given purpose $p_i$. The operational overhead scales with the number of nodes ($n$) in the consortium blockchain network, but this does not affect computational complexity.
\end{lemma}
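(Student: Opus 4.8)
The plan is to identify the asymptotically dominant operation in Algorithm~\ref{alg:transformation}, show that every other step is of the same order or strictly dominated, and then argue separately that replicating the computation across the $n$ blockchain nodes leaves the per-node complexity unchanged.

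First I would isolate the main loop, which iterates over every metric $m_j \in M^{p_i}_\text{total}$. Since $M^{p_i}_\text{total} = M^+_{p_i} \cup M^-_{p_i}$ and these sets are disjoint by the non-overlapping constraint of Equation~\ref{eq:NonOverlappingMetrics}, the number of iterations is exactly $|M_{p_i}| = |M^+_{p_i}| + |M^-_{p_i}|$. Within a single iteration I would bound each sub-step: building the list $L_{m_j}$ of generator--score pairs costs $O(T)$; sorting $L_{m_j}$ (ascending, descending, or by proximity to the constant $C_{m_j}$, according to whether $m_j$ lies in $M_{\text{LB}}$, $M_{\text{HB}}$, or $M_{\text{CC}}$) costs $O(T \log T)$ with any comparison-based sort; and the subsequent rank assignment, rank inversion, and dictionary insertion are each a single linear pass of cost $O(T)$. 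The sort therefore dominates a single iteration, yielding $O(T \log T)$ per metric and $O(|M_{p_i}| \cdot T \log T)$ over the whole loop.

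Next I would confirm that the remaining parts are dominated. Matrix initialization touches $T \cdot |M^+_{p_i}|$ and $T \cdot |M^-_{p_i}|$ entries, i.e.\ $O(T \cdot |M_{p_i}|)$, and the second loop populating $E^+_{p_i}$ and $E^-_{p_i}$ performs an $O(1)$ dictionary lookup and membership test for each of the $|M_{p_i}| \cdot T$ pairs, again $O(T \cdot |M_{p_i}|)$. Both are asymptotically smaller than $O(T \log T \cdot |M_{p_i}|)$, so the total collapses to the stated bound. For the blockchain claim, I would note that every node runs an identical transaction processor on identical inputs, so the procedure is executed once per node; the aggregate work across the network is $n$ times the single-node cost, and this multiplicative factor $n$ is precisely the operational overhead, while the computational complexity of one execution stays independent of $n$.

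I expect the only genuinely delicate point to be this last separation between per-node complexity and network-wide overhead: one must justify that consensus and replication contribute a multiplicative $n$ factor to total work rather than altering the asymptotic order of the computation each node performs. This follows from the determinism of the transaction processor---each node applies the same procedure to the same data---so no node incurs additional per-execution cost from the presence of the others, and the claimed bound $O(T \log T \times |M_{p_i}|)$ characterizes the algorithm's complexity regardless of $n$.
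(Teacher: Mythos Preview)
Your proposal is correct and follows essentially the same approach as the paper: both analyses decompose Algorithm~\ref{alg:transformation} into initialization, the per-metric sort-and-rank loop, and the population of $E^{+}_{p_i}$ and $E^{-}_{p_i}$, identify the $O(T \log T)$ sort over $|M_{p_i}|$ metrics as dominant, and treat the blockchain replication as an operational overhead in $n$ that is orthogonal to the per-execution complexity. Your version is, if anything, slightly more explicit in invoking the non-overlapping constraint and the determinism of the transaction processor, but the structure and substance match the paper's proof.
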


\begin{proof}
We analyze the steps in Algorithm~\ref{alg:transformation} and their associated computational complexities:

\begin{enumerate}
    \item \textbf{Initialization of Matrices $E^{+}_{p_i}$ and $E^{-}_{p_i}$:}
    Each matrix has dimensions $|T| \times |M_{p_i}|$. Initializing these matrices involves setting up storage for $T$ generators on $|M_{p_i}|$ metrics for each matrix, leading to a complexity of $O(T \times |M_{p_i}|)$. Although this step is linear with respect to the number of elements, it provides a foundation for subsequent operations.

    \item \textbf{Ranking Based on Metric Nature:}
    For each metric $m_j$ in $M_{p_i}$, a list of $T$ elements is sorted, which has a complexity of $O(T \log T)$. This is the most computationally intensive step, as it is performed for each metric independently.

    \item \textbf{Inverse Ranking:}
    After sorting, each generator $t_k$ in the list of metric $m_j$ is assigned an inverse rank. Inverse ranking involves assigning ranks such that the best position gets the highest rank, and so forth. This computation, though linear, is critical, as it inversely scales each rank within the range from 1 to $T$. The complexity of this step is $O(T)$ for each metric, performed after sorting.

    \item \textbf{Populating $E^{+}_{p_i}$ and $E^{-}_{p_i}$:}
    This step involves iterating over each metric in $M_{p_i}$ and each generator $t_k$ to assign ranks from the inverse ranking in the matrices. Given that each assignment is a direct operation and occurs for each combination of generators $T$ and metrics $|M_{p_i}|$, this step also has a complexity of $O(T \times |M_{p_i}|)$.

    \item \textbf{Blockchain Overhead:}
    The involvement of a blockchain network primarily affects communication and replication between nodes and does not influence the computational complexity of the algorithm itself. The overhead related to consensus mechanisms and data replication depends on the number of nodes $n$ and is typically considered separate from computational complexity calculations.

\end{enumerate}

The dominant factor in the computational complexity of the algorithm is the sorting operation performed for each metric, $O(T \log T)$. Given that this operation is performed for each metric in the total set of metrics $|M_{p_i}|$, the total computational complexity is $O(T \log T \times |M_{p_i}|)$. Although initialization and population of matrices involve linear operations per metric, they do not exceed the complexity introduced by the sorting step. Therefore, the comprehensive computational complexity of the transformation Algorithm is confirmed as stated.

\end{proof}

\begin{lemma}[Computational Complexity of Algorithm~\ref{alg:ranking}]
\label{lemma:algo2_complexity}
The computational complexity of the Ranking Algorithm~\ref{alg:ranking} is $O(T \times |M_{p_i}| + T \log T)$, where $T$ is the number of generators and $|M_{p_i}|$ is the total number of metrics for the purpose $p_i$. This complexity includes the computation of scores and the final sorting of generators. The complexity scales with the number of nodes ($n$) in the consortium blockchain network for blockchain operations.
\end{lemma}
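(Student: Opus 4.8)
The plan is to decompose Algorithm~\ref{alg:ranking} into its two computationally distinct phases—per-generator score accumulation and the final sort—and to bound each phase separately before summing. First I would examine the outer loop, which iterates once over each of the $T$ generators $t_k \in T$. Within a single iteration, the algorithm runs two inner loops: one ranging over the desired metrics $m_j \in M^+_{p_i}$ and one over the undesired metrics $m_j \in M^-_{p_i}$. Each inner iteration performs only constant-time work: a lookup of the metric weight in $W^{M^{+}}_{p_i}$ or $W^{M^{-}}_{p_i}$, a lookup of the associated quality indicator weight in $W_{p_i}$, a retrieval of the transformed value from $E^+_{p_i}$ or $E^-_{p_i}$, and a single multiply-accumulate into $\text{DesiredScore}_i$ or $\text{UndesiredScore}_i$. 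Since the two inner loops together touch exactly $|M^+_{p_i}| + |M^-_{p_i}| = |M_{p_i}|$ metrics, the accumulation phase over all generators costs $O(T \times |M_{p_i}|)$.

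Next I would account for the final ordering step. Once the loop has produced an $(\text{OverallScore}_i, t_k)$ pair for each of the $T$ generators, the algorithm sorts these pairs by overall score in descending order, which with a comparison-based sort costs $O(T \log T)$ and is independent of the number of metrics. Because no fixed relationship between $|M_{p_i}|$ and $\log T$ can be assumed in general, neither term dominates the other, so both must be retained; summing the two phases then yields the stated bound $O(T \times |M_{p_i}| + T \log T)$.

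Finally, mirroring the argument used for Algorithm~\ref{alg:transformation} in Lemma~\ref{lemma:algo1_complexity}, I would note that writing $SortedScores$ to the ledger introduces only communication and replication overhead tied to the number of nodes $n$ and the consensus protocol. This overhead is orthogonal to the arithmetic cost of ranking and is therefore excluded from the computational-complexity figure, confirming the claim.

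I expect the principal subtlety—rather than a genuine obstacle—to lie in two places: justifying that the per-metric inner operations are truly $O(1)$, which hinges on the weight and transformed matrices supporting constant-time indexed access, and arguing that the two additive terms cannot legitimately be collapsed into a single asymptotic term absent an assumption relating $|M_{p_i}|$ and $\log T$. Both points are routine to discharge, so the analysis is straightforward once the loop structure is read off precisely.
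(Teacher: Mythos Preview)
Your proposal is correct and follows essentially the same approach as the paper: decompose the algorithm into the score-accumulation phase over $T$ generators and $|M_{p_i}|$ metrics giving $O(T \times |M_{p_i}|)$, add the $O(T \log T)$ sort, and treat the blockchain write as operational overhead that scales with $n$ but does not alter the computational-complexity order. Your discussion is slightly more detailed than the paper's (e.g., justifying the $O(1)$ inner-loop cost and why neither additive term dominates), but the decomposition and argument are identical.
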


\begin{proof}
We analyze the steps in Algorithm~\ref{alg:ranking} and their associated computational complexities:

For each generator $t_k \in T$, the algorithm computes a score by iterating on all metrics in $M_{p_i}^+$ and $M_{p_i}^-$. The complexity of calculating the score for a generator is $O(|M_{p_i}|)$, and for all $T$ generators, it is $O(T \times |M_{p_i}|)$.
After computing scores, the algorithm sorts the list of $T$ generators based on their scores, which has a complexity of $O(T \log T)$.
The complexity of writing and retrieving data from the blockchain scales with the number of nodes $n$ in the network. However, this does not affect the complexity order for the ranking process.

Combining these complexities, the dominant factor is the combination of score computation and sorting. Thus, the total computational complexity of Algorithm~\ref{alg:ranking} is $O(T \times |M_{p_i}| + T \log T)$.
\end{proof}

\begin{lemma}[Computational Complexity of Algorithm~\ref{alg:RankAcrossPurposes}]
\label{lemma:algo3_complexity}
The computational complexity of Algorithm~\ref{alg:RankAcrossPurposes} is primarily determined by the complexities of the invoked subroutines - the Transformation Algorithm (Algorithm~\ref{alg:transformation}) and the Ranking Algorithm (Algorithm~\ref{alg:ranking}). The overall complexity is $O(P \times (T \log T \times |M_\text{total}| + T \log T))$, where $P$ is the number of purposes, $T$ is the number of generators, and $|M_\text{total}|$ is the total number of metrics in $P$ purposes. Furthermore, complexity scales with the number of nodes ($n$) in the blockchain network of the consortium due to blockchain operations.
\end{lemma}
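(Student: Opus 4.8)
The plan is to decompose Algorithm~\ref{alg:RankAcrossPurposes} along its single outer loop over the purpose set $P$ and to treat the two subroutine calls as black boxes whose costs are already fixed by Lemmas~\ref{lemma:algo1_complexity} and~\ref{lemma:algo2_complexity}. First I would observe that the body of the loop executes exactly $|P|$ times, once per purpose $p_i$, and that the operations lying outside the loop---initializing the dictionary $AllRankings$ and performing the final write-back of the accumulated rankings to the ledger---contribute only an additive term that I will show is dominated by the loop cost, so that the asymptotics are governed entirely by the per-iteration work summed over purposes.

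Next I would bound the cost of a single loop iteration. Each iteration performs three kinds of work: (a) the extraction steps that slice $W^{M^{+}}_{p_i}$, $W^{M^{-}}_{p_i}$, and $W_{p_i}$ out of the global weight matrices and read off their column (metric) names, which is linear in the number of entries touched, i.e. $O(T + |M_{p_i}|)$; (b) one invocation of the Transformation routine, which by Lemma~\ref{lemma:algo1_complexity} costs $O(T \log T \times |M_{p_i}|)$; and (c) one invocation of the Ranking routine, which by Lemma~\ref{lemma:algo2_complexity} costs $O(T \times |M_{p_i}| + T \log T)$. Summing these, the per-iteration cost collapses to $O(T \log T \times |M_{p_i}| + T \log T)$, since the sorting-weighted term of (b) absorbs both the linear extraction cost of (a) and the $T \times |M_{p_i}|$ contribution of (c) (using $\log T \ge 1$).

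The key step is then to aggregate over purposes and replace the per-purpose metric count $|M_{p_i}|$ by the global count $|M_\text{total}|$. Because $M_\text{total}$ is the union of all per-purpose metric sets (Equations~\ref{eq:MtotalPlus},~\ref{eq:MtotalMinus}, and~\ref{eq:Mtotal}), we have $|M_{p_i}| \le |M_\text{total}|$ for every $i$, so each iteration is bounded by $O(T \log T \times |M_\text{total}| + T \log T)$, and summing this identical bound over the $|P|$ iterations yields the claimed $O(P \times (T \log T \times |M_\text{total}| + T \log T))$. I expect this uniform-bounding step to be the only point requiring care, since it is what converts the tight per-purpose expression into the worst-case global statement; the bound is loose precisely when the metric sets vary substantially across purposes, but it is the correct worst-case order for the algorithm as written. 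Finally, mirroring the treatment in the preceding lemmas, I would remark that the ledger write is a blockchain operation whose replication and consensus overhead scales with the number of nodes $n$ yet is orthogonal to the arithmetic complexity order, so it is reported separately rather than folded into the $O(\cdot)$ expression.
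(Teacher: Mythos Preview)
Your proposal is correct and follows essentially the same approach as the paper: iterate over the $|P|$ purposes, charge each iteration with the costs of the Transformation and Ranking subroutines from Lemmas~\ref{lemma:algo1_complexity} and~\ref{lemma:algo2_complexity}, and multiply through. Your write-up is in fact more careful than the paper's, since you explicitly justify replacing the per-purpose metric count $|M_{p_i}|$ by the global $|M_\text{total}|$ via the union inclusion, a step the paper performs silently.
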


\begin{proof}
The primary computational tasks in Algorithm~\ref{alg:RankAcrossPurposes} involve the following steps: First, iterating through each purpose in $P$. Second, invoking the Transformation Algorithm for each purpose, which has a complexity of $O(T \log T \times |M_{p_i}|)$ according to the Lemma~\ref{lemma:algo1_complexity}.
Third, invoking the ranking algorithm for each purpose, which has a complexity of $O(T \times |M_{p_i}| + T \log T)$ as per Lemma~\ref{lemma:algo2_complexity}.

Since the algorithm iterates for $P$ purposes, the total complexity is a product of the number of purposes and the complexities of the invoked subroutines. Thus, the overall complexity of Algorithm~\ref{alg:RankAcrossPurposes} is $O(P \times (T \log T \times |M_\text{total}| + T \log T))$.

The complexity of blockchain operations depends on the number of nodes $n$ in the consortium blockchain network. However, these operations do not change the overall order of complexity.
\end{proof}
\begin{lemma}[Computational Complexity of Algorithm~\ref{alg:DetailedAuditing}]
\label{lemma:algo4_complexity}
The computational complexity of the Auditing Process (Algorithm~\ref{alg:DetailedAuditing}) primarily depends on the verification of the ranking calculations and the blockchain operations. The complexity is $O(P \times (T \log T \times |M_\text{total}| + T \log T))$ for the ranking verification in $P$ purposes. Furthermore, complexity scales with the number of nodes ($n$) in the blockchain network of the consortium due to read and write operations of the blockchain.
\end{lemma}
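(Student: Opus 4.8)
The plan is to decompose Algorithm~\ref{alg:DetailedAuditing} into its per-purpose stages, bound each stage separately, and then show that the total cost is governed by the replication of the ranking computation, whose complexity is already established in Lemma~\ref{lemma:algo3_complexity}. Since the outer loop executes once for each of the $P$ purposes, it suffices to bound the work performed for a single purpose $p_i$ and multiply by $P$.

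First I would dispatch the two verification stages. Comparing the product manager's weight matrices $W_{p_i}$, $W^{M^{+}}_{p_i}$, and $W^{M^{-}}_{p_i}$ against their blockchain records is an element-wise comparison over at most $|M_\text{total}| + |QI_\text{total}|$ entries, hence $O(|M_\text{total}|)$ per purpose after absorbing $|QI_\text{total}| \le |M_\text{total}|$ (each quality indicator carries at least one metric). Cross-checking the data scientist's evaluation matrix $E$ and generator set $T$ touches every entry of a $|T| \times |M_\text{total}|$ array, costing $O(T \times |M_\text{total}|)$. Both of these stages are linear in the number of metrics and will therefore be dominated by the sorting-based ranking step.

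Next I would treat the Replicate Ranking Process stage, which is the crux of the argument. This stage re-executes, for purpose $p_i$, exactly the transformation-and-ranking pipeline analyzed in Lemmas~\ref{lemma:algo1_complexity} and~\ref{lemma:algo2_complexity}; carried out across all $P$ purposes this is precisely the computation bounded in Lemma~\ref{lemma:algo3_complexity} by $O(P \times (T \log T \times |M_\text{total}| + T \log T))$. The subsequent Compare Rankings stage merely matches two ordered lists of $T$ generators, an $O(T)$ operation (or $O(T \log T)$ if a re-sort is required), which is again dominated by the replication cost.

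Finally I would aggregate the per-purpose bounds: the dominant term for purpose $p_i$ is $T \log T \times |M_{p_i}| + T \log T$, and since $\sum_{i} |M_{p_i}| \le P \cdot |M_\text{total}|$, summing over the $P$ purposes yields the claimed $O(P \times (T \log T \times |M_\text{total}| + T \log T))$. I would close, as in the preceding complexity lemmas, by noting that the blockchain read and write operations used to retrieve the stored records and to write back the audit result scale with the number of nodes $n$, but constitute additive communication overhead that does not alter the asymptotic order of the computation. The main obstacle I anticipate is arguing cleanly that the verification and comparison stages never dominate --- in particular that the linear $O(T \times |M_\text{total}|)$ cross-check of $E$ is subsumed by the $O(T \log T \times |M_\text{total}|)$ sorting cost --- so that the replicated ranking genuinely governs the overall complexity.
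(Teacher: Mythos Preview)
Your proposal is correct and follows essentially the same decomposition as the paper: identify the per-purpose stages, argue that the replicated ranking computation (via Lemma~\ref{lemma:algo3_complexity}) dominates, and observe that blockchain operations add only $n$-dependent overhead without changing the order. Your treatment is in fact more careful than the paper's, which simply asserts that the specification and evaluation-matrix verifications are $O(1)$ per purpose; your explicit $O(|M_\text{total}|)$ and $O(T \times |M_\text{total}|)$ bounds for those stages, together with the observation that they are absorbed by the $O(T \log T \times |M_\text{total}|)$ sorting cost, make the domination argument airtight rather than asserted.
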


\begin{proof}
The Auditing Process Algorithm involves the following steps:
First, verification of the specifications and evaluation matrices which is $O(1)$ for each purpose in $P$.
Afterward, recalculation and verification of the rankings for each purpose, which follows the same complexity as the ranking Algorithm, that is, $O(T \times |M_{p_i}| + T \log T)$.

Since the algorithm iterates for $P$ purposes, the overall complexity for the ranking verification is $O(P \times (T \log T \times |M_\text{total}| + T \log T))$.

Blockchain operations (both read and write) add complexity that scales with the number of nodes $n$ in the network. However, these operations do not change the overall order of complexity of the auditing process.
\end{proof}

\begin{theorem}[Overall Computational Complexity of the Framework]
\label{theorem:overall_complexity}
The overall computational complexity of the framework, encompassing Algorithms~\ref{alg:transformation}, \ref{alg:RankAcrossPurposes}, \ref{alg:ranking}, and \ref{alg:DetailedAuditing}, is given by $O(2 \times P \times (T \log T \times |M_{p_i}| + T \log T))$. This complexity scales with the number of purposes ($P$), the number of generators ($T$), and the total number of metrics for each purpose ($|M_{p_i}|$), and reflects the duplication of processes due to the auditing algorithm. The impact of blockchain operations, while significant for operational overhead, does not change the order of computational complexity.
\end{theorem}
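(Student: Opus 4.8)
The plan is to treat the framework as the sequential composition of two independent top-level passes — the ranking pass governed by Algorithm~\ref{alg:RankAcrossPurposes} and the verification pass governed by Algorithm~\ref{alg:DetailedAuditing} — and then appeal to the additivity of asymptotic cost under sequential execution. First I would invoke Lemma~\ref{lemma:algo3_complexity}, which bounds the cost of a full ranking across all purposes by $O(P \times (T \log T \times |M_\text{total}| + T \log T))$, together with Lemma~\ref{lemma:algo4_complexity}, which establishes the identical bound for the auditing pass (since the auditor recomputes the rankings for each purpose using the very same transformation and ranking subroutines). These two lemmas already absorb the costs of Algorithms~\ref{alg:transformation} and~\ref{alg:ranking}, so no further per-subroutine accounting is needed at the top level.

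Next I would observe that the framework runs these two passes one after the other: the ranking is produced and committed, after which the auditor independently re-executes the same computation to check consistency. Because the running time of a sequential composition is the sum of the running times of its parts, and both parts carry the same asymptotic bound $f := P \times (T \log T \times |M_\text{total}| + T \log T)$, the combined cost is $O(f) + O(f) = O(2f)$, which is exactly the stated expression. I would retain the explicit constant $2$ rather than collapsing it (as one formally may, since $O(2f) = O(f)$) precisely to make visible that the verification stage duplicates the ranking work; this is the interpretive content the theorem is emphasizing. The blockchain overhead is then dispatched exactly as in the constituent lemmas: the read/write and consensus operations scale with the node count $n$, contributing to operational overhead but leaving the order of the computational complexity unchanged, so $n$ does not enter the bound.

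The main obstacle I anticipate is not the arithmetic but two points of hygiene. The first is reconciling notation: the underlying lemmas are phrased in terms of $|M_\text{total}|$ (the metric count aggregated over all purposes), whereas the theorem is stated with $|M_{p_i}|$; I would fix one convention and make clear that the per-purpose metric count is what the inner sort acts on, while the purpose loop supplies the factor $P$. The second, more delicate point is justifying that keeping the literal factor $2$ is meaningful even though big-O absorbs constant multipliers — I would frame this as a deliberate bookkeeping choice that records ``two full passes'' rather than a sharper-than-constant asymptotic distinction, so that the statement is read correctly and no false tightness is implied.
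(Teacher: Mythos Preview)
Your proposal is correct and follows essentially the same approach as the paper: invoke the lemmas for Algorithm~\ref{alg:RankAcrossPurposes} and Algorithm~\ref{alg:DetailedAuditing} (which already subsume the transformation and ranking subroutines), observe that the auditing pass duplicates the ranking pass, sum the two identical bounds to get the factor~$2$, and dismiss blockchain overhead as not affecting the asymptotic order. Your additional hygiene remarks about the $|M_\text{total}|$ versus $|M_{p_i}|$ notational mismatch and the formal redundancy of the constant~$2$ in big-$O$ are well taken --- the paper's own proof glosses over both points --- so you are, if anything, more careful than the original.
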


\begin{proof}
From the lemmas pertaining to each algorithm:
\begin{itemize}
\item Algorithm~\ref{alg:transformation} has a complexity of $O(T \log T \times |M_{p_i}|)$ for each purpose.
\item  Algorithm~\ref{alg:RankAcrossPurposes} orchestrates the ranking process in all purposes, invoking Algorithms~\ref{alg:transformation} and \ref{alg:ranking} for each purpose. Its complexity is thus $O(P \times (T \log T \times |M_{p_i}| + T \log T))$.
\item Algorithm~\ref{alg:ranking} has a complexity subsumed under the complexity of Algorithm~\ref{alg:RankAcrossPurposes}.
\item Algorithm~\ref{alg:DetailedAuditing} repeats the verification ranking process, thus doubling the computational load of Algorithm~\ref{alg:RankAcrossPurposes}. Therefore, the computational complexity of the auditing process is similar to that of the ranking process.
\end{itemize}
Considering that the auditing process essentially duplicates the ranking computations, the overall complexity of the framework is effectively doubled. Thus, the overall computational complexity of the framework is $O(2 \times P \times (T \log T \times |M_{p_i}| + T \log T))$.

Although blockchain operations add operational overhead, they are primarily network and I/O bound and do not change the overall order of computational complexity of the framework.
\end{proof}

\subsection{Mitigating the risks due to security threats}
The framework uses a consortium blockchain to ensure the immutable recording of all critical actions, including the establishment of specification criteria, the submission of evaluation metrics, and the auditing results. Each user role (product manager, data scientist, and auditor) is assigned unique identity credentials. Furthermore, the permissions for read and write operations are tailored according to each role's specific requirements in Sawtooth. Every write operation within the system is designed to be tamper resistant and traceable back to the specific role that initiated these actions. 

To ensure security and privacy while accommodating the distinct roles of Product Managers, Data Scientists, and Auditors, we employ the concept of transactor key permission as outlined in Sawtooth~\cite{Sawtooth_key}. This method governs the submission of transactions and batches based on the signing keys. Upon receiving a batch from a client, the validator only processes batches if their batch signers have the authority to submit transactions. In our system, specific roles are delineated: Only the Product Manager is authorized to submit transactions to register quality indicators, weights for these indicators, and desired and undesired properties. Data scientists have the exclusive right to register methods, while auditors are the sole entities allowed to submit transactions for auditing the registered data. In contrast, transactions related to computing ranks and displaying registered data are permissible by any user, as accepted by the validator. 

The perimission model, coupled with the tamper-resistant capabilities of Sawtooth, mitigates the potential for repudiation attacks. This makes it difficult for any involved party, such as the product manager, data scientist, blockchain developer, or auditor, to deny their recorded actions. Additionally, all actions are replicated in all nodes in the blockchain network, which significantly reduces the risk of repudiation attacks within the framework.

In the realm of a consortium blockchain, the control and management of nodes are distributed among a variety of trusted entities or departments. This decentralized structure impedes the efforts of any minority group that might attempt to replace legitimate nodes with compromised ones. As a result, the framework enhances its resilience against colluding attacks, especially when implementing PBFT consensus protocols. Additionally, configuring the network with the appropriate number of nodes ensures tolerance among the anticipated minority exhibiting arbitrary behaviors. 

Moreover, the consortium blockchain inherently offers an immutable audit trail. This feature is instrumental in ensuring that any data entered into the blockchain is permanently recorded and traceable back to its source. Such traceability is crucial in identifying the origins of data, enabling quick detection and resolution of any attempts at data poisoning. By incorporating these measures, the framework effectively reduces the risks associated with data tampering and ensures a secure and trustworthy evaluation environment.

\subsection{Scalability Analysis}
Our scalability analysis begins by acknowledging the foundational performance benchmarks identified in the relevant literature regarding the Sawtooth framework~\cite{sawtoothBenchmark}. These benchmarks highlight two critical performance thresholds: a linear increase in throughput with input transaction rates up to approximately 1000 transactions per second (tx/sec), and the impact of batch size on throughput, demonstrating a nearly linear relationship until about 2300 tx/sec. Beyond these points, performance degradation due to queue timeouts and batch processing failures becomes pronounced. This empirical understanding provides an essential backdrop for our analysis, especially considering the additional computational complexity introduced by the ranking algorithm, notably $O(2 \times P \times (T \log T \times |M_{p_i}| + T \log T))$. This complexity scales with the number of purposes ($P$), the number of generators ($T$), and the total number of metrics for each purpose ($|M_{p_i}|$).  Duplication of processes due to the auditing algorithm also doubles the computational load, potentially impacting throughput and latency.

Our experimental setup mirrors a conventional use case within a single organization. Here, a product manager issues a single transaction for each command, as detailed in Table~\ref{tab:commands}. This transaction, addressing multiple purposes, is then consolidated into a single batch. Similarly, when a data scientist issues a transaction corresponding to a client command, also referenced in Table~\ref{tab:commands}, it encapsulates several generators and metric values and is likewise organized into an individual batch. As such, the transaction volume per use case remains inherently constrained, illustrating the framework's operational efficiency and the streamlined processing capacity within a typical organizational context.

However, extending the proposed framework as a software-as-a-service (SaaS) solution to a multitude of companies would markedly increase the transaction volume, thus elevating the importance and urgency of a comprehensive scalability analysis. Such an analysis, while beyond the scope of our current investigation, becomes a critical area for future research.

\subsection{Social and Ethical Implications}
Several studies have been conducted to understand the impact of blockchain technology on social and ethical issues~\cite{rahimzadeh2018ethics, dierksmeier2020blockchain, hyrynsalmi2020blockchain, tang2020ethics,  haque2021gdpr, upadhyay2021blockchain, sharif2022ethics, agerskov2023ethical}, highlighting issues such as energy efficiency, the balance between transparency and privacy, and the incentives for organizations to adopt permissioned blockchains. Our proposed framework adopts the consensus protocol PBFT, offering an energy-efficient alternative to PoW mechanisms~\cite{hussein2023evolution}. Additionally, our methodology for ranking synthetic data generators does not inadvertently reveal any sensitive information from the personal data used in training these generators. 

A recent comprehensive literature study~\cite{VALLEVIK2024105413} on the evaluation of synthetic data generators revealed a notable lack of fairness and carbon footprint metrics. This finding highlights the imperative for a collective effort to integrate these metrics into evaluations, fostering a move towards more ethical and sustainable Artificial Intelligence development. This integration is pivotal not only for ethical advancement but also for ensuring compliance with forthcoming regulations such as the Artificial Intelligence Act~\cite{act2021proposal}, thus enhancing practical applicability. Given that our proposed framework meticulously records and audits all criteria for ranking synthetic data generators, it will transparently reflect the organization's prioritization of fairness and carbon footprint metrics in the ranking process.

Our framework significantly improves organizational compliance and auditability, while also improving transparency and trust between stakeholders in the use of synthetic data. By evaluating both desired and undesired properties and their relevance for specific purposes, our ranking algorithm ensures fairness, provided that the inputs (see Table~\ref{tab:dataTypes}) recorded on the blockchain remain unbiased and not subject to any security attacks. This approach is crucial in domains such as healthcare, finance and public services, where decisions based on synthetic data could have significant social impacts.

\subsection{Limitations of the framework}
The potential limitations of our framework are listed as follows:

Our framework is heavily dependent on the permissioned blockchain infrastructure. Hence, the framework's functionality is intrinsically linked to the stability and security of the permissioned blockchain infrastructure. Any vulnerability in this infrastructure could adversely affect the framework. The read and write latency of blockchain operations is influenced by the number of nodes within the blockchain and the network latency. Hence, the actual performance of the framework may vary depending on the blockchain's efficiency, particularly in configurations with numerous nodes. 

Moreover, one of the core criticisms of permissioned blockchains compared to public blockchain is that they reintroduce centralization into a technology that is fundamentally designed to be decentralized. By restricting who can participate in the network, permissioned blockchains concentrate power in the hands of a few, which can lead to concerns about the abuse of authority.

Furthermore, our experiments aimed at a conventional use case within a single organization. Accordingly, the transaction volume is inherently constrained, hence the throughput (number of transactions processed per second) limit is the same as the default limit offered by the Sawtooth. The proposed framework has no special scalability techniques implemented to improve the throughput of the Sawtooth network. 

Additionally, in our experiments, given the absence of real-world data for weight distributions and categorizations of desired and undesired properties, we needed to do sensitive analysis and create hypothetical scenarios/purposes to analyze the characteristics of the proposed ranking algorithm. Furthermore, even though the proposed framework can work for a broad range of synthetic data types, the metric values and the chosen generator type in our experiments are meant only for the synthetic tabular health data type. 

In scenarios requiring human validation (such as in healthcare for synthetic data quality assessment), our framework considers the outcome of this validation as another quantifiable metric. In situations where multiple human validations are necessary, each instance is treated as a distinct measurable metric. 
\section{Related Work}
\label{sec:relatedWork}

To assess and compare the performance of different data generators, the recommended practice is to use multiple quality measures to comprehensively address the multifaceted nature of data quality~\cite{jordon2018measuring, jordon2018pate, yoon2019time, goncalves2020generation, van2021decaf, alaa2022faithful, dankar2022multi, liu2022goggle, norcliffe2023survivalgan}.
However, manually assessing a wide range of quality measures is labor intensive and impractical. Consequently, in the literature, several evaluation frameworks have been developed to allow for the ranking and comparison of generators~\cite{arnold2020really,chundawat2022tabsyndex,yan2022multifaceted, hernadez2023synthetic,pathare2023comparison}.
 
Current research in the ranking of synthetic data generators has focused predominantly on identifying and preserving desired properties in the synthetic data generated. For example, \cite{chundawat2022tabsyndex} introduced a universal indexing metric that averages five metrics focusing on the utility aspects of the data. However, this metric overlooks contextual aspects and concentrates solely on a single quality indicator (data utility). ~\cite{pathare2023comparison} compared various synthetic data generators using only similarity metrics such as propensity score and log cluster metrics. 

Moreover, existing studies~\cite{arnold2020really,chundawat2022tabsyndex,yan2022multifaceted, hernadez2023synthetic,pathare2023comparison} have not explicitly accounted for the evaluation of undesirable properties while ranking synthetic data generators. This omission means there is no assurance of adhering to principles such as purpose limitation (ensuring data is used only for intended purposes), data minimization, and data protection by design and default in the synthetic data. Consequently, the final rankings could mislead decision makers in choosing a synthetic data generator that may not be comprehensive in its evaluation, potentially overlooking crucial aspects of data security and compliance. 

In situations where data must be aggregated from multiple stakeholders, ensuring accountability, auditability, and transparency is essential to guarantee the reliability and trustworthiness of the framework~\cite{veeraragavan2023securing}. Stakeholders, for example, must be responsible for submitting accurate and reliable data. Conducting audits during the aggregation process can help identify potential issues or inconsistencies in the final results. Ultimately, transparent reporting is crucial to ensure a clear understanding of the process and instill confidence in the reliability of the results. 

However, previous methodologies, as discussed in~\cite{arnold2020really,chundawat2022tabsyndex,yan2022multifaceted, hernadez2023synthetic,pathare2023comparison} exhibit limitations in terms of transparency, accountability, and auditability. To address these shortcomings, our proposed framework leverages blockchain technology and smart contracts, advancing the state-of-the-art in ranking synthetic data generators. Furthermore, the proposed framework also offers a strong protection against a threat model involving repudiation, colluding, and poisoning attacks.

\subsection{Application data management using blockchain}
\label{sec:data_management}

Blockchain technology has found extensive application across various domains, as evidenced by numerous survey papers. This section will highlight those applications without delving into each individual work.

The most notable application domain of blockchain is in cryptocurrencies, with surveys by~\cite{bonneau2015sok} and~\cite{tschorsch2016bitcoin} examining Bitcoin and other cryptocurrencies. Blockchain enhances digital financial assets with increased security, transparency, and traceability. Studies by~\cite{ali2018applications} and~\cite{lao2020survey} have focused on the IoT network domain, where blockchain facilitates business entities accessing and providing IoT data without centralized control, thus improving security, transaction management, and data verification. Research in the healthcare sector by~\cite{de2020survey} and~\cite{arbabi2022survey} shows how blockchain and smart contracts enable the sharing of secure health data, benefiting from decentralization, trustlessness, immutability, traceability and transparency of the blockchain.

The surveys by~\cite{xie2019survey} and~\cite{ullah2023blockchain} offer a comprehensive review of blockchain technology in smart cities, improving services through the unique attributes of the blockchain. Integration with cloud applications is explored in works by~\cite{yang2019integrated}~\cite{gai2020blockchain},~\cite{nguyen2020integration},and~\cite{sharma2020blockchain},  highlighting the synergy between blockchain and cloud-based systems to improve functionality, performance and security. Lastly,~\cite{eberhardt2017or} and~\cite{vo2018research} discuss the role of the blockchain in data management and analysis, emphasizing enhanced information protection and smart contract management.

Positioning our work within this broader context, our research work aims to advance the evaluation of synthetic data generators through a comprehensive and nuanced approach. Our methodology includes: a) context-sensitive assessments tailored to the nuances of specific scenarios, b) robust accountability mechanisms, c) thorough auditability processes, and d) a commitment to enhanced transparency.  Through this, we contribute not only to the synthetic data generation (Generative AI) field but also to the larger blockchain ecosystem, particularly in data management practices.

\section{Future Research Direction}\label{sec:futureWork}
To improve the applicability of our proposed framework and address future challenges, several directions are worth exploring. 
\begin{itemize}
\item \textbf{Adapting to a SaaS model:} Transitioning the framework to a SaaS model could significantly extend its reach across various organizations. This transition will require a focus on scalability and security, particularly in handling high transaction volumes from diverse groups of clients. 
 
\item \textbf{Domain-Specific Social and Ethical Impacts:} Investigating the social and ethical implications within specific domains is crucial for practical adoption. A detailed analysis of fairness and trustworthiness of the proposed framework in different application areas will help identify unique concerns and opportunities for improvements. 

\item \textbf{Permissioned Blockchain Framework Comparisons:} While the current implementation utilizes Sawtooth, future research should extend to other permissioned blockchain frameworks like Hyperledger Fabric~\cite{androulaki2018hyperledger} and Besu~\cite{soelman2021permissioned}. Comparing these frameworks in supporting the proposed ranking methodology can illuminate differences in performance, security, and suitability for various use cases.

\item \textbf{Interoperability Challenges:} Exploring interoperability among various permissioned blockchain frameworks can facilitate collaboration within a consortium of organizations in supporting the proposed ranking methodology, each utilizing different blockchain technologies. This could lead to standardized practices for cross-framework interaction. 

\item \textbf{Experimental Expansion:} Broadening experimental efforts to include a wider range of data and metric types, enhancing the robustness and applicability of the proposed ranking algorithm across different contexts and datasets such as images, videos, and audio.

\item \textbf{Economic models for Sustainability:} Developing economic models that integrate synthetic data generation and blockchain technology can promote sustainable practices within these fields. Researching economic incentives, market structures, and regulatory frameworks supportive of sustainable blockchain and synthetic data solutions can provide valuable insights for multiple industries. 

\item \textbf{Integration with existing Tools:} Integrating with tools like Synthcity~\cite{qian2023synthcity} and testing a wide range of generators across datasets will validate the framework's capabilities, highlighting its potential for widespread applicability and adaptability in the rapidly evolving domain of synthetic data generation.  

\item \textbf{Security and Privacy Analysis:} Exploring threat models, alongside implementing comprehensive security and privacy measures within the input generation process, is pivotal for enhancing the integrity and impartiality of inputs provided by product managers and data scientists in the ranking of synthetic data generators. 

\item \textbf{Standards and Benchmarks for Ranking Synthetic Data Generators:} The community dedicated to ranking synthetic data generators is poised to investigate standardized datasets and benchmarks. This will include defining weight distributions tailored for various scenarios (or purposes) and classifying metrics into desired and undesired attributes for each specific purpose. 
\end{itemize}

\section{Conclusion}
\label{sec:conlcusion}

In conclusion, our framework's distinct advantage lies in its comprehensive and adaptable methodology, which has a potential to consider wide array of quality indicators and metrics, both positive (desired properties) and negative (undesired properties) attributes across various scenarios. By integrating blockchain technology, we ensure robustness across security threats and offer transparency, accountability, and auditability. The experiments conducted validate the effectiveness of our framework in its ability to rank synthetic data generators in different context-specific requirements. This framework provides a tool that not only assists decision makers in selecting the most appropriate synthetic data generators, but also upholds the principles of compliance.

\vskip 0.2in

\printbibliography[title={References}]

\end{document}